\makeatletter \@addtoreset{equation}{section} \makeatother
\newtheorem{Theorem}{Theorem}[section]
\newtheorem{Definition}{Definition}[section]
\newtheorem{Corollary}{Corollary}[section]
\newtheorem{Example}{Example}[section]
\def\be{\begin{equation}}
\def\ee{\end{equation}}
\def\ba{\begin{eqnarray}}
\def\ea{\end{eqnarray}}
\newcommand\nn{\nonumber}
\newcommand\q{\quad}
\def\Nl{{\mathchoice
{\setbox0=\hbox{$\displaystyle\rm N$}\hbox{\hbox to0pt
{\kern0.4\wd0\vrule height0.9\ht0\hss}\box0}}
{\setbox0=\hbox{$\textstyle\rm N$}\hbox{\hbox to0pt
{\kern0.4\wd0\vrule height0.9\ht0\hss}\box0}}
{\setbox0=\hbox{$\scriptstyle\rm N$}\hbox{\hbox to0pt
{\kern0.4\wd0\vrule height0.9\ht0\hss}\box0}}
{\setbox0=\hbox{$\scriptscriptstyle\rm N$}\hbox{\hbox to0pt
{\kern0.4\wd0\vrule height0.9\ht0\hss}\box0}}}}
\def\Zl{{\mathchoice
{\setbox0=\hbox{$\displaystyle\rm Z$}\hbox{\hbox to0pt
{\kern0.4\wd0\vrule height0.9\ht0\hss}\box0}}
{\setbox0=\hbox{$\textstyle\rm Z$}\hbox{\hbox to0pt
{\kern0.4\wd0\vrule height0.9\ht0\hss}\box0}}
{\setbox0=\hbox{$\scriptstyle\rm Z$}\hbox{\hbox to0pt
{\kern0.4\wd0\vrule height0.9\ht0\hss}\box0}}
{\setbox0=\hbox{$\scriptscriptstyle\rm Z$}\hbox{\hbox to0pt
{\kern0.4\wd0\vrule height0.9\ht0\hss}\box0}}}}
\def\Ql{{\mathchoice
{\setbox0=\hbox{$\displaystyle\rm Q$}\hbox{\hbox to0pt
{\kern0.4\wd0\vrule height0.9\ht0\hss}\box0}}
{\setbox0=\hbox{$\textstyle\rm Q$}\hbox{\hbox to0pt
{\kern0.4\wd0\vrule height0.9\ht0\hss}\box0}}
{\setbox0=\hbox{$\scriptstyle\rm Q$}\hbox{\hbox to0pt
{\kern0.4\wd0\vrule height0.9\ht0\hss}\box0}}
{\setbox0=\hbox{$\scriptscriptstyle\rm Q$}\hbox{\hbox to0pt
{\kern0.4\wd0\vrule height0.9\ht0\hss}\box0}}}}
\def\Rl{{\mathchoice
{\setbox0=\hbox{$\displaystyle\rm R$}\hbox{\hbox to0pt
{\kern0.4\wd0\vrule height0.9\ht0\hss}\box0}}
{\setbox0=\hbox{$\textstyle\rm R$}\hbox{\hbox to0pt
{\kern0.4\wd0\vrule height0.9\ht0\hss}\box0}}
{\setbox0=\hbox{$\scriptstyle\rm R$}\hbox{\hbox to0pt
{\kern0.4\wd0\vrule height0.9\ht0\hss}\box0}}
{\setbox0=\hbox{$\scriptscriptstyle\rm R$}\hbox{\hbox to0pt
{\kern0.4\wd0\vrule height0.9\ht0\hss}\box0}}}}
\def\Cl{{\mathchoice
{\setbox0=\hbox{$\displaystyle\rm C$}\hbox{\hbox to0pt
{\kern0.4\wd0\vrule height0.9\ht0\hss}\box0}}
{\setbox0=\hbox{$\textstyle\rm C$}\hbox{\hbox to0pt
{\kern0.4\wd0\vrule height0.9\ht0\hss}\box0}}
{\setbox0=\hbox{$\scriptstyle\rm C$}\hbox{\hbox to0pt
{\kern0.4\wd0\vrule height0.9\ht0\hss}\box0}}
{\setbox0=\hbox{$\scriptscriptstyle\rm C$}\hbox{\hbox to0pt
{\kern0.4\wd0\vrule height0.9\ht0\hss}\box0}}}}
\def\Hl{{\mathchoice
{\setbox0=\hbox{$\displaystyle\rm H$}\hbox{\hbox to0pt
{\kern0.4\wd0\vrule height0.9\ht0\hss}\box0}}
{\setbox0=\hbox{$\textstyle\rm H$}\hbox{\hbox to0pt
{\kern0.4\wd0\vrule height0.9\ht0\hss}\box0}}
{\setbox0=\hbox{$\scriptstyle\rm H$}\hbox{\hbox to0pt
{\kern0.4\wd0\vrule height0.9\ht0\hss}\box0}}
{\setbox0=\hbox{$\scriptscriptstyle\rm H$}\hbox{\hbox to0pt
{\kern0.4\wd0\vrule height0.9\ht0\hss}\box0}}}}
\def\Ol{{\mathchoice
{\setbox0=\hbox{$\displaystyle\rm O$}\hbox{\hbox to0pt
{\kern0.4\wd0\vrule height0.9\ht0\hss}\box0}}
{\setbox0=\hbox{$\textstyle\rm O$}\hbox{\hbox to0pt
{\kern0.4\wd0\vrule height0.9\ht0\hss}\box0}}
{\setbox0=\hbox{$\scriptstyle\rm O$}\hbox{\hbox to0pt
{\kern0.4\wd0\vrule height0.9\ht0\hss}\box0}}
{\setbox0=\hbox{$\scriptscriptstyle\rm O$}\hbox{\hbox to0pt
{\kern0.4\wd0\vrule height0.9\ht0\hss}\box0}}}}
\newcommand{\cc}{\mathcal C}
\newcommand{\cf}{\mathcal F}
\newcommand{\ch}{\mathcal H}
\newcommand{\ck}{\mathcal K}
\newcommand{\cl}{\mathcal L}
\newcommand{\cp}{\mathcal P}
\newcommand{\cq}{\mathcal Q}
\newcommand{\ct}{\mathcal T}
\newcommand{\fh}{\mathfrak{h}}
\def\nn{\nonumber}
\newcommand{\eqa}{\begin{eqnarray}}
\newcommand{\neqa}{\end{eqnarray}}
\def\w{\wedge}
\newcommand{\1}{$\{10j\}$}
\newcommand{\p}{\partial}
\def\f{\frac}
\def\q{{\quad}}
\definecolor{bianca}{rgb}{0,0.,0.8}
\definecolor{bianca2}{rgb}{0,0.5,0.8}
\definecolor{phil}{rgb}{0.9,0.,0}
\begin{document}

{\renewcommand{\thefootnote}{\fnsymbol{footnote}}

\title{Constraint analysis for variational discrete systems}
\author{Bianca Dittrich\footnote{e-mail address: {\tt bdittrich@perimeterinstitute.ca}}  $^{1}$ and Philipp A H\"ohn\footnote{e-mail address: {\tt phoehn@perimeterinstitute.ca}}  $^{1,2}$\\
 \small $^1$ Perimeter Institute for Theoretical Physics,\\
 \small 31 Caroline Street North, Waterloo, Ontario, Canada N2L 2Y5\\
 \small   $^2$ Institute for Theoretical Physics, Universiteit Utrecht,\\
 \small  Leuvenlaan 4, NL-3584 CE Utrecht, The Netherlands 
}

\date{\small PI--QG--323, ITP--UU--13/05, SPIN--13/03}

}

\setcounter{footnote}{0}
\maketitle

\vspace{-.9cm}

\begin{abstract}
A canonical formalism and constraint analysis for discrete systems subject to a variational action principle are devised. The formalism is equivalent to the covariant formulation, encompasses global and local discrete time evolution moves and naturally incorporates both constant and evolving phase spaces, the latter of which is necessary for a time varying discretization. The different roles of constraints in the discrete and the conditions under which they are first or second class and/or symmetry generators are clarified. 
The (non--) preservation of constraints and the symplectic structure is discussed; on evolving phase spaces the number of constraints at a fixed time step depends on the initial and final time step of evolution. 
Moreover, the definition of observables and a reduced phase space is provided; again, on evolving phase spaces the notion of an observable as a propagating degree of freedom requires specification of an initial and final step and crucially depends on this choice, in contrast to the continuum. 
However, upon restriction to translation invariant systems, one regains the usual time step independence of canonical concepts. This analysis applies, e.g., to discrete mechanics, lattice field theory, quantum gravity models and numerical analysis.
\end{abstract}

\section{Introduction}

Discretizations of dynamical systems assume a pivotal role in both numerical simulations of physical models  and the non--perturbative quantization of field theories. Quantum gravity is a prominent example of the latter in which discrete structures appear in many different ways. 

Usually, it is space--time itself which is discretized, for instance, by a triangulation. In gravity space--time is a dynamical entity, such that one may expect some aspects of this discretization also to be dynamical, i.e.\ to change in time. For instance, different hypersurfaces in a generic space--time triangulation in Regge Calculus \cite{Regge:1961px}---the most well--known discretization of General Relativity---are comprised of different numbers of simplices \cite{Dittrich:2011ke}. In particular, one may wish to model an expanding/contracting universe with a time dependent discretization, in which the number of degrees of freedom grows/shrinks with the size of the spatial geometry \cite{Jacobson:1999zk,Foster:2004yc}. In the quantum theory, this leads to the notion of `evolving Hilbert spaces' \cite{Doldan:1994yg}. In the numerical analysis of physical systems, on the other hand, it is often convenient or (for computational savings) even necessary to adapt either to a more refined or coarser discretization during the evolution of the system. This, in particular, has culminated in the standard tool of `adaptive mesh refinement' \cite{adapmeshref}.

In such situations one has to cope with discrete systems whose numbers of degrees of freedom change in time. In order to systematically treat such scenarios, first of all at the classical level, the present work shall develop a general canonical formalism applicable to arbitrary discrete systems governed by a variational action principle. In particular, this formalism applies to discrete mechanics, lattice field theories and certain discrete gravity models. Its general features are the following:
\begin{itemize}
\item It can handle both constant and evolving phase spaces, the latter by natural phase space extensions. 
\item It is equivalent to the covariant formalism. By using the action (or Hamilton's principal function) as a generating function, the canonical formalism is directly derived from the covariant formulation. Furthermore, ensuring that all constraints (on the extended phase spaces) are always satisfied is tantamount to solving the (covariant) equations of motion.
\item It is insensitive to the particular discretization and form of the (effective) action. Hence, it is amenable to coarse graining methods (which are relevant for the continuum limit).
\end{itemize}

We shall see that systems with changing phase space dimension necessarily lead to constraints. It is essential to distinguish the latter type of constraints from symmetry generators, in particular, in theories of gravity. 
In continuum (Einstein) gravity, the diffeomorphism symmetry leads to the Hamiltonian and spatial diffeomorphism constraints, which also generate (infinitesimal) time evolution. The situation in the discrete is, however, very different because diffeomorphism symmetry is generically broken \cite{Dittrich:2008pw,Bahr:2009ku,Dittrich:2012qb,Bahr:2011xs} such that Hamiltonian and diffeomorphism constraints (generically) do {\it not} arise. Rather, equivalence of the canonical and covariant picture requires canonical time evolution to proceed in discrete steps \cite{Dittrich:2011ke}. Nonetheless, a changing phase space dimension will always lead to constraints, which only in special situations include or even coincide with the Hamiltonian and diffeomorphism constraints.

A proper understanding of the role of the constraints is thus important for the interpretation of both the classical and quantum theories based on such discretizations \cite{Jacobson:1999zk,Foster:2004yc,Unruh:1993js}. In this article, we shall provide the necessary constraint analysis for variational discrete systems which in some aspects is analogous to and in others quite dissimilar from the continuum Dirac procedure \cite{Dirac,Henneaux:1992ig}. In the continuum, the roles of the constraints pertain, in particular, to the identification of gauge symmetries and propagating degrees of freedom, as well as observables. On evolving phase spaces, on the other hand, one has to reconsider the very concepts of propagating degree of freedom and observable; as we shall see, their definition, in contrast to the continuum, cannot be associated to one particular instant of time only. 

The constraint analysis for variational discrete systems has the following main results:
\begin{itemize}
\item The (non--) preservation of constraints and the symplectic structure on evolving (and constant) phase spaces is clarified. In general, the number of constraints at a fixed time step critically depends on the initial and final steps of evolution because equations of motion can lead to a `propagation' of constraints.
\item The conditions under which the canonical constraints are symmetry generators and first or second class are investigated. 
\item The concept of observables as gauge invariant and propagating degrees of freedom requires {\it two} time steps on an evolving phase space and crucially depends on these initial and final steps under consideration. For two different pairs of steps one, in general, finds a different number of propagating degrees of freedom.
\item The reduced phase space at some step $n$ in an evolution $i\rightarrow n\rightarrow f$ depends on initial and final steps $i$ and $f$. 
The classification of degrees of freedom is, in general, step dependent.
\end{itemize}
The step dependence of many canonical concepts notwithstanding,\footnote{Upon restriction to translation invariant systems this step dependence disappears.} the new formalism is fully consistent, solves the covariant equations of motion and unambiguously describes the propagation of data between different time steps. It is therefore applicable to evolving lattices and offers a comprehensive picture of the discrete dynamics. 

For example, as shall be explained in the course of this article, systems which have been evolved from a phase space with zero degrees of freedom (i.e.\ from `nothing' as in the `no boundary proposal' \cite{Hartle:1983ai}) are always totally constrained, in the sense that the dimension of the reduced phase space on the evolving `time slice' is zero. On account of the step dependence of observables, however, this does not mean that these systems are necessarily devoid of propagating degrees of freedom. 

These results may prove useful for numerical implementations in discrete mechanics and lattice field theory and shed light on the interpretation of quantum transition amplitudes, e.g., in loop quantum gravity \cite{Perez:2001gja,Perez:2012wv,Speziale:2008uw}. Moreover, they further the interpretation of the newly developed canonical formulation of Regge Calculus \cite{Dittrich:2011ke,Dittrich:2009fb} (for a brief summary of this formalism, see also \cite{Hoehn:2011cm}). In particular, both loop quantum gravity and Regge Calculus also involve discretizations (graphs), encoding different numbers of degrees of freedom for the initial and final state. In fact, the present formalism shares certain features with the `general boundary formulation' \cite{Oeckl:2003vu,Oeckl:2005bv}, where transition amplitudes are defined between very different types of surfaces (including `zero--size' surfaces) and the traditional splitting of the space--time manifold into ${\cal M} =\Sigma \times \Rl$ is not assumed.
\\

The rest of this manuscript is organised as follows: in section \ref{sec_Hfct}, we shall briefly introduce variational discrete systems and, subsequently, discuss the global dynamics of such systems in the regular and irregular case in section \ref{sec_global} (for a detailed exposition of the regular case see \cite{marsdenwest}). The irregular setting encompasses systems with evolving numbers of degrees of freedom. In particular, section \ref{sec_global} introduces the Legendre transforms and the canonical formalism for discrete systems. We shall provide sufficiently many details to make the present article relatively self--contained.


Section \ref{sec_local} concerns local evolution moves in which only a (small) subset of an equal time hypersurface is evolved. A typical example are the Pachner moves which constitute the most elementary evolution steps for a triangulated equal time hypersurface. For Regge Calculus, these have earlier been introduced as canonical evolution moves in \cite{Dittrich:2011ke}, below we shall generalize this concept to general variational discrete systems. Section \ref{sec_conrole} analyzes the different roles constraints can assume in the discrete and specifies in which cases they constitute gauge symmetry generators. Moreover, we generalize the concept of observables and propagating degrees of freedom to discrete systems with varying numbers of degrees of freedom. We shall identify such observables and provide a counting of the number of propagating degrees of freedom. Finally, this article will close with a summary and discussion in section \ref{conclusions}.

Along the way, we shall illustrate general features by means of a scalar field living on a 2D triangulation. Some technical details have been moved to the appendices.

\section{Preliminaries: variational discrete systems}\label{sec_Hfct}

We shall consider general variational discrete 
systems whose evolution proceeds in discrete time steps which we label by $n\in\mathbb{Z}$. 

Let $\cq_n$ be the configuration space of the system at time step $n$. ${\cal Q}_n$ shall be coordinatized by $x_n^i$, where $i$ takes value in some index set determined by the dimension of ${\cal Q}_n$. (We shall often omit the index $i$ for notational convenience.) We expressly emphasize that the $\cq_n$ at the various steps $n$ need not be of the same dimension. The dynamics of such systems shall be described by discrete actions of the form
\ba\label{genact}
S_K=\sum^K_{n=1}S_n(x_{n-1},x_{n})\,,
\ea
where the sum ranges over the individual time steps $n$. The individual action contribution $S_n$ governs the {\it discrete time evolution move} $(n-1)\rightarrow n$, i.e.\ the discrete time evolution from time step $n-1$ to time step $n$. These systems are 
\begin{itemize}
\item[(i)] variational because the configuration spaces $\cq_n$ are to be continuous manifolds such that the dynamics is gained from a variational action principle,
\item[(ii)] discrete because the time evolution proceeds in discrete steps labeled by $n$, and, finally,
\item[(iii)] mechanical because $\cq_n$ shall be finite dimensional. 
\end{itemize}

The main assumption we shall henceforth make is that of {\it additivity} of the discrete action, in order for the sum (\ref{genact}) to make sense. That is, the action contribution of the union of any two evolution moves should simply be the sum of the individual action contributions of the two moves. In particular, if the two moves $(n-1)\rightarrow n$ and $n\rightarrow (n+1)$ share a `common boundary' described by the variables $x_{n}$, we can solve the equations of motion
\ba
\f{\p S_n(x_{n-1},x_n)}{\p x_n}+\f{\p S_{n+1}(x_n,x_{n+1})}{\p x_n}=0
\ea 
for $x_n=\chi_n(x_{n-1},x_{n+1})$ in order to obtain a new `effective' action contribution
\ba
\tilde{S}_{(n-1)\rightarrow(n+1)}(x_{n-1},x_{n+1}):=S_n\left(x_{n-1},\chi_n(x_{n-1},x_{n+1})\right)+S_{n+1}\left(\chi_n(x_{n-1},x_{n+1}),x_{n+1}\right)\,,
\ea
governing the `effective evolution move' $(n-1)\rightarrow(n+1)$. The action evaluated on solutions, given suitable boundary data, is called Hamilton's principal function \cite{marsdenwest,marsdenratiu,Dittrich:2011ke,Gambini:2002wn,DiBartolo:2004cg,Rovelli:2011mf} and is a function of this boundary data. In our case Hamilton's principal function $\tilde{S}_{(n-1)\rightarrow(n+1)}$ is a function of $x_{n-1}$ and $x_{n+1}$ coordinatizing the initial and final boundary, respectively, and defines a new contribution in the action sum (\ref{genact}) that summarizes two elementary ones. That is, for full generality we allow for the two possibilities that a given individual contribution $S_n$ in (\ref{genact}) may either be an elementary (`bare') action or Hamilton's principal function that summarized several elementary evolution moves into a single effective one.


Such variational discrete systems encompass a large variety of discretized physical systems, ranging from discrete mechanics \cite{marsdenwest}, over lattice field theory \cite{smitbook} to discrete gravity. In particular, Regge Calculus falls into this class of discrete mechanical systems \cite{Dittrich:2011ke,Dittrich:2009fb}: the lengths of the edges of a Regge triangulation can take continuous values, in a finite triangulation there are finitely many edges and the discrete time evolution proceeds by discrete evolution moves (e.g.\ Pachner \cite{Dittrich:2011ke} or tent moves \cite{Dittrich:2009fb}). Furthermore, due to boundary terms, the Regge action {is} additive in the above sense \cite{Hartle:1981cf,Dittrich:2011ke}.\footnote{The Regge action is given by $S_{\rm Regge}=\sum_{h\subset\ct\setminus\p\ct}V_h\varepsilon_h+\sum_{h\subset\partial\ct}V_h\psi_h$, where $h$ denotes the hinges, i.e.\ $(D-2)$--dimensional subsimplices in the $D$--dimensional triangulation $\ct$, $V_h$ denotes the volume of $h$, $\varepsilon_h$ denotes the deficit angle around hinge $h$ in the bulk $\ct\setminus\p\ct$ of the triangulation and $\psi_h$ denotes the extrinsic curvature angle around hinge $h$ in the boundary $\p\ct$ of the triangulation. When two pieces of triangulation $\ct_1$ and $\ct_2$ are glued together, (parts of) their boundaries and, in particular, the hinges in the boundaries are identified. The two extrinsic curvature angles around a fixed hinge $h$ in both boundaries of $\ct_1$ and $\ct_2$ add up to a deficit angle, $\psi_{h\subset\p\ct_1}+\psi_{h\subset\p\ct_2}=\varepsilon_{h\subset(\ct_1\cup\ct_2)\setminus\p(\ct_1\cup\ct_2)}$, in the bulk of $\ct_1\cup\ct_2$. The Regge action is therefore additive. On the other hand, a discrete action where $\varepsilon_h$ is replaced by $\sin\varepsilon_h$ constitutes an example of a non-additive action.}

Most research on such systems has been performed in the Lagrangian setting. It is the goal of the present work to develop a canonical formalism for such general variational discrete systems which is equivalent to the Lagrangian formulation. This is achieved by using the action (or Hamilton's principal function) $S_n=S_n(x_{n-1},x_n)$, which depends on `old' variables $x_{n-1}$ and `new' variables $x_{n}$, as a generating function of the first kind for canonical time evolution \cite{marsdenwest,marsdenratiu,Gambini:2002wn,DiBartolo:2004cg,Dittrich:2011ke,Bahr:2009ku,Dittrich:2009fb}. The new canonical formalism will be applicable to general systems with evolving phase spaces and, in particular, provide a constraint analysis for variational discrete systems in analogy to Dirac's original continuum procedure \cite{Dirac,Henneaux:1992ig}. Finally, given that the formalism naturally handles Hamilton's principal functions, different  
choices for the underlying discrete lattice can be related to each other 
 and the formalism is naturally applicable to coarse graining methods (which, e.g., are relevant for the continuum limit), see for instance \cite{Bahr:2009qc,Bahr:2010cq,Dittrich:2012jq}.

%
\begin{Example}
\emph{An example which we will use throughout this paper to illustrate the general principles will be the 2D discretized massless scalar field on a triangulation with Euclidean geometry. To simplify the action we shall consider an equilateral triangulation. 
The scalar field $\phi$ is associated to the vertices of the lattice and $\phi^v$ shall denote the field at the vertex $v$.}

\emph{The action associated to one equilateral triangle is given by \cite{Sorkin:1975jz}
\ba\label{saction}
S_{\Delta}&=& \frac{1}{2} \left(\sum_{v\subset \Delta} (\phi^v)^2 - \sum_{e\subset \Delta} \phi^{s(e)}\phi^{t(e)} \right) \,=\,\frac{1}{4} \sum_{e\subset \Delta} (\phi^{s(e)}-\phi^{t(e)})^2,
\ea
where $v,e$ denote the vertices and edges of the triangle $\Delta$ and $s(e),t(e)$ are the source and target vertex of the edge $e$. The action associated to any larger triangulation is given by the sum of the actions associated to the triangles. This automatically includes the right boundary terms, if the fields on the boundary are held fixed.}
\end{Example}

\section{Global dynamics of variational discrete systems}\label{sec_global}

In this section we shall only consider {\it global} time evolution moves. Global moves are such that each of the variables at a given discrete time step is involved in the move and only occurs at this one time step, i.e.\ neighbouring time steps $n,n+1$ do not overlap, except for variables in a possible boundary of a fixed time step. For example, evolution moves in simplicial gravity which evolve between disjoint spatial hypersurfaces are global \cite{Dittrich:2011ke}. {\it Local} evolution moves which only evolve subsets of data of a given time step will be discussed in the subsequent section \ref{sec_local}.

\subsection{General properties and definitions}\label{sec_gen}

Consider three consecutive steps $n-1,n,n+1$ and the boundary value problem defined by the data at times $n-1$ and $n+1$. That is, we are given boundary data $x_{n-1}^i$ and $x_{n+1}^i$ and ought to extremize
\ba\label{app1}
S:=S_n(x_{n-1},x_n)+S_{n+1}(x_n,x_{n+1})
\ea
with respect to $x_n$. This yields the equations of motion
\ba\label{app2}
0&=&\frac{\partial S_n}{\p x_n}+\frac{\p S_{n+1}}{\p x_n}\,,
\ea
which may or may not be uniquely solvable for $x_n$ as a function of $x_{n-1},x_{n+1}$, depending on whether the system under consideration is regular or irregular (see sections \ref{sec_reg} and \ref{sec_irreg} below). An initial value problem can be treated in complete analogy by computing $x_{n+1}$ from $x_{n-1},x_n$ via (\ref{app2}).

\subsubsection{Lagrangian formulation}\label{sec_lagr}

The action contribution $S_n$ defines a mapping $S_n:\cq_{n-1}\times\cq_n\rightarrow\mathbb{R}$. On the other hand, in continuum mechanics, the Lagrangian $L(q,\dot{q})$ defines a mapping $L:T\cq\rightarrow\mathbb{R}$. That is, in the discrete the direct product of configuration manifolds $\cq_{n-1}\times\cq_n$---coordinatized by $x_{n-1},x_n$---assumes the role of the tangent bundle $T\cq$---coordinatized by $q,\dot{q}$---of the Lagrangian formulation of the continuum. In fact, if $\cq_{n-1}\cong\cq_n\cong\cq$, $\cq\times\cq$ is locally isomorphic to $T\cq$.

The variation of the discrete action (\ref{app1}) enables one to define the so--called Lagrangian one-- and two--forms \cite{marsdenwest}. In contrast to the continuum where only one Lagrangian one--form $\theta$ exists on $T\cq$, in the discrete {\it two} Lagrange one--forms on ${\cal Q}_n\times {\cal Q}_{n+1}$ arise from the boundary terms of the variation of the action.
Namely, varying $S_{n+1}$ as $\delta S_{n+1}=dS_{n+1}\cdot\delta_{q_{n(n+1)}}$ with $q_{n(n+1)}\in\cq_n\times\cq_{n+1}$ and some variation $\delta_{q_{n(n+1)}}\in T_{q_{n(n+1)}}\left(\cq_n\times\cq_{n+1}\right)$ yields 
\ba
dS_{n+1}=\theta_{n+1}^+-\theta_n^-\,,
\ea
where (summing over repeated indices $i,j$ is understood)
\ba\label{app5}
\theta^-_n(x_n,x_{n+1}) &=& -\frac{\p S_{n+1}}{\p x_n^i} dx_n^i \nn\\
\theta^+_{n+1}(x_n,x_{n+1}) &=& \frac{\p S_{n+1}}{\p x_{n+1}^i} dx_{n+1}^i\,.
\ea
However, $d\circ d S_{n+1}=0$, such that a single Lagrange two--form on ${\cal Q}_n\times {\cal Q}_{n+1}$ can be defined:
\ba\label{app6}
\Omega_{n+1}(x_n,x_{n+1})\,=\,-d\theta^+_{n+1}\,=\,-d\theta^-_n&=& -\frac{\p^{2} S_{n+1}}{\p x_n^i \p x_{n+1}^j} \, dx_n^i \, \wedge dx_{n+1}^j  \, .
\ea

The equations of motion (\ref{app2}) can be used to define a Lagrangian time evolution map $\cl_n:{\cal Q}_{n-1}\times {\cal Q}_n\rightarrow{\cal Q}_n\times {\cal Q}_{n+1}$
\ba\label{app4}
{\cal L}_n: (x_{n-1},x_n) \mapsto (x_n,x_{n+1}),
\ea
by solving for $x_{n+1}$ given $(x_{n-1},x_n)$. $\cl_n$ may neither be defined on all of ${\cal Q}_{n-1}\times {\cal Q}_n$, nor map to all of ${\cal Q}_n\times {\cal Q}_{n+1}$, nor be unique in the presence of constraints. This shall be a subject of the following subsections \ref{sec_reg} and \ref{sec_irreg}.


\subsubsection{Discrete Legendre transformations and canonical formulation}\label{sec_globcan}

In order to discuss the dynamics in a canonical language, we need to introduce {\it discrete Legendre transformations} which will carry us to suitable phase spaces. Recall that in the continuum formulation a single Legendre transformation $\mathbb{F}L:T\cq\rightarrow T^*\cq$ exists, which in coordinates reads $(q,\dot{q})\mapsto (q,\frac{\p L}{\p\dot{q}})$. On the other hand, two Legendre transformations can be defined in the discrete because we work with a cartesian product of two configuration manifolds $\cq_n\times\cq_{n+1}$, instead of $T\cq$. The precise definition of these discrete Legendre transformations which we will employ is motivated by, however, suitably differs from the continuum version (for details on the continuum Legendre transform see \cite{marsdenratiu}). 


Consider an arbitrary step $n$. Recall that $S_n:\cq_{n-1}\times\cq_n\rightarrow\mathbb{R}$ where $\cq_{n-1}\times\cq_n$ is a fibre bundle. Pick a point $q_{n-1}\in\cq_{n-1}$ (see figure \ref{fibder}). We denote the fibre over fixed $q_{n-1}$ by $\cf_{n}(q_{n-1}):=(\cq_{n-1}\times\cq_n)_{q_{n-1}}$. Notice that $\cf_{n}\cong\cq_n$. Choose a point $f_n\in\cf_{n}$ and a curve $\gamma(\varepsilon)$ in $\cf_{n}$ with curve parameter $\varepsilon$ such that $\gamma(0)=f_{n}$ and $\gamma'(0)=\frac{d}{d\varepsilon}\gamma(\varepsilon)\left|_{\varepsilon=0}\right.$. This allows us to provide the following

\begin{Definition}\label{def_postleg}
The discrete fibre derivative $\mathbb{F}^+S_n:\cq_{n-1}\times\cq_n\rightarrow T^*\cq_n$, defined by
\ba\label{postleg}
\mathbb{F}^+S_n(f_{n})\cdot\gamma'(0):=\frac{d}{d\varepsilon}S_n\left(\gamma(\varepsilon)\right)\Big|_{\varepsilon=0},
\ea
is called the \emph{post--Legendre transform}.
\end{Definition}

$\mathbb{F}^+S_n(f_{n})\cdot\gamma'(0)$ is the derivative of $S_n$ along the fibre $\cf_{n}$ at $f_{n}$ in the direction $\gamma'(0)$. Given that $\cf_{n}\cong\cq_n$, we have $\gamma'(0)\in T_{f_{n}}\cq_n$ and, thus, $\mathbb{F}^+S_n(f_{n})\in T^*_{f_{n}}\cq_n$.
\begin{figure}[hbt!]
\begin{center}
\psfrag{q}{$q_{n-1}$}
\psfrag{Q}{ $\cq_{n-1}$}
\psfrag{QQ}{\Large$\cq_{n-1}\times\cq_n$}
\psfrag{F}{ $\cf_n(q_{n-1})\cong\cq_n$}
\psfrag{f}{$f_n=\gamma(0)$}
\psfrag{g}{ $\gamma$}
\psfrag{g0}{$\gamma'(0)$}
\includegraphics[scale=.7]{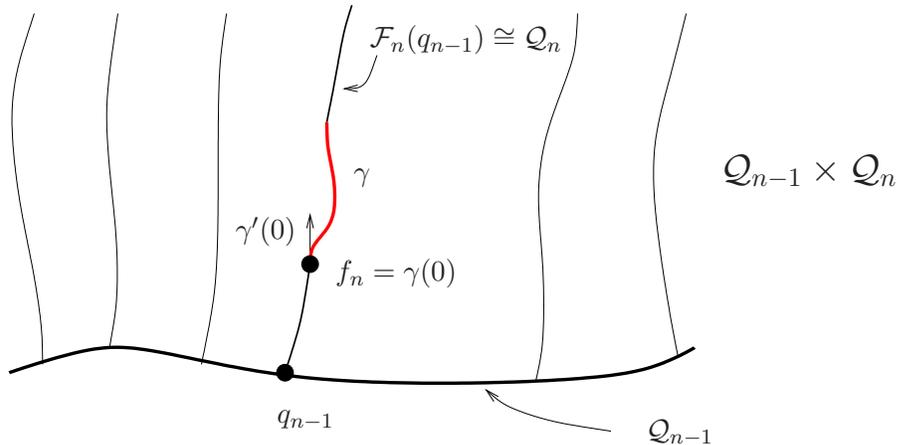}
\caption{\small Illustration of the fibre derivative used in the definition of the {\it post--Legendre transform}.}\label{fibder}
\end{center}
\end{figure}

Now exchange the roles of $\cq_{n-1}$ and $\cq_n$ and choose $f_{n-1}\in\cf_{n-1}(q_n)$. Let $\eta(\nu)$ be a curve in $\cf_{n-1}$ such that $\eta(0)=f_{n-1}$. In complete analogy, we give
\begin{Definition}\label{def_preleg}
The discrete fibre derivative $\mathbb{F}^-S_n:\cq_{n-1}\times\cq_n\rightarrow T^*\cq_{n-1}$, defined by
\ba
\mathbb{F}^-S_n(f_{n-1})\cdot\eta'(0):=-\frac{d}{d\nu}S_n\left(\eta(\nu)\right)\Big|_{\nu=0},\nn
\ea
is called the \emph{pre--Legendre transform}.
\end{Definition}

The cotangent bundles $\cp_{n-1}:=T^*\cq_{n-1}$ and $\cp_n:=T^*\cq_n$ are the phase spaces which we will henceforth work with.

For the remainder of this article, we will need the coordinate form of the pre-- and post--Legendre transform. Consider the post--Legendre transforms. Let $(x_{n-1},x_n)$ be the coordinates of $f_n$ in $\cq_{n-1}\times\cq_n$ and let $\delta x_n$ be the coordinate expression of $\gamma'(0)$. In a small neighbourhood of $f_n$ we can write $\gamma(\epsilon)$ in coordinates as $(x_{n-1},x_n+\epsilon\delta x_n)$. 
Inserting this in $S_n(x_{n-1},x_n)$, yields (\ref{postleg}) in the form
\ba
\frac{d}{d\varepsilon}S_n(x_{n-1},x_n+\varepsilon\delta x_n)\Big|_{\varepsilon=0}=\frac{\p S_n}{\p x_n}\delta x_n.
\ea
Hence, we obtain as in \cite{marsdenwest} (note that $x_n$ are the coordinates of $f_n$ in $\cq_n$),
\ba
\mathbb{F}^+S_n(x_{n-1},x_n)=\left(x_n,\frac{\p S_n}{\p x_n}\right).
\ea
The coordinate expression for the pre--Legendre transform is derived in complete analogy. In general, we write
\ba
&&\mathbb{F}^+S_n:  (x_{n-1},x_n) \mapsto (x_n,{}^{+}p^n)\,=\,\left(x_n,  \frac{\partial S_n}{ \partial x_{n}}\right) \label{app9}\\
&&\mathbb{F}^-S_n:  (x_{n-1},x_n) \mapsto (x_{n-1},{}^{-}p^{n-1})\,=\,\left(x_{n-1},  -\frac{\partial S_n}{ \partial x_{n-1}}\right).\label{app9b}
\ea
We shall refer to (\ref{app9}) as the {\it post--Legendre transformation} and to (\ref{app9b}) as the {\it pre--Legendre transformation}. 

The reason for the minus sign in definition \ref{def_preleg} (and thus of the definition of ${}^-p^{n-1}$ in (\ref{app9b})) is the following: using the coordinate forms (\ref{app9}, \ref{app9b}), it is straightforward to check that---in analogy to the continuum---the Lagrangian one-- and two--forms, (\ref{app5}) and (\ref{app6}), arise from pulling back the canonical one-- and two--forms, $\theta_n=p^n_idx^i_n$ and $ \omega_n=dx^i_n\wedge dp^n_i$, respectively, with the Legendre transformation. That is,\footnote{Clearly, the definition is coordinate independent and so, if a new coordinate system $x'_n(x_n)$ on $\cq_n$ is chosen, one finds $dx'_n\wedge dp'^n=dx_n\wedge dp^n.$}
\ba\label{legpullback}
\theta^+_{n}&=&(\mathbb{F}^+ S_n)^*\theta_n,\q\q\q\q\Omega_n=(\mathbb{F}^+S_n)^*\omega_n,\nn\\
\theta^-_{n}&=&(\mathbb{F}^- S_{n+1})^*\theta_n,\q\q\Omega_{n+1}=(\mathbb{F}^-S_{n+1})^*\omega_n.
\ea


In order to define the Legendre transforms, we employed the discrete action, or Hamilton's principal function. By noting that Hamilton's principal function is a generating function of the first kind (i.e.\ depends on the old and new configuration coordinates), one can define the canonical discrete time evolution via the equations
\ba\label{b3}
{}^{-}p^{n-1}:= -\frac{\partial S_{n}(x_{n-1},x_{n})}{ \partial x_{n-1}}    ,\q \q\q  {}^{+}p^{n}  := \frac{\partial S_n(x_{n-1},x_{n})}{ \partial x_{n}}.
\ea
We shall refer to the momenta ${}^{-}p$ (\ref{app9b}) as {\it pre--momenta} and to the momenta ${}^{+}p$ (\ref{app9}) as {\it post--momenta}. Equation (\ref{b3}) defines an implicit global Hamiltonian time evolution map $\ch_{n-1}:\cp_{n-1}\rightarrow\cp_n$,
\ba\label{app10}
{\cal H}_{n-1}: (x_{n-1},{}^-p^{n-1}) \mapsto (x_n, {}^+p^n ).
\ea
Namely, given $(x_{n-1},{}^-p^{n-1})$, one can use the equation for the pre--momenta in (\ref{b3}), in order to determine $x_n$ and, using this result and the post--momenta equation in (\ref{b3}), one determines ${}^+p^n$. It depends on the presence of constraints whether one can uniquely determine $x_n$ or not. This will be the subject of the following subsections.


We emphasize that the definition of $\ch_n$---in contrast to the one for $\cl_n$---itself does {\it not}{ require} the equations of motion. Nonetheless, the equations of motion are also necessary in the canonical picture for the following reason: In analogy to (\ref{b3}), we could equally well use $S_{n+1}$ as a generating function. Accordingly, at every time step we have both {\it pre--} and {\it post--momenta} (see figure \ref{discevol}):
\ba\label{b4}
{}^{-}p^{n}:= -\frac{\partial S_{n+1}}{ \partial x_{n}}  ,\q \q\q  {}^{+}p^{n}  := \frac{\partial S_{n}}{ \partial x_{n}}   .
\ea
\begin{wrapfigure}{h}{0.4\textwidth}
\psfrag{n}{\small$n$}
\psfrag{n1}{\small$n-1$}
\psfrag{n2}{\small$n+1$}
\psfrag{p1}{${{}^+p^n}$}
\psfrag{p2}{${{}^-p^n}$}
\psfrag{h1}{\small$\ch_{n-1}$}
\psfrag{h2}{\small$\ch_{n}$}
\psfrag{s}{$S_n$}
\psfrag{s2}{$S_{n+1}$}
\vspace{-.5cm}
\hspace*{1cm}\includegraphics[scale=.7]{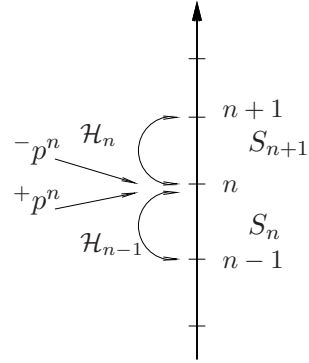}
\caption{\small The implicit global discrete Hamiltonian time evolution map and {\it pre--} and {\it post--momenta}.}\label{discevol}
\end{wrapfigure}
However, the equations of motion $\frac{\partial S_{n}}{ \partial x_{n}} + \frac{\partial S_{n+1}}{ \partial x_{n}}=0$ for the variables $x_n$ are equivalent to a {\it momentum matching} of pre-- and post--momenta,
\ba\label{mm}
{}^+p^n={}^-p^n,
\ea
such that (on--shell) there are unique momenta for the variables at step $n$. Henceforth, we will often omit the superindices $+$ and $-$ at the momenta, implicitly assuming that {\it momentum matching} holds.

In summary, we find that the off--shell (or kinematical) quantities are defined by the action associated to one time evolution move, while equations of motions are imposed by momentum matching. 

\begin{Example}
\emph{Consider the evolution of the massless scalar field with action (\ref{saction}) on a two--dimensional cylindrical space time $S^1\times [0,1]$, where we assign the time $n$ to the lower boundary $S^1\times \{0\}$ and time $(n+1)$ to the upper boundary $S^1 \times \{1\}$. We assume that this space time is triangulated such that we have $Q_n$ vertices at time $n$ and $Q_{n+1}$ vertices at time $(n+1)$ and such that we have only one layer of triangles as depicted in figure \ref{fig_cdt}. This corresponds to a triangulation as used in the Causal Dynamical Triangulation (CDT) programme \cite{Ambjorn:2012jv}; a vertex $i$ at time $n$ can be connected to a vertex $j$ at time $(n+1)$ by one edge. For very `small' triangulations a vertex $i$ may also be connected to $j$ by 2 edges. This defines an adjacency matrix $A^{n+1}_{ij}$ with $A^{n+1}_{ij}=E$ if vertex $i$ is connected with vertex $j$ by $E$ edges, where $E=0,1$ or $2$.  The canonical time evolution  from time $n$ to time $(n+1)$ can then be written as
\ba
{}^-\pi^n_i &=& -\left( \sum_{j=1}^{Q_{n+1}} A^{n+1}_{ij} ( \phi_{n}^i-\phi_{n+1}^j)\right)  \;-\; \phi^i_n +\tfrac{1}{2} \phi^{i+1}_n + \tfrac{1}{2} \phi^{i-1}_n ,\nn\\
{}^+\pi^{n+1}_j &=& \q\left(   \sum_{i=1}^{Q_n}   A^{n+1}_{ij} (\phi_{n+1}^j - \phi_{n}^i  )\right)  \,\,\, \;+\; \phi^j_{n+1} -\tfrac{1}{2} \phi^{j+1}_{n+1} - \tfrac{1}{2} \phi^{j-1}_{n+1} ,
\ea
where $\pi^n_i$ are the momenta conjugate to the fields $\phi_n^i$ and we assume a cyclic labeling of the vertices. For instance, if $i=Q_n$ then $i+1=1$ for objects $\phi_n^i$ and $\pi^i_n$.
}
\begin{SCfigure}
\psfrag{n}{\small$n$}
\psfrag{n1}{\small$n+1$}
\psfrag{1}{\footnotesize $\phi^1_{n+1}$}
\psfrag{2}{\footnotesize$\phi^2_{n+1}$}
\psfrag{3}{\footnotesize$\phi^3_{n+1}$}
\psfrag{4}{\footnotesize$\phi^4_{n+1}$}
\psfrag{5}{\footnotesize$\phi^5_{n+1}$}
\psfrag{6}{\footnotesize$\phi^6_{n+1}$}
\psfrag{7}{\footnotesize$\phi^7_{n+1}$}
\psfrag{8}{\footnotesize$\phi^1_{n+1}$}
\psfrag{a}{\footnotesize$\phi^1_n$}
\psfrag{b}{\footnotesize$\phi^2_n$}
\psfrag{c}{\footnotesize$\phi^3_n$}
\psfrag{d}{\footnotesize$\phi^4_n$}
\psfrag{e}{\footnotesize$\phi^1_n$}
\vspace{-1cm}
\includegraphics[scale=.8]{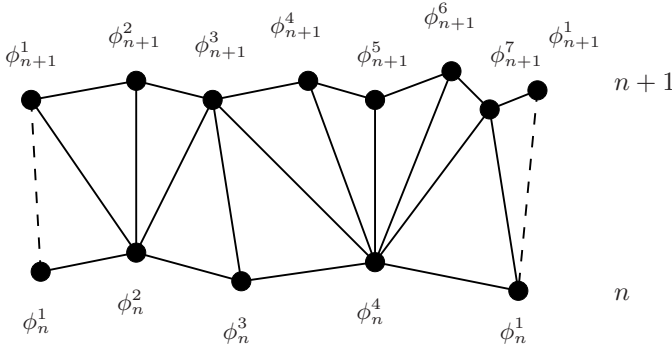}
\caption{\small A global time evolution move for a scalar field on a CDT like triangulation. Here, $Q_n=4$ and $Q_{n+1}=7$. Dashed lines represent periodic identification.}\label{fig_cdt}
\vspace{1cm}
\end{SCfigure}
\end{Example}

\subsection{Review: Regular discrete systems}\label{sec_reg}

We begin by reviewing regular discrete systems in which---by definition---constraints do not arise. The configuration spaces ${\cal Q}_n \cong {\cal Q}$ are of equal dimension at every time step $n$ and
the condition
\ba\label{app3}
\text{det}\frac{\p^{2} S_{n+1}}{\p x_n\p x_{n+1}} \neq 0\,
\ea
holds. A detailed exposition of regular discrete systems is given in \cite{marsdenwest} and so we shall be brief on this subject.

\subsubsection{Lagrangian formulation}

In this regular case, given $(x_{n-1},x_{n})$, (\ref{app2}) can be uniquely solved for $(x_n,x_{n+1})$ and the Lagrangian time evolution map $\cl_n$ in (\ref{app4}) is unique.


The Lagrangian two--form is preserved under the time evolution map (\ref{app4}), 
\ba\label{app7}
\Omega_n&=&{\cal L}_n^* \Omega_{n+1}\,,
\ea
where ${\cal L}_n^*$ represents the pull--back of ${\cal L}_n$. This is proven as follows: by substituting for $x_{n+1}$ the solutions $x_{n+1}(x_{n-1},x_n)$ of the time evolution map (\ref{app4}), the action $S$ in (\ref{app1}) can be written as a function on ${\cal Q}_{n-1}\times {\cal Q}_n$. The exterior derivative of $S$ on ${\cal Q}_{n-1}\times {\cal Q}_n$ yields only boundary terms on account of the equations of motion of the inner variable $x_n$,
\ba\label{app8}
dS\,(x_{n-1},x_n)&=& \frac{\p S_n}{\p x^i_{n-1}}dx^i_{n-1}\,\,+\,\, \frac{\p S_{n+1}}{\p x^j_{n+1}} \left( \frac{\p x^j_{n+1}}{\p x^i_{n-1}}dx^i_{n-1} + \frac{\p x^j_{n+1}}{\p x^i_n}dx^i_n\right) \nn\\
&=& -\theta_{n-1}^- \,+\, {\cal L}_n^* \theta_{n+1}^+ \, .
\ea
$d\circ d=0$ and commutativity of exterior derivatives and pull--backs then imply (\ref{app7}).
It is straightforward to generalize this argument to an arbitrary time step difference $(n_1,n_2)$.

\subsubsection{Canonical formulation}


As a consequence of condition (\ref{app3}), the Legendre transforms (\ref{app9}, \ref{app9b}) are invertible and the Hamiltonian time evolution map $\ch_{n-1}$ (\ref{app10}) generally possesses a unique solution. One can check \cite{marsdenwest} that $\ch_{n-1}=\mathbb{F}^+\circ\cl_{n-1}\circ(\mathbb{F}^+)^{-1}$, where $\cl_{n-1}$ is the Lagrangian time evolution map (\ref{app4}). This, together with the preservation of the Lagrangian two--form (\ref{app7}) can be used to show that the Hamiltonian time evolution is symplectic, i.e.\ preserves the canonical two--form
\ba
\omega_{n-1}=(\ch_{n-1})^*\omega_n.
\ea

The situation for regular discrete systems can be diagrammatically summarized as follows:
\begin{diagram}
& &\cq_{n-1}\times\cq_n&& \rTo^{\cl_n}&&\cq_n\times\cq_{n+1} &   &\\
&\ldTo^{{\mathbb{F}}^-}& &\rdTo^{{\mathbb{F}}^+}&&\ldTo^{{\mathbb{F}}^-}  && \rdTo^{{\mathbb{F}}^+}&\\
 \cp_{n-1}&& \rTo^{\ch_{n-1}}& &\cp_n&& \rTo^{\ch_{n}}&&  \cp_{n+1}.
\end{diagram}

\subsection{Irregular discrete systems}\label{sec_irreg}

Next, we examine the general scenario in which regularity condition (\ref{app3}) is violated. Nonetheless, let $\cq_{n-1}\cong\cq_n$ $\forall\, n$, such that an equal number of left and right null vectors $L_n^i$ and $R^i_{n+1}$, respectively, of the Lagrange two--form
\ba\label{app13}
L_n^i \frac{\p^{2} S_{n+1}}{\p x_n^i \p x_{n+1}^j}\;=0  , \q\q      \frac{\p^{2} S_{n+1}}{\p x_n^i \p x_{n+1}^j} R_{n+1}^j\;=0,   \q
\ea
arises in an open neighborhood in ${\cal Q}_n\times {\cal Q}_{n+1}$. (In this section only, a further index labeling the null vectors will be ignored for the sake of better legibility.) As we shall see in subsection \ref{sec_psext} below, it is possible to formulate systems with temporally varying numbers of degrees of freedom in this irregular fashion.

\subsubsection{The Lagrangian formulation}\label{sec_singlang}

A specific solution $x_{n+1}$ as a function of $(x_{n-1},x_n)$ is not uniquely determined because $x_{n+1}+\varepsilon R_{n+1}$ then constitutes another (infinitesimally displaced) solution of (\ref{app2}). In consequence, arbitrariness in the form of free parameters arises. That is, there exists and $N$ parameter family of Lagrangian time evolution maps $\cl_{n}^{\lambda^l}$, $l=1,\ldots,N$, if there are $N$ independent null vectors $R_{n+1}$. 
A given $\cl_n$, obtained after fixing $N$ {\it a priori} free parameters $\lambda_n^l$ maps at most onto a $(2Q-N)$--dimensional space where $Q$ is the dimension of $\cq$. Therefore, $\cl_n$ is either not injective and/or only defined on some constraint submanifold of $\cq_{n-1}\times\cq_n$. 

On account of (\ref{app13}), the Lagrangian two--form (\ref{app6}) is, obviously, degenerate and possesses $2N$ null directions $L_n^i$ and $R_{n+1}^i$. Nevertheless, if there are no constraints  on $\cq_{n-1}\times\cq_n$ and, hence, a given $\cl_n^{\lambda^l}$ can be defined on all of $\cq_{n-1}\times\cq_n$, the Lagrangian two--form is preserved under the discrete time evolution. The corresponding argument is identical to the one spelled out below (\ref{app7}). 

The same holds for the time reverse of $\cl_n$ \footnote{More precisely, by the time reverse of $\cl_n$ we mean the map defined by the equations of motion (\ref{app2}), given suitable `final data' on $\cq_n\times\cq_{n+1}$.} if some null vectors $L_{n-1}$ are present. It is as this stage useful to distinguish between two types of free data:

\begin{Definition}\label{dfn_free}
We refer to a free parameter
\begin{itemize} \parskip -2mm
\item $\lambda_{n+1}$ at step $n+1$ which cannot be predicted by $\cl_n$, given the Lagrangian data at steps $n-1,n$, as \emph{a priori} free, and
\item $\mu_{n-1}$ at step $n-1$ which cannot be postdicted by the time reverse of $\cl_n$, given the Lagrangian data at steps $n,n+1$, as \emph{a posteriori} free.
\end{itemize}
\end{Definition}

As we shall see in section \ref{sec_conrole} below, these free parameters $\lambda_n,\mu_n$ and null vectors $L_n,R_n$ at some $n$ do not necessarily correspond to gauge freedom. 

\subsubsection{The canonical formulation}\label{sec_singcan}

Recall that in the continuum the Legendre transform $\mathbb{F}L$ fails to be an isomorphism if and only if $\det\left(\frac{\p^2L}{\p\dot{q}^i\p\dot{q}^j}\right)=0$, i.e.\ if and only if the Lagrangian two--form is degenerate. In this case a single primary constraint surface arises \cite{Henneaux:1992ig}. 

The analogous state of affairs holds true in the discrete. Definitions \ref{def_postleg} and \ref{def_preleg} of the {\it pre--} and {\it post--Legendre transforms}
are directly applicable to singular systems. In this case the rank of both Legendre transformations
is $2Q-N$, as a consequence of the $N$ left and $N$ right null vectors (\ref{app13}). The Legendre transformations thus fail to be onto,\footnote{In analogy to the continuum, $\mathbb{F}^\pm S_{n+1}$ simultaneously fail to be isomorphisms if and only if the Lagrangian two--form (\ref{app6}) is degenerate.} and
%
%
%
their images form $(2Q-N)$--dimensional submanifolds in the two phase spaces ${\cal P}_n$ and ${\cal P}_{n+1}$. We shall denote these submanifolds by ${\cal C}_n^-$ and ${\cal C}_{n+1}^+$, respectively. Notice that $\dim\cc^-_n=\dim\cc^+_{n+1}$.

\begin{Definition}\label{def_pcons}
The image of the pre--Legendre transform, $\cc_n^-:=\text{\emph{Im}}(\mathbb{F}^-S_{n+1})\subset\cp_n$, is called the \emph{pre--constraint surface}. The image of the post--Legendre transform, $\cc_{n+1}^+:=\text{\emph{Im}}(\mathbb{F}^+S_{n+1})\subset\cp_{n+1}$, is called the \emph{post--constraint surface}.
\end{Definition}

In contrast to the continuum, we obtain two kinds of constraint surfaces in the discrete since there are two kinds of Legendre transforms. The {\it pre--constraints} governing ${ \cal C}_n^-$ are satisfied by the pre--momenta ${}^-p^n$ by construction because these arise from the definition of the Legendre transform. However, 
the pre--constraints  impose non--trivial conditions on the post--momenta ${}^+p^n$ at $n$, resulting from the previous time evolution move $(n-1)\rightarrow n$. 

If there are $N$ independent left null vectors  of the Lagrangian form (\ref{app13}) (in an open neighbourhood), the pre--constraint hypersurface will be of co--dimension $N$ in ${\cal P}_n$ and  we choose $N$ constraints ${}^-C^{n}_l$, $l=1,\ldots,N$, to describe this pre--constraint hypersurface. Thus, this constraint set is irreducible\footnote{The gradients of the constraints are linearly independent.} \cite{Henneaux:1992ig}. 

%

The  $N$ (irreducible) {\it post--constraints} ${}^+C^{n+1}_r$, $r=1,\ldots,N$, defining ${ \cal C}_{n+1}^+$ 
are automatically satisfied by the momenta ${}^+p^{n+1}$ (for all initial values) after having performed the evolution move $n\rightarrow (n+1)$. By momentum matching, these post--constraints will provide non--trivial conditions for the pre--momenta ${}^-p^{n+1}$ of the next evolution move $(n+1)\rightarrow (n+2)$.  

The {\it pre--} and {\it post--constraint surfaces} at fixed step $n$ generally do {\it not} coincide, $\cc^+_n\neq\cc^-_n$. Making sure that both {\it pre--} and {\it post--constraints} are satisfied at each step such that we restrict to $\cc^+_n\cap\cc^-_n$ is the non--trivial challenge and will be further discussed in section \ref{sec_conmatch}. 


For singular systems, the Hamiltonian time evolution ${\cal H}_n$ (\ref{app10}) can only be defined as a map from  ${\cal C}_n^-$  to $\cc^+_{n+1}$. That is, $\ch_n:\cc^-_n\rightarrow\cc^+_{n+1}$. Nevertheless, $\ch_n$ is generated by the discrete action and implicitly defined by (\ref{b3}). 
%
%
%
However, as a consequence of the right null vectors in (\ref{app13}), we are no longer able to solve the pre--momentum equation in (\ref{b3}) uniquely for $x_{n+1}$: 
therefore, as in the Lagrangian picture, 
we must specify
$N$ {\it a priori} free parameters $\lambda^r_{n+1}$, $r=1,\ldots,N$ 
to uniquely determine $x_{n+1}(x_n,\lambda_{n+1},{}^-p^n)$. Moreover, the post--momenta ${}^+p^{n+1}$ will in general depend on the parameters $\lambda_{n+1}$.

Likewise, on account of the 
left null vectors $L_n$, we can no longer uniquely express $x_n$ as a function of $x_{n+1},{}^+p^{n+1}$ only and $N$ {\it a posteriori} free parameters $\mu^l_n$, $l=1,\ldots,N$, that cannot be postdicted via (the time reverse of) $\ch_n$ and the canonical data at ${n+1}$ need be specified in order to determine $x_n(x_{n+1},\mu_n,{}^+p^{n+1})$. 

Via the momentum matching (\ref{mm}), 
 the {\it a priori} and {\it a posteriori} free parameters $\lambda_n,\mu_n$ at step $n$ of the Lagrangian time evolution map coincide with those of the Hamiltonian time evolution map. We have one {\it a priori} free $\lambda_n$  per post--constraint and one {\it a posteriori} free $\mu_n$ per pre--constraint.


In analogy to the continuum situation, arbitrariness in the form of free  parameters $\lambda,\mu$ thus appears in the canonical discrete evolution of singular systems. But as in the continuum the preservation of the constraints may lead to a fixing of some of these parameters $\lambda,\mu$. This will be discussed in sections \ref{sec_conmatch}--\ref{sec_varobs}. 


In consequence of the Hamiltonian time evolution $\ch_n:\cc^-_n\rightarrow\cc^+_{n+1}$ restricting to the constraint hypersurfaces, $\ch_n$ cannot be a symplectic map. However, it defines a pre--symplectic map: using the embeddings $\iota_n^-:{\cal C}_n^- \rightarrow {\cal P}_n$ and $\iota_{n+1}^+:{\cal C}_{n+1}^+ \rightarrow {\cal P}_{n+1}$, the canonical two--forms
\ba\label{app19}
\omega_n=dx_n^j\,\wedge\, dp^n_j  , \q\q\omega_{n+1}=dx_{n+1}^j\,\wedge\,dp^{n+1}_j .
\ea
can be pulled back to two--forms on the constraint surfaces ${\cal C}^-_n$ and ${\cal C}^+_{n+1}$, respectively. The resulting two--forms are pre--symplectic forms\footnote{As will be explained later, the pre--constraints in ${\cal P}_n$ and the post--constraints in ${\cal P}_{n+1}$ each define a first class set of constraints. The pull back of the symplectic form to a constraint hypersurface described by first class constraints is a pre--symplectic form with one null direction per (independent) constraint. }  and Hamiltonian time evolution preserves these pre--symplectic forms as stated in the following theorem. 

\begin{Theorem}\label{thm_presymp}
The discrete global Hamiltonian time evolution map $\ch_n:\cc^-_n\rightarrow\cc^+_{n+1}$ satisfies
\ba
(\iota_n^-)^*\omega_n = {\cal H}_n^*  (\iota_{n+1}^+)^*\omega_{n+1}  .
\ea
\end{Theorem}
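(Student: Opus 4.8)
The plan is to mimic the regular-case argument given below equation (\ref{app7}), being careful that now both Legendre transforms are merely onto their respective constraint surfaces rather than onto all of $\cp_n$, $\cp_{n+1}$. The key observation is that the pull-back identities (\ref{legpullback}) and the pull-back of the Lagrangian two-form under $\cl_n$ hold irrespective of regularity, as long as $\cl_n^{\lambda^l}$ (for a fixed choice of the $N$ \emph{a priori} free parameters $\lambda^l$) can be defined on all of $\cq_{n-1}\times\cq_n$, which is the situation assumed in subsection \ref{sec_singlang}. So I would first fix such a choice of parameters and work with that $\cl_n=\cl_n^{\lambda^l}$.

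First I would recall from subsection \ref{sec_reg}/\ref{sec_singlang} that the computation in (\ref{app8}) goes through verbatim in the irregular case: substituting the solution $x_{n+1}(x_{n-1},x_n)$ (for the fixed parameter choice) into $S=S_n+S_{n+1}$ and taking the exterior derivative on $\cq_{n-1}\times\cq_n$ yields $dS = -\theta_{n-1}^- + \cl_n^*\theta_{n+1}^+$, because the $x_n$-variation produces exactly the equations of motion (\ref{app2}). Applying $d$ and using $d\circ d=0$ together with $d\,\cl_n^* = \cl_n^*\,d$ gives $\Omega_n = \cl_n^*\Omega_{n+1}$, i.e.\ (\ref{app7}) still holds (now as an identity between degenerate two-forms).

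Next I would use the identification $\ch_n = \mathbb{F}^+S_{n+1}\circ\cl_n\circ(\mathbb{F}^-S_n)^{-1}$ on the appropriate domains — more carefully, I would phrase everything in terms of pull-backs to avoid inverting non-invertible maps. Concretely: by (\ref{legpullback}) we have $\Omega_n = (\mathbb{F}^-S_n)^*\omega_n$ and $\Omega_{n+1} = (\mathbb{F}^+S_{n+1})^*\omega_{n+1}$, where here $\mathbb{F}^-S_n:\cq_{n-1}\times\cq_n\to\cp_n$ has image $\cc_n^-$ and $\mathbb{F}^+S_{n+1}:\cq_n\times\cq_{n+1}\to\cp_{n+1}$ has image $\cc_{n+1}^+$. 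Writing $\mathbb{F}^-S_n = \iota_n^-\circ\widehat{\mathbb{F}}{}^-S_n$ with $\widehat{\mathbb{F}}{}^-S_n$ the corestriction to $\cc_n^-$ (and similarly $\mathbb{F}^+S_{n+1} = \iota_{n+1}^+\circ\widehat{\mathbb{F}}{}^+S_{n+1}$), the relation $\ch_n\circ\widehat{\mathbb{F}}{}^-S_n = \widehat{\mathbb{F}}{}^+S_{n+1}\circ\cl_n$ holds by the definitions (\ref{b3}), (\ref{app4}) and momentum matching (\ref{mm}). Then
\[
(\widehat{\mathbb{F}}{}^-S_n)^*(\iota_n^-)^*\omega_n = \Omega_n = \cl_n^*\Omega_{n+1} = \cl_n^*(\widehat{\mathbb{F}}{}^+S_{n+1})^*(\iota_{n+1}^+)^*\omega_{n+1} = (\widehat{\mathbb{F}}{}^-S_n)^*\ch_n^*(\iota_{n+1}^+)^*\omega_{n+1}.
\]
Since $\widehat{\mathbb{F}}{}^-S_n:\cq_{n-1}\times\cq_n\to\cc_n^-$ is onto (and a submersion on the open neighbourhood under consideration), its pull-back is injective on forms, so the identity $(\iota_n^-)^*\omega_n = \ch_n^*(\iota_{n+1}^+)^*\omega_{n+1}$ follows.

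The main obstacle I anticipate is the bookkeeping around the free parameters $\lambda^l_{n+1}$: $\cl_n$ and $\ch_n$ are only defined after a choice of these, and one must check that the final two-form identity on $\cc_n^-$ does not depend on that choice — which it does not, precisely because the extra $\lambda$-directions are null directions of $\Omega_{n+1}$ by (\ref{app13}), so $\cl_n^*\Omega_{n+1}$ is insensitive to them. A secondary subtlety is the standing assumption (inherited from subsection \ref{sec_singlang}) that there are no constraints on $\cq_{n-1}\times\cq_n$, so that $\cl_n$ is globally defined there; if that fails one works on the relevant constraint submanifold, and the argument is unchanged since all steps are local and natural with respect to pull-back. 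I would remark that, alternatively, one can give a direct coordinate proof: write $p^n = {}^-p^n = -\partial S_n/\partial x_{n-1}$... no — rather $(\iota_n^-)^*\omega_n$ in coordinates is $dx_n\wedge d({}^+p^n)$ after imposing momentum matching, and a short computation using $d\circ d\,S_{n+1}=0$ reproduces the statement; but the intrinsic argument above is cleaner and reuses (\ref{app7}) directly.
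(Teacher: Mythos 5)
There is a genuine gap, and it sits at the heart of your argument: you have misassigned the Legendre transforms and their images. By Definitions \ref{def_postleg}, \ref{def_preleg} and (\ref{legpullback}), $\mathbb{F}^-S_n$ maps $\cq_{n-1}\times\cq_n$ into $\cp_{n-1}$ with image $\cc^-_{n-1}$, so the relation ``$\Omega_n=(\mathbb{F}^-S_n)^*\omega_n$ with image $\cc^-_n$'' does not even typecheck; the correct relation is $\Omega_n=(\mathbb{F}^+S_n)^*\omega_n$, whose image is $\cc^+_n$, while the surface $\cc^-_n$ appearing in the theorem is the image of $\mathbb{F}^-S_{n+1}$, defined on $\cq_n\times\cq_{n+1}$ and built from $S_{n+1}$ alone. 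Consequently your intertwining relation $\ch_n\circ\widehat{\mathbb{F}}{}^-S_n=\widehat{\mathbb{F}}{}^+S_{n+1}\circ\cl_n$ is not composable as written, and after the charitable replacement $\mathbb{F}^-S_n\to\mathbb{F}^+S_n$ it still fails off shell, because the image of $\mathbb{F}^+S_n$ lies in $\cc^+_n$ and is not contained in the domain $\cc^-_n$ of $\ch_n$ --- this is precisely the momentum--matching issue of section \ref{sec_conmatch}. The final step, injectivity of the pull--back because ``$\widehat{\mathbb{F}}{}^-S_n$ is onto $\cc^-_n$'', is therefore unavailable: at best your chain of equalities yields the two--form identity on the part of $\cc^-_n$ reachable from data at step $n-1$ (essentially $\cc^-_n\cap\cc^+_n$), whereas the theorem asserts it on all of $\cc^-_n$; surjectivity genuinely fails in general, as the propagation of (secondary) constraints discussed later in the paper shows. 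Your route also imports hypotheses the theorem does not make ($\cl_n$ defined on all of $\cq_{n-1}\times\cq_n$, i.e.\ no constraints there, and the presence of a second action contribution $S_n$), although the statement concerns only the single move generated by $S_{n+1}$.

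The intended argument (the paper defers to Theorem 6.1 of \cite{Dittrich:2011ke}) stays entirely within that single move: by (\ref{legpullback}), both $(\iota^-_n)^*\omega_n$ and $(\iota^+_{n+1})^*\omega_{n+1}$ pull back to the \emph{same} Lagrangian two--form $\Omega_{n+1}$ under the corestrictions $\widehat{\mathbb{F}}{}^-S_{n+1}$ and $\widehat{\mathbb{F}}{}^+S_{n+1}$ on $\cq_n\times\cq_{n+1}$, which are surjective onto $\cc^-_n$ and $\cc^+_{n+1}$ by construction. For any fixed choice of the {\it a priori} free parameters one has $\ch_n=\widehat{\mathbb{F}}{}^+S_{n+1}\circ\Phi$, where $\Phi:\cc^-_n\to\cq_n\times\cq_{n+1}$ is a section of $\widehat{\mathbb{F}}{}^-S_{n+1}$ obtained by solving ${}^-p^n=-\p S_{n+1}/\p x_n$ for $x_{n+1}$ as in (\ref{b3}). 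Then
\ba
\ch_n^*(\iota^+_{n+1})^*\omega_{n+1}=\Phi^*(\widehat{\mathbb{F}}{}^+S_{n+1})^*(\iota^+_{n+1})^*\omega_{n+1}=\Phi^*\Omega_{n+1}=\Phi^*(\widehat{\mathbb{F}}{}^-S_{n+1})^*(\iota^-_n)^*\omega_n=(\iota^-_n)^*\omega_n\,,\nn
\ea
valid on all of $\cc^-_n$ and manifestly independent of the parameter choice. Your closing remark that the $\lambda$--directions are null directions of $\Omega_{n+1}$ is the right intuition, but it must be implemented through a section of $\mathbb{F}^-S_{n+1}$ as above, not through $\cl_n$, $S_n$ and the preservation law (\ref{app7}), which address a different (composite) structure.
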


\begin{proof}
See Theorem 6.1 in \cite{Dittrich:2011ke}.
\end{proof}

Let us conclude the treatment of singular discrete systems with a diagrammatic symmary: 
\begin{diagram}
& &\cq_{n-1}\times\cq_n&& \rTo^{\cl_n}&&\cq_n\times\cq_{n+1} &   &\\
&\ldTo^{{\mathbb{F}}^-}& &\rdTo^{{\mathbb{F}}^+}&&\ldTo^{{\mathbb{F}}^-}  && \rdTo^{{\mathbb{F}}^+}&\\
 \cp_{n-1}\supset\cc^-_{n-1}&& \rTo^{\ch_{n-1}}& &\cc^+_n\cap\cc^-_n&& \rTo^{\ch_{n}}&&  \cc^+_{n+1}\subset\cp_{n+1}.
\end{diagram}

\subsection{Evolving phase spaces and configuration and phase space extensions}\label{sec_psext}

In order to describe a growing/shrinking lattice, we must be able to deal with numbers of variables which vary in discrete time, such that the concept of an evolving phase space becomes necessary. 

Note that the Definitions \ref{def_postleg} and \ref{def_preleg} of the {\it pre--} and {\it post--Legendre transforms} are 
also  applicable to a situation where $\cq_{n-1}\ncong\cq_n$.  In this case the transforms obviously fail to be isomorphisms and we are necessarily in the situation of a singular system. The mapping from $\cq_{n-1}\times\cq_n$ to the phase spaces $\cp_{n-1}=T^*\cq_{n-1}$ and $\cp_n=T^*\cq_n$ is well defined  also in the case  $\dim\cq_{n-1}\neq\dim\cq_n$. Therefore, all the results of section \ref{sec_gen} apply to a $\cq_n$ which varies with $n$. The action contribution $S_n$ is a generating function of the first kind as before, just not for a canonical transformation (as the corresponding Hamiltonian map is not symplectic but rather pre--symplectic), but for a singular global time evolution.
 


Nevertheless, it will generally be convenient to work with suitable configuration and phase space extensions such that one obtains phase spaces of equal dimension at $(n-1)$ and $n$. This simplifies the discussion of the consequences for the preservation of the symplectic structure under $\ch_{n-1}$. 

The prescription for the phase space extension is simple. Assume there exist `old' configuration variables $x^o_{n-1}$ which appear at time step $(n-1)$ but do not have counterparts $x^o_n$ at step $n$. We would like to extend the phase space at time $n$ such as to include the pairs $(x^o_n,p^n_o)$. To this end, extend the configuration manifold $\cq_n$ to a new configuration space 
$\bar{\cq}_n:=\cq_n\times\cq^{ext}_n$, where $\cq^{ext}_n$ is a suitable configuration manifold of appropriate dimension 
coordinatized by the desired $x^o_n$.\footnote{For instance, in Regge Calculus one has {\it a priori}, i.e., before imposing any equations of motion or generalized triangle inequalities, $\cq_n=\mathbb{R}_+\times\cdots\times\mathbb{R}_+$ (the lengths cannot be negative). One can, therefore, simply extend $\cq_n$ by some additional products of $\mathbb{R}_+$ coordinatized by the lengths $l^o_n$ of `old edges' (which actually do not appear in the triangulation at step $n$). See \cite{Dittrich:2011ke} for more details.} In this way the coordinate form of the Lagrangian two--form (\ref{app6}) will be again a square matrix, however, with (additional) left and right null vectors corresponding to the formally added variables $x^o_n$.  The action $S_n$ as well as $S_{n+1}$ do {\it not} depend on these variables as these were only introduced for book keeping purposes. The $x^o_n$ are thus both {\it a priori} and {\it a posteriori} free parameters $\lambda^o_n=\mu^o_n$ according to definition \ref{dfn_free}. Upon extension to $\bar{\cq}_n$ we are therefore in the case of singular discrete dynamics and all the results of subsections \ref{sec_singlang} and \ref{sec_singcan} apply. The extension for `new' variables $x^{new}_n$ which only appear at step $n$ but do not have counterparts at $(n-1)$ proceeds analogously.

On the extended configuration spaces $\bar{\cq}_{n-1}\times\bar{\cq}_n$, where now $\dim\bar{\cq}_{n-1}=\dim\bar{\cq}_n$, one may then perform the discrete Legendre transformations (\ref{app9}, \ref{app9b}), yielding a mapping to the extended phase spaces $\bar{\cp}_{n-1}:=T^*\bar{\cq}_{n-1}$ and $\bar{\cp}_n:=T^*\bar{\cq}_n$. Denote by $(x^{new}_{n-1},p^{n-1}_{new})$ and $(x^{o}_n,p^n_{o})$ the pairs by which the phase spaces at $n-1$ and $n$ have been extended, respectively. The Hamiltonian time evolution map $\ch_{n-1}$ is extended to the formally added variables by simply using the action $S_n$ as a generating function of the first kind with trivial dependence on $x^{new}_{n-1},x^o_n$, resulting in the evolution equations
\ba\label{c4}
{}^-p^{n-1}_{new}\,&=&\, -\frac{\partial S_{n}}{\partial x^{new}_{n-1}}\,=\,0 ,\q\q\q\q  {}^+p^{n}_{new}\,=\, \frac{\partial S_{n}}{\partial x^{new}_{n}},\nn\\
{}^-p^{n-1}_{o}\,&=&\, -\frac{\partial S_{n}}{\partial x^{o}_{n-1}} ,\q\q\q\q\q\q \q \,{}^+p^{n}_{o}\,=\, \frac{\partial S_{n}}{\partial x^{o}_{n}}=0.
\ea
That is, additional {\it pre--} and {\it post--constraints}
\ba\label{npcon}
{}^-C^{n-1}_{new}={}^-p^{n-1}_{new}=-\f{\p S_n}{\p x^n_{n-1}}=0,\q\q\q{}^+C^n_o={}^+p^n_{o}=\f{\p S_n}{\p x^{o}_n}=0
\ea
now arise at $n-1$ and $n$, respectively, which are in number equal to the difference in dimension between the extended and unextended configuration spaces, respectively.
 The total resulting numbers of pre--constraints at $n-1$ and post--constraints at $n$ are identical as the Lagrangian two--form $\Omega_n$ (\ref{app6})  corresponds to a square matrix which has equally many left and right null vectors $L_{n-1}, R_n$ (\ref{app13}). 

Thus, the added variables---being both {\it a piori} and {\it a posteriori} free---cannot be determined and their conjugate momenta are constrained.\footnote{If one also extended the phase spaces at previous time steps by the added variables $x^{new}$, the momentum $p_{new}$ would still be vanishing. The same holds for the variables $x^o$ and later time steps.} We are free, however, to make the `gauge choices' $x^{new}_{n-1}=x^{new}_n$ and $x^o_{n-1}=x^o_n$. Since $S_n$ does not depend on $x^{new}_{n-1},x^{o}_n$, neither does any of the already existing pre-- or post--constraints at $n-1$ and $n$, respectively, such that both new sets in (\ref{npcon}) are first class. Hence, returning to the unextended (`reduced') phase spaces $\cp_{n-1}$ and $\cp_n$ is simply performed by a partial reduction procedure consisting of imposing ${}^-p^{n-1}_{new}=0$ and ${}^+p^n_{o}=0$ and factoring out the corresponding first class flow. This amounts to simply dropping the added pairs $(x^{new}_{n-1},p^{n-1}_{new})$ and $(x^o_n,p^n_o)$ from their respective phase spaces.

In that sense, the phase space extension is a trivial extension which simply adds constrained canonical pairs. Nevertheless, precisely this allows us to directly apply the discussion of singular systems of the previous section to systems with varying number of variables.



\begin{Example}\label{examp0}
\emph{Coming back to our massless scalar field, we found the time evolution equations from step $n$ to $(n+1)$ as
\ba
{}^-\!\pi^n_i &=& -\left( \sum_{j=1}^{Q_{n+1}} A^{n+1}_{ij} ( \phi_{n}^i-\phi_{n+1}^j)\right)  \;-\; \phi^i_n +\tfrac{1}{2} \phi^{i+1}_n + \tfrac{1}{2} \phi^{i-1}_n  , \nn\\
{}^+\!\pi^{n+1}_j &=&\q \left(   \sum_{i=1}^{Q_n}   A^{n+1}_{ij} (\phi_{n+1}^j - \phi_{n}^i ) \right)   \;+\; \phi^j_{n+1} -\tfrac{1}{2} \phi^{j+1}_{n+1} - \tfrac{1}{2} \phi^{j-1}_{n+1}  .
\ea
with $A^{n+1}_{ij}$ encoding the adjacency relations between the vertices $i$ at time $n$ and the vertices $j$ at time $(n+1)$. The number of configuration variables is $Q_n$ and $Q_{n+1}$ for times $n,(n+1)$ respectively. Thus, if these numbers disagree we can introduce auxiliary additional field variables. Since the corresponding dynamics is trivial we will refrain from doing so explicitly. }

\emph{The Lagrangian two--form agrees with the adjacency matrix $A_{ij}^{n+1}$. If $Q_{n+1}=Q_n+N$ with $N>0$ we have at least $N$ right null vectors $(R_{n+1})^j_r,r=1,\ldots, N$ of the Lagrangian two--form (and $N$ trivial left null vectors  if $N$ auxilliary variables $\phi_n^{o}$ are introduced).  The corresponding $N$ post--constraints are given as
\ba
{}^+\! C_r^{n+1} &=& \sum_j^{Q_{n+1}} \left( {}^+\!\pi^{n+1}_j  - \left(\left(\sum_{i=1}^{Q_{n}}  A^{n+1}_{ij}\right) \,\,+1 \right)  \phi^j_{n+1} \,+\,\tfrac{1}{2} \phi^{j-1}_{n+1} \,+\,\tfrac{1}{2} \phi^{j+1}_{n+1}\right) (R_{n+1})^j_r  . \q\q
\ea}
\begin{SCfigure}
\psfrag{n}{\small$n$}
\psfrag{n1}{\small$n+1$}
\psfrag{1}{\footnotesize $\phi^1_{n+1}$}
\psfrag{2}{\footnotesize$\phi^2_{n+1}$}
\psfrag{3}{\footnotesize$\phi^3_{n+1}$}
\psfrag{4}{\footnotesize$\phi^4_{n+1}$}
\psfrag{5}{\footnotesize$\phi^5_{n+1}$}
\psfrag{6}{\footnotesize$\phi^6_{n+1}$}
\psfrag{7}{\footnotesize$\phi^7_{n+1}$}
\psfrag{8}{\footnotesize$\phi^1_{n+1}$}
\psfrag{a}{\footnotesize$\phi^1_n$}
\psfrag{b}{\footnotesize$\phi^2_n$}
\psfrag{c}{\footnotesize$\phi^3_n$}
\psfrag{d}{\footnotesize$\phi^4_n$}
\psfrag{e}{\footnotesize$\phi^1_n$}
\vspace{-1cm}
\includegraphics[scale=.6]{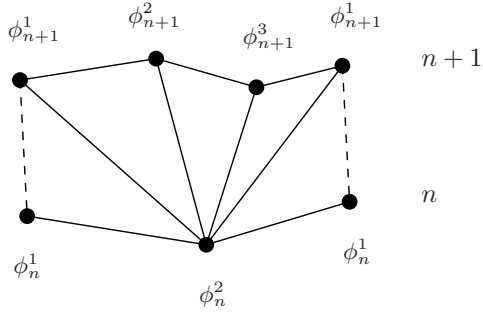}
\hspace*{1cm}
\caption{\small A global time evolution move for a scalar field on a triangulation with $Q_n=2$ and $Q_{n+1}=3$. Dashed lines represent periodic identification.}\label{fig_field23}
\vspace{1cm}
\end{SCfigure}

\emph{For instance, for the triangulation in figure \ref{fig_field23} with two vertices at time $n$ and three vertices at time $(n+1)$   
 we have the adjacency matrix and right null vector 
 \ba
 A^{n+1}&=&\left(         
 \begin{matrix}
 1 & 0 & 0 \\
 2& 1 & 1 
 \end{matrix}
 \right)  ,\q\q 
 R_{n-1}\;=\; \left(
 \begin{matrix}
 0\\1\\-1
 \end{matrix}
 \right)  .
 \ea
This yields the post--constraint at time step $(n+1)$
\ba
{}^+C^{n+1}&=& \pi^{n+1}_2 -\pi^{n+1}_3 \,-\tfrac{5}{2} \left( \phi_{n+1}^2 -  \phi_{n+1}^3 \right)  ,
\ea
corresponding to an {\it a priori} free parameter $\lambda_{n+1}=\phi_{n+1}^2-\phi_{n+1}^3$.}

\emph{Time--reversing the triangulation, i.e.\ just exchanging $n$ and $(n+1)$, one obtains a pre--constraint at time $n$
\ba
{}^-\! C^n &=&  \pi^{n}_2 -\pi^{n}_3 \,+\tfrac{5}{2} \left( \phi_{n}^2 -  \phi_{n}^3 \right)  ,
\ea
corresponding to an a posteriori free parameter $\mu_n=\phi_n^2-\phi_n^3$.}
\end{Example}

\subsection{Bulk variables and Hamilton's principal function}\label{sec_bulk}

Some systems involve bulk variables---which we shall denote by $x^t_n$---that only appear in a single action contribution $S_n$.  Thus, the corresponding equations of motion simply read $\frac{\p S_n}{\p x^t_n}=0$. Examples appear in the context of evolution schemes in Regge Calculus which, e.g., involve fat slices \cite{Dittrich:2011ke} or tent moves \cite{Dittrich:2011ke,Bahr:2009ku,Dittrich:2009fb,Barrett:1994ks}. For example, figure \ref{fatslices} depicts two adjacent fat slices bounded by non-intersecting hypersurfaces $\Sigma_n$ in a triangulation. The variables are given by the lengths of the edges and all the lengths of the dashed edges will appear as bulk variables either in $S_n$ or in $S_{n-1}$.

\begin{SCfigure}
\psfrag{n}{ $n$}
\psfrag{n+1}{$n+1$}
\psfrag{sn0}{$\Sigma_{n-1}$}
\psfrag{sn}{ $\Sigma_n$}
\psfrag{sn1}{ $\Sigma_{n+1}$}
\psfrag{Sn}{$S_n$}
\psfrag{Sn1}{$S_{n+1}$}
\includegraphics[scale=.4]{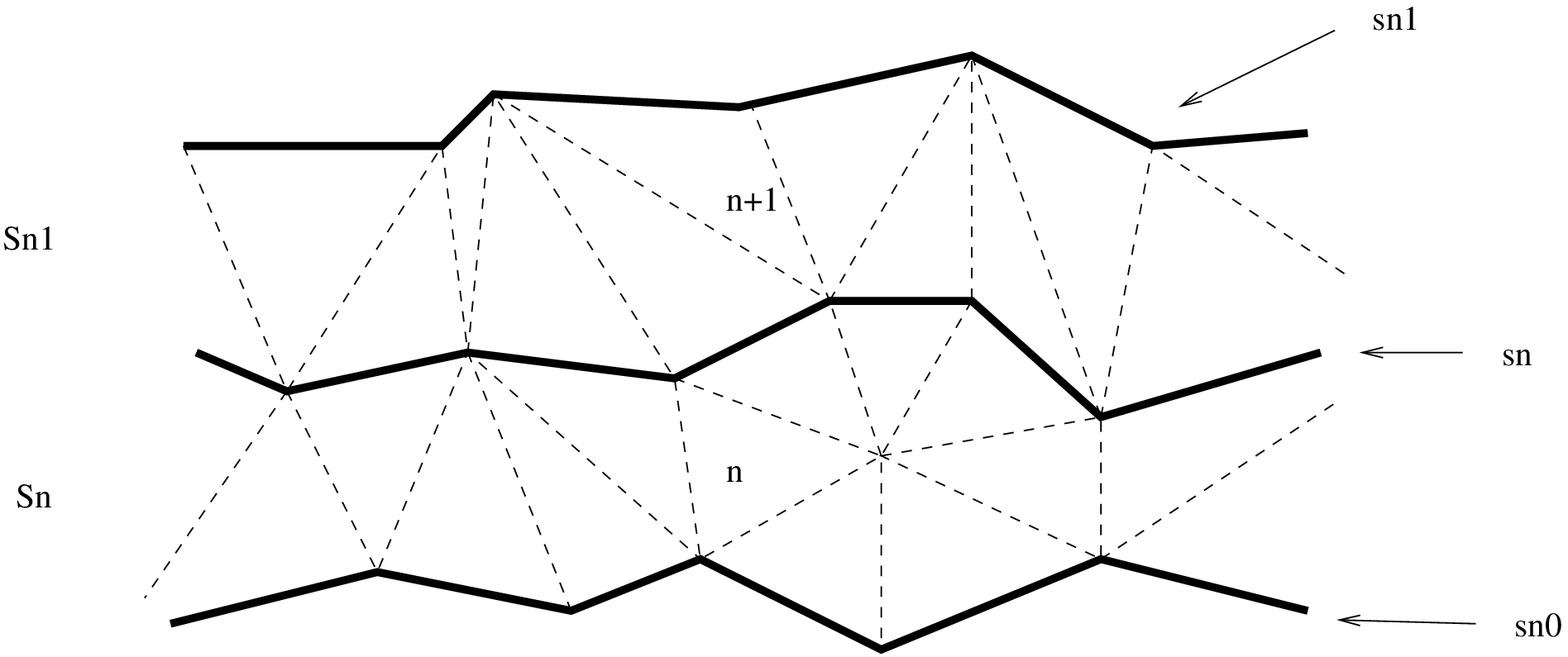}
\caption{\small A fat slicing of a triangulation.}\label{fatslices}
\end{SCfigure}
\noindent

Such bulk variables also appear when 
performing 
`effective' evolution moves $n\rightarrow n+k$   which consist of multiple intermediate basic time steps $n+1,n+2,\ldots,n+k-1$. 
The variables at these intermediate steps are then the bulk variables of the action $S_{n,n+k}=\sum^{n+k}_{l=n+1}S_l$. Integrating out these variables leads to Hamilton's principal function, describing the effective evolution $n\rightarrow n+k$, which, in fact, is a function of the boundary variables only. In this case, we shall relabel the time steps so that the effective evolution move is now labeled as $n\rightarrow (n+1)$ and all the bulk variables of $S_{n,n+k}$ are assigned to the new time step $n+1$.



By virtue of this splitting of the configuration variables into bulk variables $x^t_n$ and boundary variables $x^e_n$, which in general will appear in two action contributions, the evolution equations defining $\ch_{n-1}$ (\ref{b3}) become
\ba\label{b6}
{}^{-}p^{n-1}_e&:=& -\frac{\partial S_{n}}{ \partial x^e_{n-1}}  ,\q \q\q \q\q \,{}^{+}p^{n}_e  := \frac{\partial S_n}{ \partial x^e_{n}}   \nn\\
{}^{-}p^{n-1}_t&:= &-\frac{\partial S_{n}}{ \partial x^t_{n-1}} \,=\,0   ,\q \q\q  {}^{+}p_t^{n}  := \frac{\partial S_n}{ \partial x^t_{n}} .
\ea
As a result, one obtains the pre--constraints ${}^-C^{n-1}_t={}^-p_t^{n-1}=0$ ($S_n$ does {\it not} depend on $x^t_{n-1}$) which, by momentum matching, leads to ${}^+p_t^{n-1}=0$. 


Thus, also the equations of motion for the bulk variables are implemented by momentum matching, in this case by a constraint, 
\be\label{mommat}
0={}^-p^{n+1}_t={}^+p^{n}_t=\frac{\partial S_{n}}{ \partial x^t_{n}} .
\ee
This, in fact, is a general attribute of the present formalism: {\it equations of motion manifest themselves as canonical constraints $p^n=0$}. 

Denote by $\chi^t_n(x^e_{n-1},x^{e'}_n,\kappa^m)$ the solutions to the equations of motion (\ref{mommat}) for the internal variables $x^t_n$ where the $\kappa^m$ denote possible free parameters. Substituting this into $S_n$, one can define an effective action $\tilde S_n$ (or Hamilton's principal function) which only depends on the `true boundary' variables $x^e_{n-1}$ and $x^e_{n}$. The effective action defined in this way does not depend on the free parameters $\kappa^m$: The solutions $\chi^t_n(x^e_{n-1},x^{e'}_n,\kappa^m)$  extremize the action and the effective action is equal to the value of this extremum. The free parameters appear only if the extremum is actually not a proper extremum, rather there are constant directions along which the action (extremum) does not change. Below, however, we will redefine the effective action for the case that the equations of motion for the bulk variables imply constraints on the boundary variables.


The effective action defines the effective dynamics of the remaining degrees of freedom and, in analogy to definitions \ref{def_postleg} and \ref{def_preleg}, the {\it effective pre--} and {\it post--Legendre transformations} $\tilde{\mathbb{F}}^-\tilde{S}_n:\tilde{\cq}_{n-1}\times\tilde{\cq}_n\rightarrow T^*\tilde{\cq}_{n-1}$ and $\tilde{\mathbb{F}}^+\tilde{S}_n:\tilde{\cq}_{n-1}\times\tilde{\cq}_n\rightarrow T^*\tilde{\cq}_n$, respectively, where $\tilde{\cq}_n$ denotes the configuration manifold at step $n$ after solving (\ref{mommat}). The following theorem shows that the dynamics defined by the effective action $\tilde{S}_n$ is equivalent to the dynamics defined by $S_n$.
\begin{Theorem}
Let ${\cq}_n:=\tilde{\cq}_n\times\cq^t_n$, where $\tilde{\cq}_n$ and $\cq^t_n$ are coordinatized by $x^e_n$ and $x^t_n$, respectively. The following two diagrams commute
\begin{displaymath}
\begin{diagram}
{\cq}_{n-1}\times{\cq}_n&& \rTo^{\,\mathbb{F}^+}&&T^*{\cq}_n\\
\dTo^{\text{eom (\ref{mommat})}}& &&&\dTo^{}{{}^-p^n_t={}^+p^n_t}\\
\tilde{\cq}_{n-1}\times\tilde{\cq}_n&& \rTo^{\,\tilde{\mathbb{F}}^+}&& T^*\tilde{\cq}_n
\end{diagram}
\q\q\q\q\q
\begin{diagram}
{\cq}_{n-1}\times{\cq}_n&& \rTo^{\,\mathbb{F}^-}&&T^*{\cq}_{n-1}\\
\dTo^{\text{eom (\ref{mommat})}}& &&&\dTo^{}{{}^-p^n_t={}^+p^n_t}\\
\tilde{\cq}_{n-1}\times\tilde{\cq}_n&& \rTo^{\,\tilde{\mathbb{F}}^-}& &T^*\tilde{\cq}_{n-1}.
\end{diagram}\end{displaymath}
This also holds in the presence of constraints on the boundary data, i.e.\ on $\tilde{\cq}_{n-1}\times\tilde{\cq}_n$.
\end{Theorem}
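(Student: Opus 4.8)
The plan is to prove that the two diagrams commute by direct computation in coordinates, using the explicit form of the discrete Legendre transforms (\ref{app9}, \ref{app9b}) and the chain rule, treating the general and constrained cases uniformly. The only genuine content is that differentiating the effective action $\tilde S_n$ with respect to a boundary variable gives the same answer as differentiating $S_n$ and then imposing the equations of motion (\ref{mommat}) for the bulk variables.

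First I would fix the left diagram (the post-Legendre case). Start from a point $(x^e_{n-1},x^t_{n-1},x^e_n,x^t_n)$ in $\cq_{n-1}\times\cq_n$; applying $\mathbb{F}^+S_n$ gives $(x^e_n,x^t_n,{}^+p^n_e,{}^+p^n_t)$ with ${}^+p^n_e=\partial S_n/\partial x^e_n$ and ${}^+p^n_t=\partial S_n/\partial x^t_n$. Imposing the equation of motion ${}^+p^n_t=0$ (right vertical arrow) is exactly the bulk equation of motion (\ref{mommat}), so going down-then-right lands on the solution surface with coordinates $(x^e_n,{}^+p^n_e)$. Going the other way, one first solves (\ref{mommat}) for $x^t_n=\chi^t_n(x^e_{n-1},x^e_n,\kappa)$, substitutes into $S_n$ to form $\tilde S_n(x^e_{n-1},x^e_n)$, and then applies $\tilde{\mathbb{F}}^+\tilde S_n$ to get $(x^e_n,\tilde p^n_e)$ with $\tilde p^n_e=\partial\tilde S_n/\partial x^e_n$. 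The key computation is
\ba
\frac{\partial\tilde S_n}{\partial x^e_n}=\frac{\partial S_n}{\partial x^e_n}\Big|_{x^t_n=\chi^t_n}+\frac{\partial S_n}{\partial x^t_n}\Big|_{x^t_n=\chi^t_n}\frac{\partial\chi^t_n}{\partial x^e_n}=\frac{\partial S_n}{\partial x^e_n}\Big|_{x^t_n=\chi^t_n},\nn
\ea
where the second term vanishes precisely because $\chi^t_n$ solves (\ref{mommat}). Hence $\tilde p^n_e={}^+p^n_e$ evaluated on the solution surface, and the diagram commutes. The left diagram for $\mathbb{F}^-$ is identical with $x^e_n$ replaced by $x^e_{n-1}$ and an overall sign, noting that $S_n$ also does not depend on $x^t_{n-1}$, so the pre-momentum ${}^-p^{n-1}_t=0$ is automatic and the bulk equation of motion is again the single condition $\partial S_n/\partial x^t_n=0$.

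For the last sentence — that this still holds when the bulk equations of motion impose constraints on the boundary data — the point is that the envelope-theorem argument above is unaffected: whether or not the system $\partial S_n/\partial x^t_n=0$ is solvable for all $(x^e_{n-1},x^e_n)$ or only on a constraint submanifold of $\tilde\cq_{n-1}\times\tilde\cq_n$, on that submanifold $\chi^t_n$ still extremizes $S_n$ in the bulk directions, so $\partial S_n/\partial x^t_n|_{x^t_n=\chi^t_n}=0$ and the chain-rule identity goes through verbatim. One should, however, note (as the text announces it will redefine $\tilde S_n$ in that case) that $\tilde S_n$ is then only defined on the constraint submanifold, and the diagram is to be read with that domain; the commuting-square identity of momenta is unchanged. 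The main obstacle — really the only subtlety — is bookkeeping: making sure the bulk variables and the free parameters $\kappa^m$ are handled consistently (the effective action is $\kappa$-independent by the argument given in the paragraph preceding the theorem, so $\partial\tilde S_n/\partial x^e$ is well defined) and that "eom (\ref{mommat})" as a vertical map is interpreted as restriction to the solution surface together with the substitution $x^t_n\mapsto\chi^t_n$, rather than as a genuine map of the full cotangent bundles.
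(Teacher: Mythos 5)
Your treatment of the unconstrained case is correct and coincides with the first part of the paper's proof: the chain-rule/envelope identity $\partial\tilde S_n/\partial x^e = \partial S_n/\partial x^e$ on shell is exactly the computation given there, and it settles both squares when the bulk equations of motion do not restrict the boundary data.

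However, the final claim --- that in the presence of boundary data constraints ``the chain-rule identity goes through verbatim'' with $\tilde S_n$ simply read as a function on the constraint submanifold --- is precisely the step the paper flags as insufficient, and it is a genuine gap. If the bulk equations of motion force relations $H_h(x^e_{n-1},x^e_n)=0$, then $\chi^t_n$ (and hence your $\tilde S_n$) exists only on that submanifold of $\tilde\cq_{n-1}\times\tilde\cq_n$, so the fibre derivatives $\tilde{\mathbb F}^\pm\tilde S_n$ appearing in the diagrams are not defined: the derivative of $\tilde S_n$ in directions transverse to $\{H_h=0\}$ is ambiguous, and different extensions off the constraint surface change the would-be momenta by terms proportional to $\partial H_h/\partial x^e$, which do not vanish on shell (the $x^e$ are boundary, not bulk, variables, so $\partial S_n/\partial x^e\neq 0$ there). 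Consequently one cannot recover the full original momenta --- in particular the components conjugate to the variables $x^h$ solved for via the constraints --- from the restricted function alone, nor does the restricted effective action generate the constraints $H_h=0$ as part of the effective dynamics. The paper resolves exactly this by redefining $\tilde S_n$ with Lagrange-multiplier bulk variables $\kappa^h_n$ and terms $\Lambda^h H_h$, and then showing that the momenta obtained from the effective action agree with the original ones only for the specific choice $\Lambda^h=\frac{\partial S_n}{\partial x^{h'}}\bigl(\frac{\partial H_h}{\partial x^{h'}}\bigr)^{-1}$, together with an invertibility condition on $\partial\Lambda^h/\partial\kappa^{h'}_n$ that also constrains which of the free parameters $\kappa^m$ may be promoted to multipliers. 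None of this is reproduced or replaced by an alternative argument in your proposal, so the last sentence of the theorem remains unproved; note also that your remark that $\tilde S_n$ is $\kappa$-independent fails for the redefined effective action, whose $\kappa$-dependence is essential for it to impose the boundary data constraints.
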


\begin{proof}
 To show that the effective action defines the same dynamics as the action we started with, we have to convince ourselves that the momenta and constraints agree if we perform either of the following two procedures:\\
(a) We first define the canonical time evolution (\ref{b6}) and then impose momentum matching ${}^-p_t^{n}=0={}^+p_t^{n}$, which implements the equation of motions for the bulk variables $x^t_{n}$ (right--down in the diagrams). \\
(b) We first solve the equations of motion for the bulk variables and define the effective action. We use this effective action to then define the canonical time evolution from $(n-1)$ to $n$ (down--right in the diagrams).

~\\
The equivalence of the two methods in the absence of constraints on $\tilde{\cq}_{n-1}\times\tilde{\cq}_n$ follows from
\ba
{}^{-}\tilde p^{n-1}_e &=& -\frac{ \partial \tilde S_n}{\partial x^e_{n-1}} \;=\; - \frac{\partial S_n}{\partial x^e_{n-1}} - \frac{\partial S_n}{\partial x^t_n} \frac{\partial \chi^t_n}{x^e_{n-1}} \;=\;  -\frac{\partial S_n}{\partial x^e_{n-1}} \;=\;{}^-p^{n-1}_e,
\ea
where the last equation holds due to the equations of motion $ \frac{\partial S_n}{\partial x^t_n}=0$ for the bulk variables. A similar calculation shows equality of the post--momenta.

However, this argument is not sufficient if the equations of motion for the bulk variables impose constraints between the boundary variables $x^e_{n}$ and $x^e_{n-1}$. These can come as proper canonical constraints involving only configuration variables at one time---either $x^e_n$ or $x^e_{n-1}$.  Such constraints are called holonomic and arise in this form after momentum matching for the bulk momenta. The other type of constraints that can occur are relations between variables at time $(n-1)$ and time $n$ and are thus not proper canonical constraints. (Canonical constraints are equations of motions involving canonical data of only one time step.) We will refer to both kinds of constraints as boundary data constraints. 

Let us assume that, after solving for the bulk variables, one finds $M$ independent  boundary data constraints $H_h(x^e_{n-1},x^e_{n})=0$ (forming an irreducible set). The form of these constraints is not unique---but locally these constraints can be solved for $M$ variables $x^h=\chi^h(x^{\bar{e}}_{n-1},x^{\bar{\bar{e}}}_n )$ out of the $x^e_{n-1},x^e_n$, where $\bar{e}$ denotes the remaining variables at $(n-1)$ and $\bar{\bar{e}}$ the remaining variables at $n$. We furthermore assume that the matrix of derivatives
$\frac{\partial H_h}{\partial x^{h'}} $
is (locally) invertible. This allows us to express the derivative of $\chi^{h'}$ with respect to $x^{\bar{e}}_{n-1}$
\ba
\frac{\partial \chi^{h'}}{\partial x^{\bar{e}}_{n-1}} &=& - \left(  \frac{\partial H_{h} }{\partial x^{h'}}  \right)^{-1} \frac{\partial H_{h}}{\partial x^{\bar{e}}_{n-1}}  ,
\ea
which can be derived by taking the derivative of $H_h(x^{\bar{e}}_{n-1},x^{\bar{\bar{e}}}_n,\chi^{h'}(x^{\bar{e}}_{n-1},x^{\bar{\bar{e}}}_n ))$. 

Also, if we have the equations of motion for the bulk variables $x^t_n$ resulting in $M$ (independent and irreducible) relations restricting the boundary data, we will have at least $M$ free parameters $\kappa^h$ (among the $\kappa^m$ introduced below (\ref{mommat})) for the solutions $\chi^t_n$.  
We will take these $\kappa^h=\kappa^h_n$ as additional bulk variables for the effective action, acting as Lagrange multipliers.  If there are more than $M$ free parameters a suitable choice has to be made. Below we will derive a condition for this choice.

Thus, for the effective action to impose the boundary data constraints we define
\ba\label{wed20}
\tilde S_n(x^e_{n-1},x^e_n, \kappa^h_n)&:=& S_n(x^{\bar{e}}_{n-1},x^{\bar{\bar{e}}}_n, \chi^h(x^{\bar{e}}_{n-1},x^{\bar{\bar{e}}}_n ), \chi^t_n(x^{\bar{e}}_{n-1},x^{\bar{\bar{e}}}_n, \kappa^m)) +\nn\\&&\q\q\q\q\q\q\q  \Lambda^h(\kappa^{h'}_n,x^{\bar{e}}_{n-1},x^{\bar{\bar{e}}}_n) H_h(x^e_{n-1},x^e_{n}) \, .
\ea
The functions $\Lambda^h$ will be determined such that the momenta obtained by the two different methods (a) and (b) coincide.  But we need to assume that $\frac{\partial\Lambda^h{}'}{\partial\kappa^h_n}$ is invertible so that the equations of motion for the set $\{\kappa^h_n\}$ impose the constraint set  $\{H^{h}\}$. (The first term on the right hand side of (\ref{wed20}), as previously argued, does not depend on $\kappa^m$.) 

To see this, consider the momenta at $(n-1)$ as defined from the effective action (we omit terms proportional to $H_h$ as these vanish if the equations of motion for the $\kappa^h_{n}$ are imposed)
\ba
{}^{-}\tilde p_{\bar{e}}^{n-1} &=&
-\frac{\partial S_n}{\partial x^{\bar{e}}_{n-1} } -\frac{\partial S_n}{\partial x^h} \frac{\partial \chi^h}{\partial x^{\bar{e}}_{n-1}} - \Lambda^h \frac{\partial H_h}{\partial x^{\bar{e}}_{n-1}}
\nn\\
&=&
-\frac{\partial S_n}{\partial x^{\bar{e}}_{n-1} } - \frac{\partial H_h}{\partial x^{\bar{e}}_{n-1}} \left(  \Lambda^h -    \frac{\partial S_n}{\partial x^{h'}}  \left(  \frac{\partial H_{h} }{\partial x^{h'}}  \right)^{-1}   \right) \, ,\nn\\
{}^{-}\tilde p_{\bar{h}}^{n-1} &=& -\Lambda^h \frac{\partial H^h}{\partial x^{\bar h}_{n-1}} \q 
\ea
where $x^{\bar h}_{n-1}$ are those variables among the $x^h$ which are associated to the time $(n-1)$. 

Comparing with the momenta obtained from the original action,
\ba
{}^{-} p_{\bar{e}}^{n-1} \;=\; -\frac{\partial S_n}{\partial x^{\bar{e}}_{n-1} }   ,\q\q
{}^{-} p_{\bar{h}}^{n-1} \;=\;  -\frac{\partial S_n}{\partial x^{\bar{h}}_{n-1} }  ,
\ea
(evaluated on the surface $\{x^t_n=\chi^t_n,H^h=0\}$), we see that we need to set 
\ba\label{wed1}
\Lambda^h= \frac{\partial S_n}{\partial x^{h'}}  \left(  \frac{\partial H_{h} }{\partial x^{h'}}  \right)^{-1}   \q 
\ea
to obtain for the pre--momenta $\tilde p^{n-1}=p^{n-1}$. The same  line of arguments can be made for the post--momenta.

Using the solutions 
$x^t_n=\chi^t_n(x^{\bar{e}}_{n-1},x^{\bar{\bar{e}}}_n, \kappa^m)$ and $x^h=\chi^h(x^{\bar{e}}_{n-1},x^{\bar{\bar{e}}}_n )$ in (\ref{wed1}) will make the $\Lambda^h$ dependent on $\kappa^h_n$.
The assumption on the invertibility of  $\frac{\partial\Lambda^h{}'}{\partial\kappa^h_n}$ translates then into the assumption that
\ba
\left(\frac{\partial}{\partial x^t_n} \frac{\partial S_n}{\partial x^{h'}} \right)\frac{\partial \chi^t_n}{\partial \kappa^h}
\ea
is invertible. This, in particular, might restrict the choice of the $M$ parameters $\kappa^h$ among the free parameters $\kappa^m$ for the solutions $x^t_n=\chi^t_{n}$. 

This finally shows that the dynamics defined by the original action and the effective action is equivalent. 
\end{proof}

In the case of constraints on the boundary variables one has to be careful and incorporate these constraints into the effective action. 
For instance, a constraint on the boundary data arises if 
we have an action of the form
\ba
S_n(   x^e_{n-1},x^e_n,x^t_n) &=&     S'_n(   x^e_{n-1},x^e_n   )  +            x^t_n H(x^e_{n-1},x^e_n)  .
\ea
 In this case $x^t_n$ will be undetermined by the equations of motion for $x^t_n$. Thus, we will have $x^t_n=\kappa$. For $\Lambda$ we find according to (\ref{wed1})
\ba
\Lambda&=& \kappa + \frac{\partial S'_n}{\partial x^{h'}}  \left(  \frac{\partial H_{h} }{\partial x^{h'}}  \right)^{-1}   .
\ea
Hence, $\Lambda$  replaces the bulk variable that acted as a Lagrange multiplier in the original action.

Let us briefly return to an effective evolution move $n\rightarrow n+k$, where we treat all intermediate variables as bulk variables.  
 The previous discussion implies that it does not matter \\
(i) in which ordering one integrates out intermediate evolution steps, and \\
(ii) in which ordering one engages Legendre transformations and equations of motion\\
in order to evolve from $n$ to $n+k$.


Finally, in appendix \ref{app_bulk}, we show that, under the simplifying assumption of absence of boundary data constraints, 
the effective canonical two--form $\tilde{w}_n$ on $T^*\tilde{\cq}_n$ and the effective Lagrangian two--forms $\tilde{\Omega}_n$ on $\tilde{\cq}_{n-1}\times\tilde{\cq}_n$ and $\tilde{\Omega}_{n+1}$ on $\tilde{\cq}_{n}\times\tilde{\cq}_{n+1}$ are, indeed, correctly related by pull--back 
\ba\label{omeff}
\tilde{\Omega}_n=(\tilde{\mathbb{F}}^+\tilde{S}_n)^*\tilde{\omega}_n,\q\q\q\tilde{\Omega}_{n+1}=(\tilde{\mathbb{F}}^-\tilde{S}_n)^*\tilde{\omega}_n
\ea
via the effective post-- and pre--Legendre transformations $\tilde{\mathbb{F}}^\pm\tilde{S}_n$, respectively. 


\section{Local dynamics of variational discrete systems}\label{sec_local}

In the previous section we have only considered {\it global} time evolution moves, i.e.\ time evolution moves such that no subsets of variables at any two neighbouring time steps $n, n+1$ coincide (except possibly in the boundary). In this section, let us now also consider {\it local} evolution moves which only evolve subsets of the canonical data of a given time step and which for distinction from the global moves we shall label by $k\in\mathbb{Z}$, instead of $n\in\mathbb{Z}$. That is, different discrete time steps $k,k+1$, may partially overlap such that they involve coinciding subsets of canonical variables. Such time evolution moves occur, e.g.\ in simplicial gravity or lattice field theory when only small regions of a discrete hypersurface $\Sigma$ are evolved in discrete time such that $\Sigma_k\cap\Sigma_{k+1}\neq\emptyset$; the Pachner moves \cite{Dittrich:2011ke} and the tent moves \cite{Dittrich:2009fb,Dittrich:2011ke} are particular examples of such local moves (see these reference for their canonical formulation). 
 Such local evolution moves define the most basic evolution steps from which more complicated time evolutions can be constructed. In particular, the Pachner moves constitute an elementary and ergodic class of local moves applicable to arbitrary triangulations such that one can map between any (finite) triangulations of fixed topology by finite sequences of these moves \cite{pachner1,pachner2}.

%



\subsection{Coinciding subsets of variables and {\it momentum updating}}

We will formulate the evolution equations corresponding to the local evolution moves directly in the Hamiltonian picture. The reason is twofold:
\begin{itemize}
\item[(i)] In the discrete the Lagrangian picture (see section \ref{sec_lagr}) necessitates the complete configuration data of two consecutive time steps. But the action contribution $S_{k+1}$ governing a local evolution move $k\rightarrow k+1$ 
only contains data which is directly involved in the evolution move. 
\item[(ii)] Subsets of data in $\cq_k$ and $\cq_{k+1}$ coincide. Accordingly, the contribution $S_{k+1}$ alone does {\it not} suffice to define the Legendre transformations (definitions \ref{def_postleg} and \ref{def_preleg}) from $\cq_{k}\times\cq_{k+1}$ to the phase spaces $\cp_k,\cp_{k+1}$. 
\end{itemize}
In the Hamiltonian picture 
the configuration data of the second time step are replaced by the momenta which are defined at the same time step as the configuration data. 
That is, in the Hamiltonian picture we need only one time step to encode the canonical data; if subsets of variables coincide at consecutive steps, the canonical data merely need to be appropriately updated in the course of the evolution move (while the momenta themselves are defined using the global structure of section \ref{sec_globcan}).

In section \ref{sec_globcan}, we saw that the action contribution $S_{n+1}$ of a global evolution move $n\rightarrow n+1$ is the generating function of the first kind of the corresponding global Hamiltonian time evolution map $\ch_n$. In this section, we shall introduce the generating functions of the local Hamiltonian time evolution maps, which, for distinction, we shall denote by $\fh_k:\cp_k\rightarrow\cp_{k+1}$, corresponding to local time evolution moves $k\rightarrow k+1$. 

Firstly, suppose some set of variables $x^b$ appears in several time steps, e.g., such that $x^b_{k+1} \equiv x^b_{k}$. Assume $x^b$ is not involved in the local time evolution such that
\ba\label{c1}
 x^b_{k+1}\,=\,x^b_k \,,\q  \q\q p_b^{k+1}\,=\,p_b^k  .
 \ea
These evolution equations cannot be generated by the action, since $S_{k+1}$ depends on neither $x^b_k$ nor on $x^b_{k+1}$, given that $x^b$ is not included in the local dynamics. Instead, the identity transformation---either governed by a generating function of the second (depending on old configuration and new momentum variables) or the third kind (depending on new configuration and old momentum variables)---is appropriate:
 \ba\label{c2}
 &&G_2(x^k_b,p^{k+1}_b)\,=\,-x^b_k p_b^{k+1} ,\q\q\q \,p_b^k\,=\,-\frac{\partial G_2}{\partial x_b^k}\,=\,p_b^{k+1} ,\q\,\,\,
 x_b^{k+1}=\,-\frac{\partial G_2}{\partial p_b^{k+1}}\,=\,x_b^k   \nn\\
 &&G_3(x^{k+1}_b,p^{k}_b)\,=\,\;\, x^b_{k+1} p_b^{k} ,\q\q\,\,\, p_b^{k+1}\,=\,\;\,\frac{\partial G_3}{\partial x^b_{k+1}}\,=\,p_b^{k} ,\q\q\;\,\;\,
 x_b^{k}\,=\,\,\;\frac{\partial G_3}{\partial p^b_{k}}\,=\,x^b_{k+1}\, . \nn
 \ea

Next, let us consider the case in which some configuration variables do {\it not} evolve $x^e_{k}=x^e_{k+1}$, yet in which either $x^e_k$ or $x^e_{k+1}$ appear in $S_{k+1}$ (but not both for the same index $e$) such that their conjugate momenta are transformed. The additivity of the action implies either  
 \ba\label{c3}
  p_e^{k}\,=\,p_e^{k+1}-\frac{\partial S_{k+1}(x_k)}{\partial x_{k}^e} \q \q \text{or}\q\q
 p_e^{k+1}\,=\,p_e^k+\frac{\partial S_{k+1}(x_{k+1})}{\partial x_{k+1}^e} ,
 \ea
which we shall call (\ref{c3}) {\it momentum updating}. Both ways of {\it momentum updating} (\ref{c3}) can be summarized into 
\ba
p_e^{k+1}=p_e^k + \frac{\partial S_{k+1}(x_{k+1})}{\partial x_{k+1}^e} + \frac{\partial S_{k+1}(x_k)}{\partial x_{k}^e}  .
\ea
because one of the derivatives will always be zero.

The generating function for {\it momentum updating} is of second or third kind: either add $G_2$ or $G_3$, respectively, to $S_{k+1}$, the latter of which either depends only on the old configuration variables $x_k^e$ or only on the new configuration variables $x_{k+1}^e$. 
The same construction also applies to the case of varying numbers of canonical variables when formulated on extended phase spaces, as we shall see in the following subsection.

\subsection{{ Preservation of symplectic structures and constraints under momentum updating}}\label{sec_momup}

The preservation of the symplectic structure, as proven in theorem \ref{thm_presymp} for singular systems and discussed in section \ref{sec_psext} for varying phase spaces, holds for {\it global} evolution moves where variables at $n$ and $(n+1)$ do {\it not} coincide. 
We still have to investigate the preservation of the symplectic structure under local evolution moves which involve coinciding subsets of variables and proceed by {\it momentum updating} for {\it all} canonical pairs on the extended phase space. To circumvent the problem of changing phase space dimensions, we shall work on extended phase spaces $\bar{\cp}_k,\bar{\cp}_{k+1}$ as introduced in section \ref{sec_psext}. 

At this stage we shall distinguish {four} types of local evolution moves. { Any other conceivable kind of local moves can be treated in complete analogy.}\footnote{In particular, we shall ignore the case where `bulk' variables are involved in the local move. These cases can be treated via the recipe provided in section \ref{sec_bulk}.} Let us detail the corresponding evolution equations case by case and subsequently consider the symplectic structure.


~\\
\underline{\bf Type I:} The move introduces `new variables' but does not remove `old variables.' Assume that $K$ `new variables' arise.  Accordingly, extend the phase space at time $k$ by $K$ pairs $(x_k^n,p^k_n)$ which correspond to the $K$ canonical pairs $(x_{k+1}^n,p^{k+1}_n)$ at time $k+1$. Additionally, pairs $(x^b,p_b)$ occur which do not change during this evolution move and variables $(x^e,p_e)$ for which only the momenta are updated. The local Hamiltonian evolution map $\fh_k$ for type I is thus given by the following momentum updating
\ba\label{anh1}
x^b_k&=&x^b_{k+1}  \, ,\q\q\q p^{k+1}_b\,=\,p^k_b \; , \\
x^e_k&=&x^e_{k+1}\,,\q\q\q p^{k+1}_e\,=\,p^k_e+\frac{\partial S_{k+1}(x^e_{k+1}, x^n_{k+1})}{\partial x^e_{k+1}}  
\; ,    \label{anh1b}  \\
p^{k}_n&=&0 \,,\q\q\q\q \,\,\,p^{k+1}_n\,=\, \frac{\partial S_{k+1}(x^e_{k+1},x^n_{k+1})  }{\partial x^n_{k+1}}\,. \label{anh1c} 
\ea
We choose $S_{k+1}$ to be a function of the variables at time $k+1$. Equations (\ref{anh1c}) contain $K$ constraints { $C_n^k=p^k_n$ which are simultaneously pre-- and post--constraints, as well as} $K$ post--constraints ${}^+C_n^{k+1}=p^{k+1}_n- \frac{\partial S_{k+1}(x^e_{k+1},x^n_{k+1})  }{\partial x^n_{k+1}}$. Both $x^n_k,x^n_{k+1}$ are undetermined.

\begin{Example}
\emph{Consider a scalar field living on the vertices of a 2D space-time triangulation. The evolution move given by the 1--2 Pachner move (see figures \ref{fig_12}, \ref{fig_12b}) introduces one new vertex $v$ and thus one `new' field variable $\phi_{k+1}^v$ at $k+1$, while preserving all other vertices of $\Sigma_k$. Hence, this move is of type I. Using (\ref{saction}), the momentum updating map reads 
\ba
\phi^b_k&=&\phi^b_{k+1}  \, ,\q\q\q \pi^{k+1}_b\,=\,\pi^k_b\,,\q\q\q\q\q\q\q\q\q\q\q\q\q\q\q b=1,4,5 \; ,\nn \\
\phi^e_k&=&\phi^e_{k+1}\,,\q\q\q \pi^{k+1}_e\,=\,\pi^k_e+\phi^e_{k+1}-\f{1}{2}\left(\phi^v_{k+1}+\phi^{e+1}_{k+1}\right)\,,\q\q\q\, e=2,3\; ,    \nn  \\
\pi^{k}_v&=&0 \,,\q\q\q\q \,\,\,\pi^{k+1}_v\,=\, \phi^v_{k+1}-\f{1}{2}\left(\phi^2_{k+1}+\phi^3_{k+1}\right)\,, \nn
\ea
where $e+1=3$ if $e=2$ and $e+1=2$ if $e=3$ (see figure \ref{fig_12} for further notation).
}

\emph{Furthermore, the 1--3 Pachner move in 3D Regge Calculus and the 1--4 and 2--3 Pachner moves in 4D Regge Calculus are of type I (see \cite{Dittrich:2011ke} for details).}
\begin{center}
\begin{figure}[htbp!]
\psfrag{s}{\small$\Sigma_k$}
\psfrag{sk}{\small$\Sigma_{k+1}$}
\psfrag{v}{\small$v$}
\psfrag{fv}{\footnotesize$\phi^v_{k+1}$}
\psfrag{1}{\footnotesize $\phi^1_{k+1}$}
\psfrag{2}{\footnotesize$\phi^2_{k+1}$}
\psfrag{3}{\footnotesize$\phi^3_{k+1}$}
\psfrag{4}{\footnotesize$\phi^4_{k+1}$}
\psfrag{5}{\footnotesize$\phi^5_{k+1}$}
\psfrag{a}{\footnotesize$\phi^1_k$}
\psfrag{b}{\footnotesize$\phi^2_k$}
\psfrag{c}{\footnotesize$\phi^3_k$}
\psfrag{d}{\footnotesize$\phi^4_k$}
\psfrag{e}{\footnotesize$\phi^5_k$}
\hspace{-.5cm}
\begin{subfigure}[b]{.22\textwidth}
\centering
\includegraphics[scale=.5]{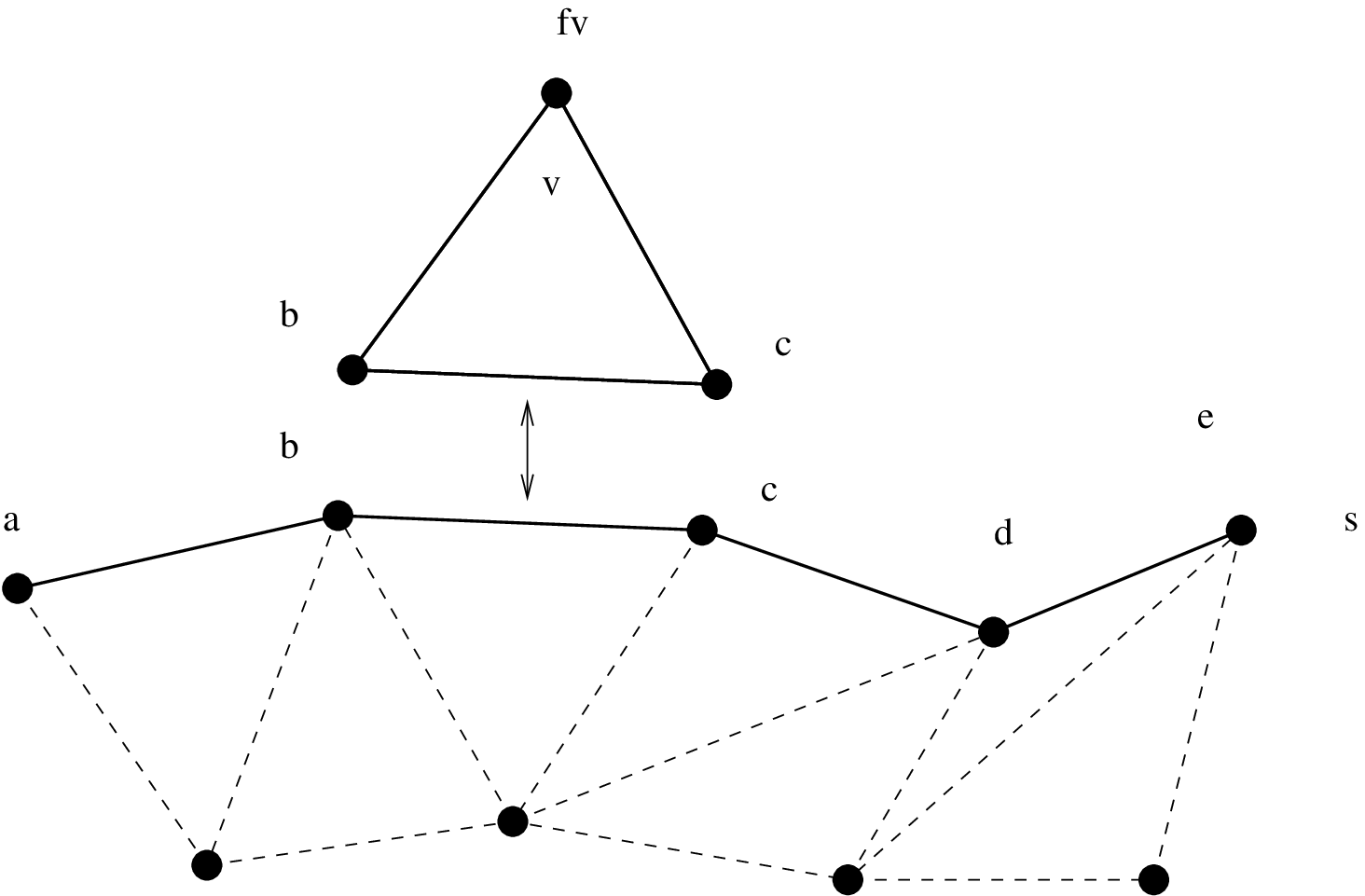}
\centering
\caption{\small }
\end{subfigure}
\hspace*{4.8cm}
\begin{subfigure}[b]{.22\textwidth}
\centering
\includegraphics[scale=.5]{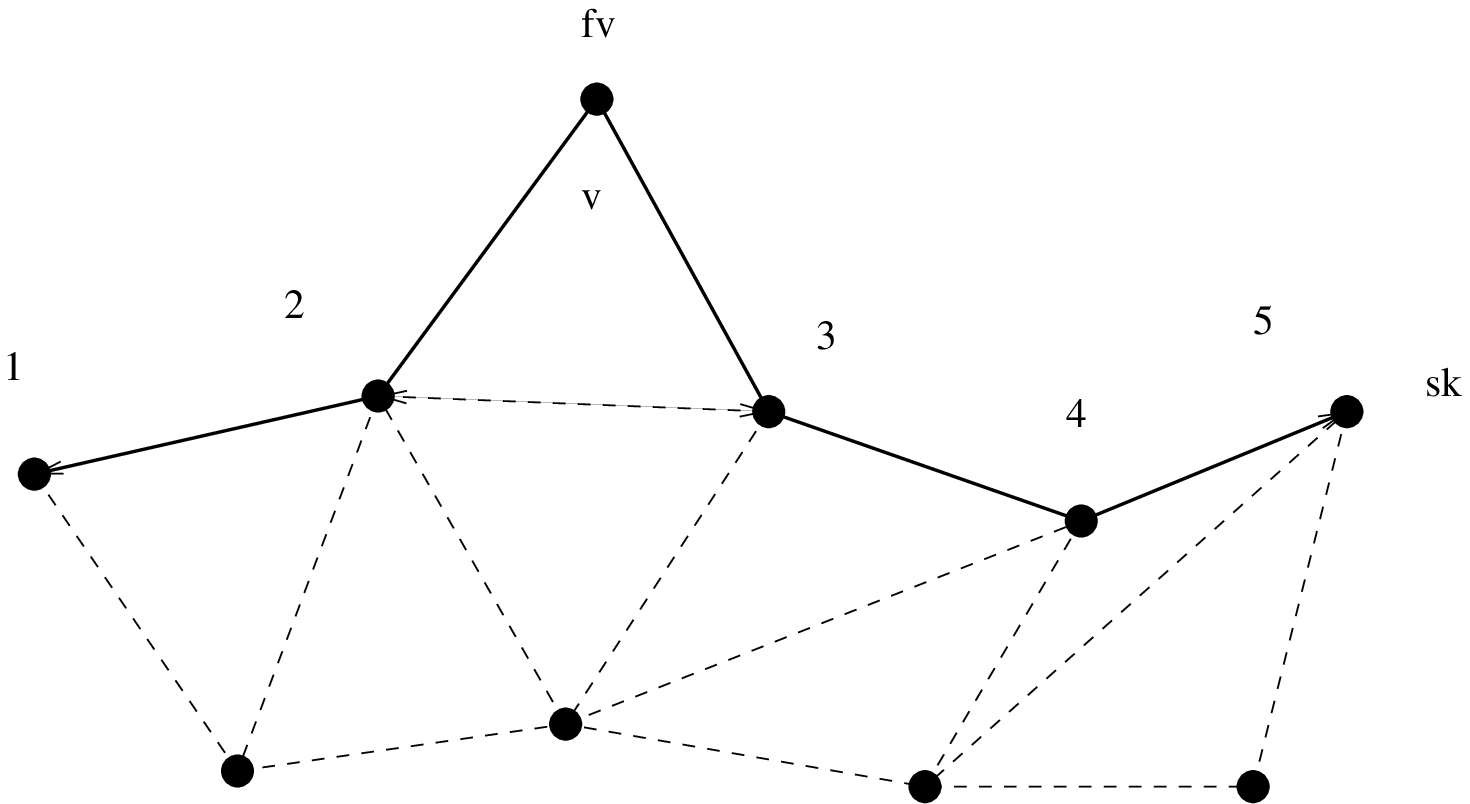}
\caption{\small }
\end{subfigure}
\caption{\small The 1--2 Pachner move glues a triangle onto the 1D hypersurface $\Sigma_k$ and introduces one new field variable $\phi^v_{k+1}$ at vertex $v$ at step $k+1$ and is of type I (see figure \ref{fig_12b} for the 1D perspective).}\label{fig_12}
\end{figure}
\end{center}
\begin{SCfigure}
\psfrag{k}{}
\psfrag{k1}{}
\includegraphics[scale=.5]{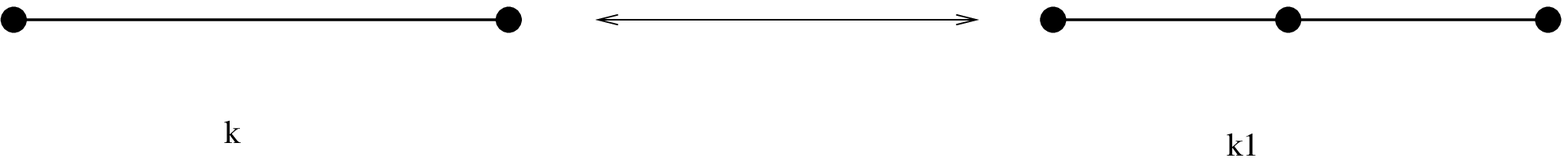}
\caption{\small The 1--2 and 2--1 Pachner moves in 1D.}\label{fig_12b}
\end{SCfigure}
\end{Example}

\vspace*{-1cm}
\underline{\bf Type II:} The move removes `old variables' but does not introduce `new variables'. This is the time reverse of type I. If $K$ old variables are removed, extend the phase space at step $k+1$ by $K$ pairs $(x^o_{k+1},p^{k+1}_o)$. We choose $S_{k+1}$ to be a function of the variables of time $k$. The local Hamiltonian evolution or momentum updating map $\fh_k$ for type II reads
\ba\label{case2}
x^b_k&=&x^b_{k+1}  \, ,\q\q\q p^{k+1}_b\,=\,p^k_b \; , \\
x^e_k&=&x^e_{k+1}\,,\q\q\q\q p^{k}_e\,=\,p^{k+1}_e-\frac{\partial S_{k+1}(x^e_{k}, x^o_{k})}{\partial x^e_{k}}  
\; ,    \label{case2b}  \\
p^{k+1}_o&=&0 \,,\q\q\q\q\q \,\,\,p^{k}_o\,=\, -\frac{\partial S_{k+1}(x^e_{k},x^o_{k})  }{\partial x^o_{k}}\,. \label{case2c} 
\ea
Equations (\ref{case2c}) contain $K$ pre--constraints ${}^-C^k_o=p^{k}_o+\frac{\partial S_{k+1}(x^e_{k},x^o_{k})  }{\partial x^o_{k}}$ and $K$ constraints ${}^+C^{k+1}_o=p^{k+1}_o$, {which are simultaneously pre-- and post--constraints}, while $x^o_{k+1}$ remains undetermined. {We shall denote the partial pre--constraint surface defined {\it only} by the ${}^-C^k_o$ in the extended phase space $\bar{\cp}_k$ by $\ck^-_k\subset\bar{\cp}_k$. No further pre--constraints are created in the course of the move.\footnote{Subsequent evolution moves may lead to additional pre--constraints on the same data.}}

\begin{Example}
\emph{Consider, again, the scalar field living on the vertices of a 2D triangulation. The 2--1 Pachner evolution move (see figures \ref{fig_21} and \ref{fig_12b}) is the time reverse of the 1--2 Pachner move and thus of type II. It removes a vertex $v^*$ and, accordingly, an `old' field variable $\phi^{v^*}_k$ from $\Sigma$. Employing (\ref{saction}) yields its momentum updating map
\ba
\phi^b_k&=&\phi^b_{k+1}  \, ,\q\q\q \pi^{k+1}_b\,=\,\pi^k_b\,,\q\q\q\q\q\q\q\q\q\q\q\q\q\q\,\,\, b=1,4 \; ,\nn \\
\phi^e_k&=&\phi^e_{k+1}\,,\q\q\q\q \pi^{k}_e\,=\,\pi^{k+1}_e-\phi^e_{k}+\f{1}{2}\left(\phi^{v^*}_{k}+\phi^{e+1}_{k}\right)\,,\q\q\q e=2,3\; ,   \nn  \\
\pi^{k+1}_{v^*}&=&0 \,,\q\q\q\q\q \,\,\,\pi^{k}_{v^*}\,=\, -\phi^{v^*}_{k}+\f{1}{2}\left(\phi^2_{k}+\phi^3_{k}\right)\,,\nn
\ea
where $e+1=3$ if $e=2$ and $e+1=2$ if $e=3$ (see figure \ref{fig_21} for further notation).}

\emph{Moreover, the 3--1 Pachner move in 3D Regge Calculus and the 3--2 and 4--1 Pachner moves in 4D Regge Calculus are of type II (see \cite{Dittrich:2011ke} for details).}
\begin{center}
\begin{figure}[htbp!]
\psfrag{s}{\small$\Sigma_k$}
\psfrag{sk}{\small$\Sigma_{k+1}$}
\psfrag{v}{\small$v^*$}
\psfrag{fv}{\footnotesize$\phi^{v^*}_{k}$}
\psfrag{1}{\footnotesize $\phi^1_{k+1}$}
\psfrag{2}{\footnotesize$\phi^2_{k+1}$}
\psfrag{3}{\footnotesize$\phi^3_{k+1}$}
\psfrag{4}{\footnotesize$\phi^4_{k+1}$}
\psfrag{5}{\footnotesize$\phi^5_{k+1}$}
\psfrag{a}{\footnotesize$\phi^1_k$}
\psfrag{b}{\footnotesize$\phi^2_k$}
\psfrag{c}{\footnotesize$\phi^3_k$}
\psfrag{d}{\footnotesize$\phi^4_k$}
\psfrag{e}{\footnotesize$\phi^5_k$}
\begin{subfigure}[b]{.22\textwidth}
\centering
\includegraphics[scale=.5]{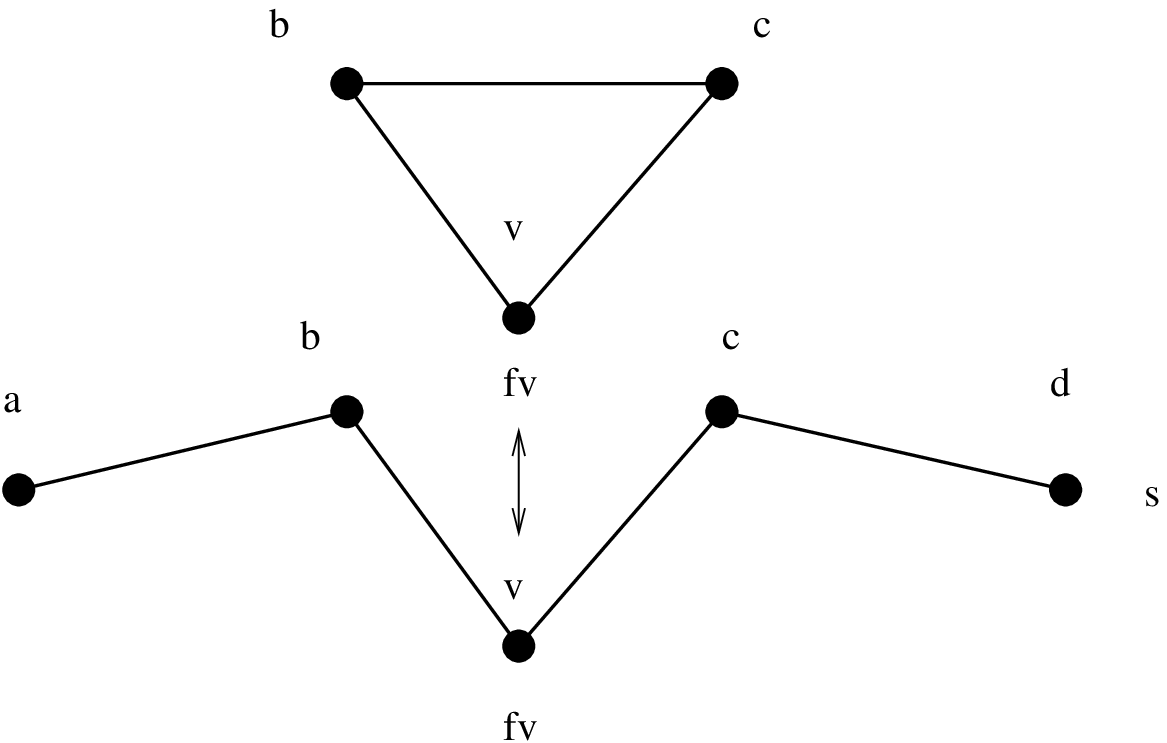}
\centering
\caption{\small }
\end{subfigure}
\hspace*{4.4cm}
\begin{subfigure}[b]{.22\textwidth}
\centering
\includegraphics[scale=.5]{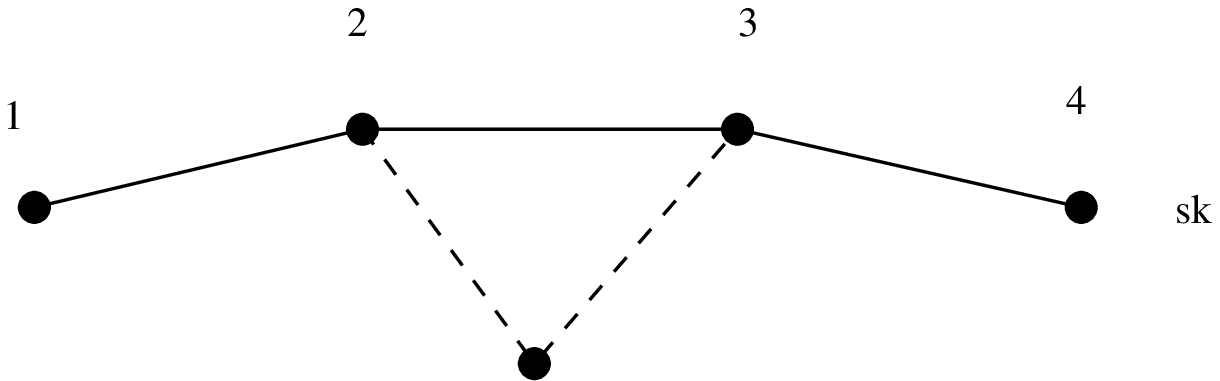}
\vspace*{.005cm}
\caption{\small }
\end{subfigure}
\caption{\small The 2--1 Pachner move glues a triangle onto the 1D hypersurface $\Sigma_k$ such that vertex $v^*$ disappears at step $k+1$ together with the variable $\phi^{v^*}_{k}$ and is of type II (see figure \ref{fig_12b} for the 1D perspective).}\label{fig_21}
\end{figure}
\end{center}
\end{Example}

\vspace*{-1cm}
\underline{\bf Type III:} The move removes $K$ `old variables' and also introduces $K$ `new variables'. 
Extend the initial phase space by $K$ pairs $(x^n_k,p^k_n)$ and the final phase space by $K$ pairs $(x^o_{k+1},p^{k+1}_o)$. The momentum updating map $\fh_k$ is given by
\ba\label{anh21}
x^b_k&=&x^b_{k+1}  \, ,\q\q\q p^{k+1}_b\,=\,p^k_b \, , \\
x^e_k&=&x^e_{k+1}\,,\q\q\q p^{k+1}_e\,=\,p^k_e+\frac{\partial S_{k+1}(  x^e_{k+1},x^o_k,x^n_{k+1})  }{\partial x^e_{k+1}} 
\, ,\\
p^{k}_n&=&0 \,,\q\q\q\q\,\, \,p^{k+1}_n\,=\, \frac{\partial S_{k+1}(  x^e_{k+1},x^o_k,x^n_{k+1}) }{\partial x^n_{k+1}}\,\,
 , \label{anh21c}\\
p^{k+1}_o &=& 0 \,,\q\q\q\q\,\, \,p^{k}_o\q\,=\, -\frac{\partial S_{k+1} (  x^e_{k+1},x^o_k,x^n_{k+1}) }{\partial x^o_{k}}\,
 . \label{anh21d}
\ea
(We could equally well have chosen to let $S_{k+1}$ depend on $x^e_k$, instead of $x^e_{k+1}$.) {The $p^k_n=0$ and the $p^{k+1}_o=0$ each constitute $K$ constraints that result from phase space extensions and are each simultaneously pre-- and post--constraints. Furthermore, if the $K\times K$ matrix $\f{\p^2S_{k+1}}{\p x^o_k\p x^n_{k+1}}$ is of rank $K-\kappa$, then $\kappa$ non--trivial pre--constraints at $k$ and $\kappa$ non--trivial post--constraints at $k+1$ will arise from the second equations in (\ref{anh21c}) and (\ref{anh21d}), respectively. Among the $x^n_{k+1}$, $K-\kappa$ variables will be determined via the second equation in (\ref{anh21d}), while the remaining $\kappa$ variables will remain ({\it a priori}) free. Furthermore, the $x_k^n$ and $x_{k+1}^o$ remain undetermined and can be gauge fixed to arbitrary values, for instance $x^n_k=x^n_{k+1}$ and $x^o_{k+1}=x^o_k$. The above $\kappa$ non--trivial pre--constraints at $k$ alone define the partial pre--constraint surface $\ck^-_k\subset\bar{\cp}_k$.}

\begin{Example}
\emph{Consider a scalar field living on a 2D space-time quadrangulation where each $\Sigma_k$ is given by a `zig-zag line'. Adding a square as a fundamental building block annihilates one `old' vertex $v^*$, introduces a `new' vertex $v$ and preserves the remaining vertices in the move $k\rightarrow k+1$ (see figure \ref{fig_typeiii}). Hence, this evolution move annihilates one field variable $\phi_{k+1}^{v}$, introduces one new field variable $\phi_{k}^{v^*}$ and is of type III. The action of a scalar field on a (rectangular) square reads 
\ba
S_\Box=\sum_{i=1}^4\left((\phi^i)^2-\phi^i\phi^{i+1}\right),\nn
\ea
where $i$ labels the vertices of the square and $i+1=5$ coincides with $i=1$ due to periodicity. The corresponding momentum updating map is thus given by
\ba
\phi^b_k&=&x^b_{k+1}  \, ,\q\q\q \pi^{k+1}_b\,=\,\pi^k_b\,,\q\q\q\q\q\q\q\q\q\q\q\q\q b=1,4,5 \, , \nn\\
\phi^e_k&=&x^e_{k+1}\,,\q\q\q \pi^{k+1}_e\,=\,\pi^k_e+2\phi^e_{k+1}-\phi^{v^*}_k-\phi^v_{k+1}\,,\q\q\q\, e=2,3\, ,\nn\\
\pi^{k}_v&=&0 \,,\q\q\q\q\,\, \,\pi^{k+1}_v\,=\, 2\phi^v_{k+1}-\phi^2_{k+1}-\phi^3_{k+1}\,\,
 ,\nn \\
\pi^{k+1}_{v^*} &=& 0 \,,\q\q\q\q\q\,\pi^{k}_{v^*}\,=\, -2\phi^{v^*}_k+\phi^2_k+\phi^3_k\,\nn
\ea
(see figure \ref{fig_typeiii} for further notation). Notice that ${}^+C^{k+1}_v:=\pi^{k+1}_v-2\phi^v_{k+1}+\phi^2_{k+1}+\phi^3_{k+1}$ defines a post--constraint at $k+1$, while ${}^-C^k_{v^*}:=\pi^{k}_{v^*} +2\phi^{v^*}_k-\phi^2_k-\phi^3_k$ constitutes a pre--constraint at $k$.}

\emph{Furthermore, the 2--2 Pachner move in 3D Regge Calculus and the tent moves (after solving the tent pole equation) in any dimension are of type III (see \cite{Dittrich:2011ke,Dittrich:2009fb} for details).}
\begin{center}
\begin{figure}[htbp!]
\psfrag{s}{\small$\Sigma_k$}
\psfrag{sk}{\small$\Sigma_{k+1}$}
\psfrag{v}{\small$v$}
\psfrag{vs}{\small$v^*$}
\psfrag{fv}{\footnotesize$\phi^v_{k+1}$}
\psfrag{fvs}{\footnotesize$\phi^{v^*}_k$}
\psfrag{1}{\footnotesize $\phi^1_{k+1}$}
\psfrag{2}{\footnotesize$\phi^2_{k+1}$}
\psfrag{3}{\footnotesize$\phi^3_{k+1}$}
\psfrag{4}{\footnotesize$\phi^4_{k+1}$}
\psfrag{5}{\footnotesize$\phi^5_{k+1}$}
\psfrag{a}{\footnotesize$\phi^1_k$}
\psfrag{b}{\footnotesize$\phi^2_k$}
\psfrag{c}{\footnotesize$\phi^3_k$}
\psfrag{d}{\footnotesize$\phi^4_k$}
\psfrag{e}{\footnotesize$\phi^5_k$}
\begin{subfigure}[b]{.22\textwidth}
\centering
\includegraphics[scale=.5]{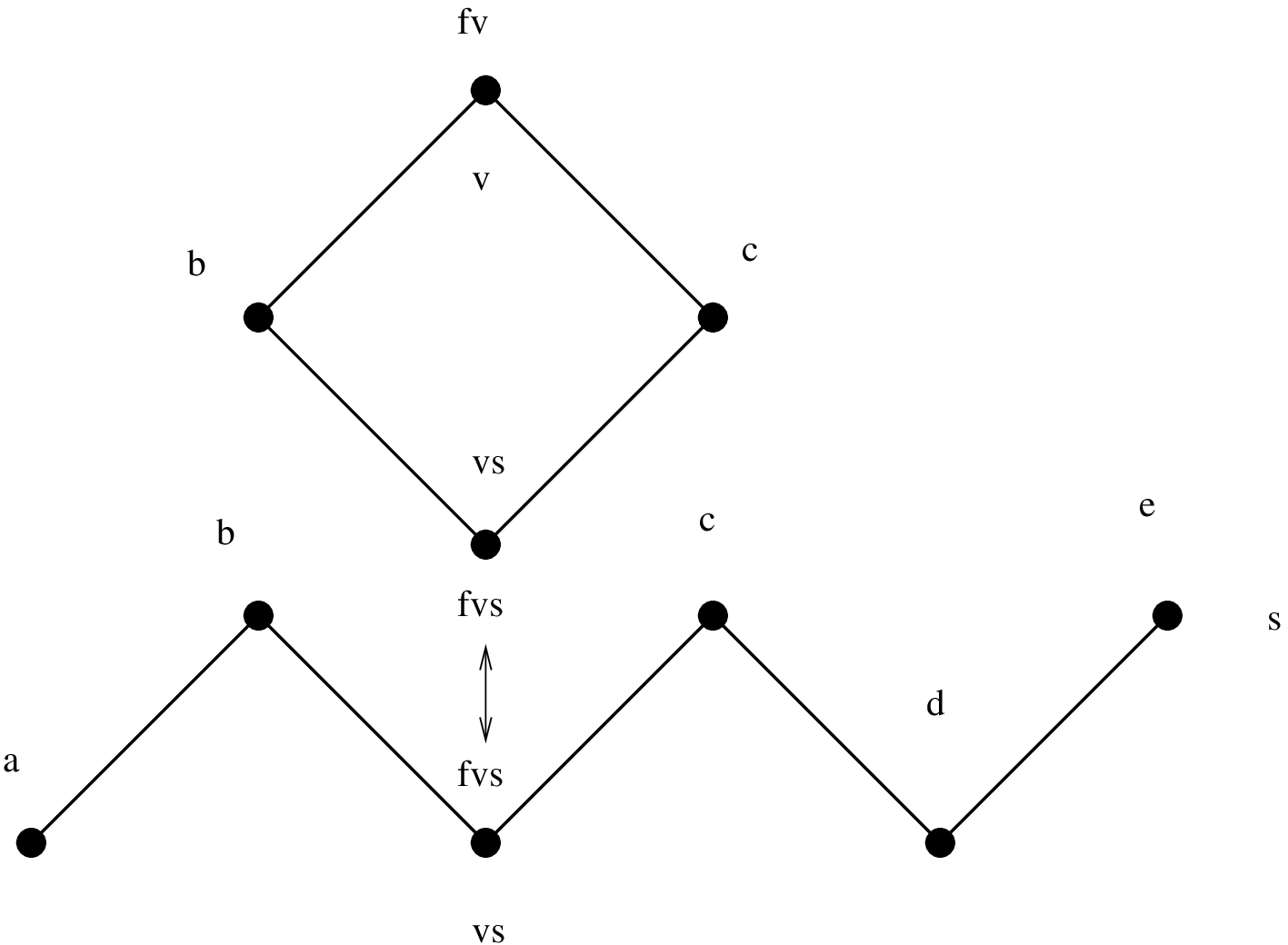}
\centering
\caption{\small }
\end{subfigure}
\hspace*{4.8cm}
\begin{subfigure}[b]{.22\textwidth}
\centering
\includegraphics[scale=.5]{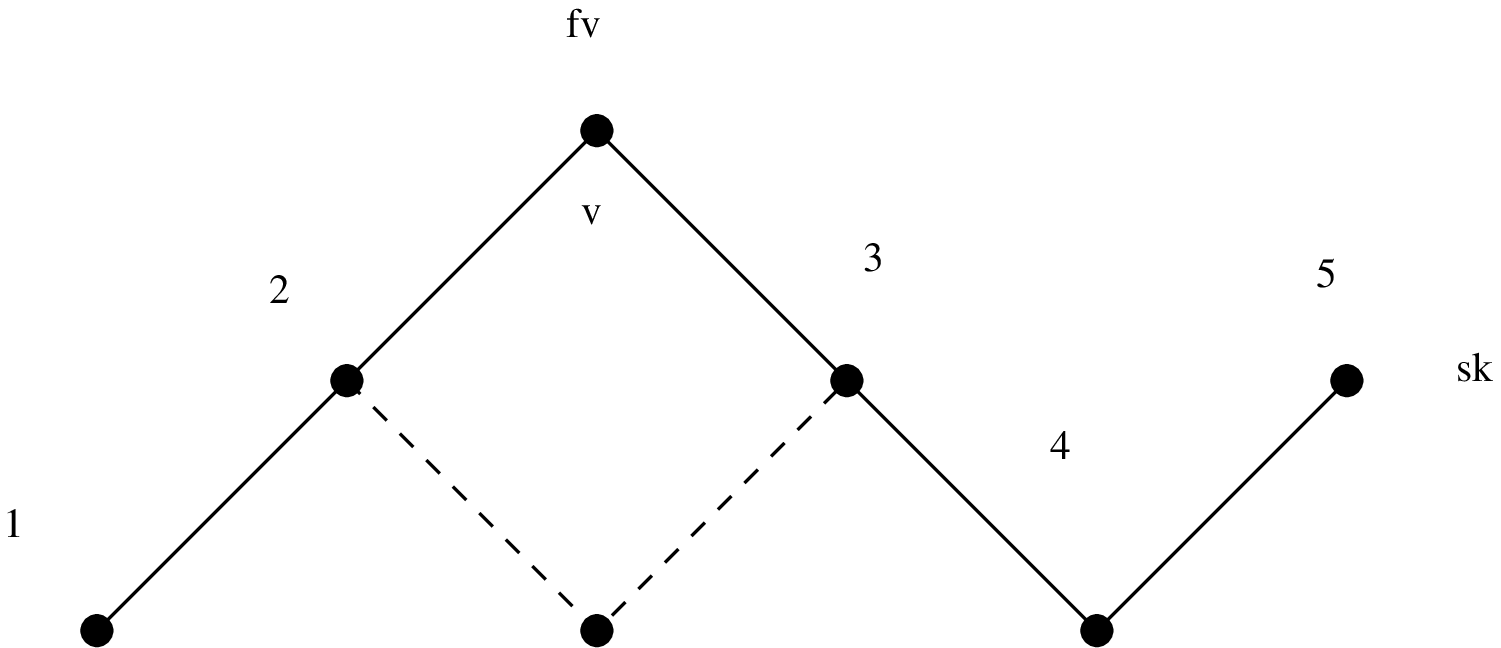}
\caption{\small }
\end{subfigure}
\caption{\small Gluing a square onto the 1D hypersurface $\Sigma_k$ annihilates a vertex $v^*$ and its field variable $\phi^{v^*}_k$ and introduces one new vertex $v$ and its field variable $\phi^v_{k+1}$. This move is of type III.}\label{fig_typeiii}
\end{figure}
\end{center}
\end{Example}


\vspace*{-.6cm}
{\underline{\bf Type IV:} No `new variable' is introduced and no `old variable' is removed in the course of the move. The momentum updating map $\fh_k$ simply reads
\ba
x^b_k&=&x^b_{k+1}  \, ,\q\q\q p^{k+1}_b\,=\,p^k_b \, , \label{typeiv1}\\
x^e_k&=&x^e_{k+1}\,,\q\q\q p^{k+1}_e\,=\,p^k_e+\frac{\partial S_{k+1}(  x^e_{k+1})  }{\partial x^e_{k+1}} .\label{typeiv2}
\ea
($S_{k+1}$ may equally well depend on $x^e_k$, rather than $x^e_{k+1}$.) No pre-- or post--constraints arise.

\begin{Example}
\emph{The 2--2 Pachner evolution move for a scalar field living on the vertices of a 3D triangulation is of type IV. No vertex is annihilated or introduced so neither is any field variable (see figure \ref{fig_typeiv}). We abstain from detailing the corresponding momentum updating map. }
\begin{center}
\begin{figure}[htbp!]
\psfrag{s}{$\Sigma_k$}
\psfrag{sk}{$\Sigma_{k+1}$}
\psfrag{fv}{\footnotesize$\phi^v_{k+1}$}
\psfrag{fvs}{\footnotesize$\phi^{v^*}_k$}
\psfrag{1}{\footnotesize $\phi^1_{k+1}$}
\psfrag{2}{\footnotesize$\phi^2_{k+1}$}
\psfrag{3}{\footnotesize$\phi^3_{k+1}$}
\psfrag{4}{\footnotesize$\phi^4_{k+1}$}
\psfrag{5}{\footnotesize$\phi^5_{k+1}$}
\psfrag{6}{\footnotesize$\phi^6_{k+1}$}
\psfrag{7}{\footnotesize$\phi^7_{k+1}$}
\psfrag{8}{\footnotesize$\phi^8_{k+1}$}
\psfrag{9}{\footnotesize$\phi^9_{k+1}$}
\psfrag{a}{\footnotesize$\phi^1_k$}
\psfrag{b}{\footnotesize$\phi^2_k$}
\psfrag{c}{\footnotesize$\phi^3_k$}
\psfrag{d}{\footnotesize$\phi^4_k$}
\psfrag{e}{\footnotesize$\phi^5_k$}
\psfrag{f}{\footnotesize$\phi^6_k$}
\psfrag{g}{\footnotesize$\phi^7_k$}
\psfrag{h}{\footnotesize$\phi^8_k$}
\psfrag{i}{\footnotesize$\phi^9_k$}
\begin{subfigure}[b]{.21\textwidth}
\centering
\includegraphics[scale=.5]{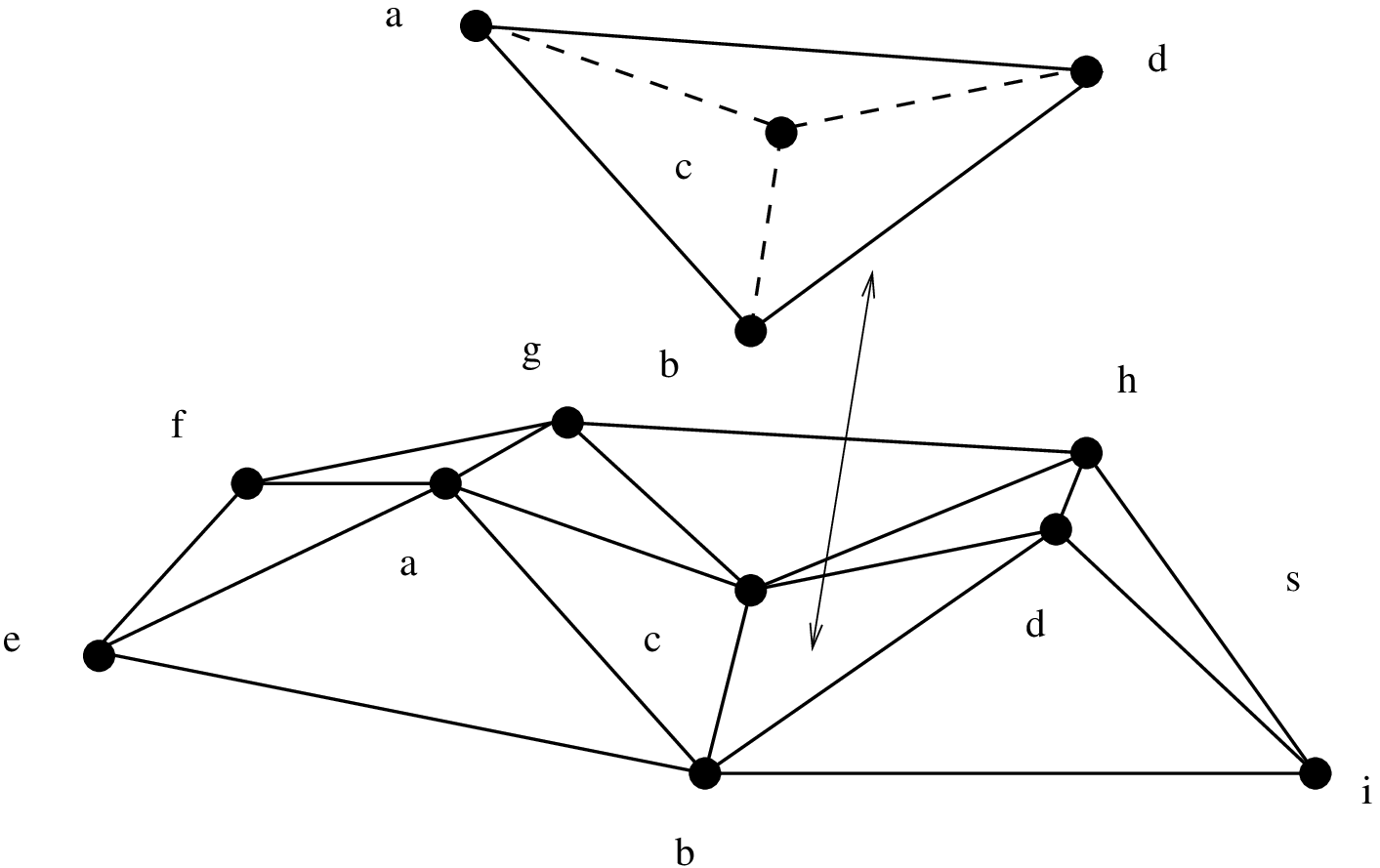}
\centering
\caption{\small }
\end{subfigure}
\hspace*{4.5cm}
\begin{subfigure}[b]{.21\textwidth}
\centering
\includegraphics[scale=.5]{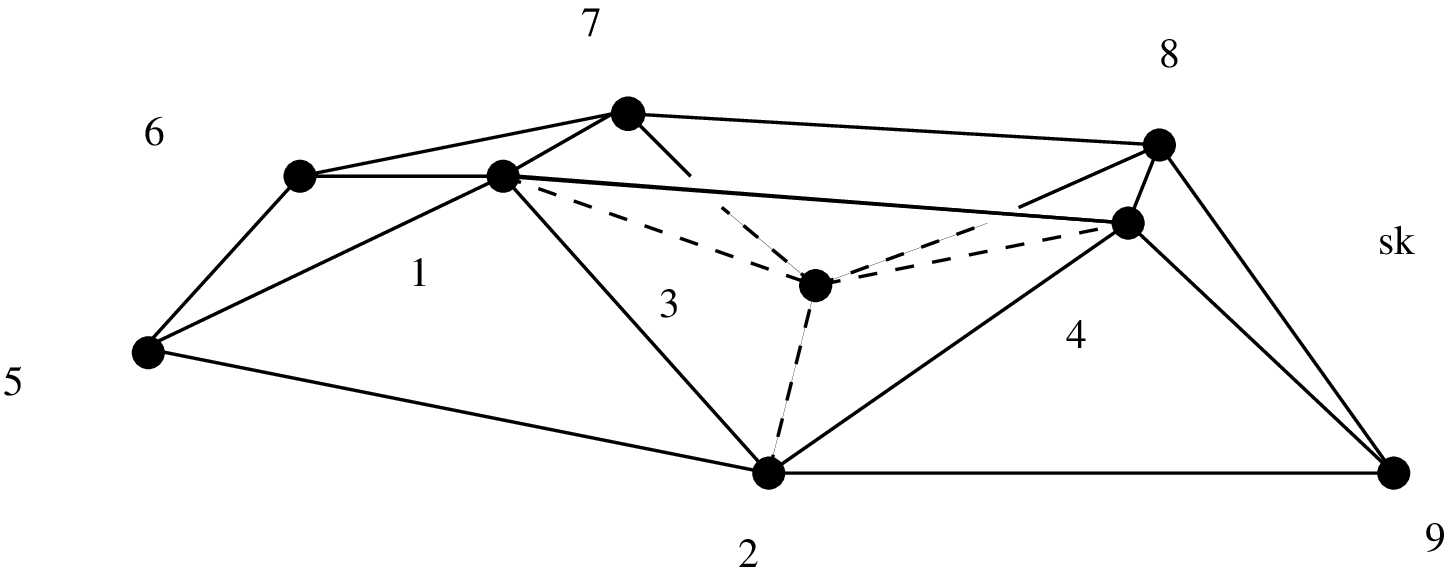}
\caption{\small }
\end{subfigure}
\caption{\small Performing a 2--2 Pachner move in the 2D hypersurface $\Sigma$ corresponds to gluing a tetrahedron onto the hypersurface as shown. This move neither removes nor introduces any vertices or field variables and is of type IV. However, the momenta conjugate to $\phi^e_k$, $e=1,\ldots,4$, must be udpated.}\label{fig_typeiv}
\end{figure}
\end{center}
\end{Example}


\vspace*{-.5cm}
The following theorem shows that {\it momentum updating} 
preserves {the post--constraints and} the symplectic structure restricted to the post--constraint surface in the move $k\rightarrow k+1$. 
However, for types II and III this constraint surface at $k$ is crucially further reduced by intersection with the partial pre--constraint surface $\ck^-_k$.{\footnote{For types I and IV $\cc^+_k\cap\ck^-_k\equiv\cc^+_k$ because no pre--constraints independent of the post--constraints arise in the move. In particular, for type IV $\ck^-_k\equiv\bar{\cp}_k$.}} 
Hence, {\it momentum updating}---as presently formulated---is a pre--symplectic transformation. In fact, there also exists an alternative way of formulating {\it momentum updating} as a canonical transformation on the full extended phase space which we shall briefly introduce in appendix \ref{sec_momupcan}.


\begin{Theorem}\label{thm_momup}
Let $\omega_k,\omega_{k+1}$ be the symplectic forms on the (extended) phase spaces $\bar{\cp}_k,\bar{\cp}_{k+1}$ and let $\cc^+_k,\cc^+_{k+1}$ be the {\it post--constraint surfaces} at steps $k$ and $k+1$, respectively. 
The \emph{momentum updating map} $\fh_k$ preserves the symplectic structure as follows
\ba
{\fh}_k^*(\iota_{k+1})^*\omega_{k+1}=(\iota_{k})^*\omega_{k},\nn
\ea
where for  
~\\
\emph{\underline{\bf Types I}} and \emph{\underline{\bf IV:}} $\fh_k:\cc^+_k\rightarrow\cc^+_{k+1}$ and $\iota_{k/k+1}:\cc^+_{k/k+1}\hookrightarrow\bar{\cp}_{k/k+1}$ are embedding maps.
~\\
\emph{\underline{\bf Types II}} and \emph{\underline{\bf III:}} $\fh_k:\cc^+_k\cap\ck^-_k\rightarrow\cc^+_{k+1}$ and $\ck^-_k\subset\bar{\cp}_k$ is the partial {\it pre--constraint surface} {at $k$} 
{and $\iota_k:\cc^+_k\cap\ck^-_k\hookrightarrow\bar{\cp}_k$ and $\iota_{k+1}:\cc^+_{k+1}\hookrightarrow\bar{\cp}_{k+1}$ are embedding maps.

In particular, $\fh_k$ maps all post--constraints on $\bar{\cp}_k$ to post--constraints on $\bar{\cp}_{k+1}$.}
\end{Theorem}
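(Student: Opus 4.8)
The plan is to reproduce, one type at a time, the mechanism behind Theorem~\ref{thm_presymp}: \emph{momentum updating} is governed by a generating function, so on the graph of the relations it defines the two canonical one--forms differ by an exact form, and differentiating gives the desired relation between the symplectic forms. The only novelty is that the generating function is now of the second or third kind (plus $S_{k+1}$) rather than of the first, and that the failure of $S_{k+1}$ to depend on the `new'/`old' variables is precisely what makes the map pre--symplectic rather than symplectic. Concretely, for each of the four types I would first assemble the full generating function $F_k$ on $\bar{\cq}_k\times\bar{\cp}_{k+1}$ (or $\bar{\cp}_k\times\bar{\cq}_{k+1}$) from the elementary pieces introduced in the previous subsection: an identity generator $G_2$ or $G_3$ for the spectator pairs $(x^b,p_b)$, the same augmented by $S_{k+1}$ for the updated pairs $(x^e,p_e)$, and the purely $S_{k+1}$--dependent terms responsible for the pairs $(x^n,p_n)$ and $(x^o,p_o)$ added by the phase space extension; one then checks that $\partial F_k$ reproduces exactly (\ref{anh1})--(\ref{anh1c}) for type~I, (\ref{case2})--(\ref{case2c}) for type~II, (\ref{anh21})--(\ref{anh21d}) for type~III, and (\ref{typeiv1})--(\ref{typeiv2}) for type~IV.

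Next I would introduce the graph $\Gamma_k\subset\bar{\cp}_k\times\bar{\cp}_{k+1}$ of these relations with its two projections $\mathrm{pr}_k$ and $\mathrm{pr}_{k+1}$. Using only the configuration identifications $x^b_k=x^b_{k+1}$, $x^e_k=x^e_{k+1}$ and the momentum--updating equations, a short calculation shows that on $\Gamma_k$ the spectator contributions to $\theta_{k+1}-\theta_k$ cancel while the updated and the added contributions collapse to $dS_{k+1}$ --- precisely because at one of the two steps each added momentum equals the constrained value $0$ or $\partial S_{k+1}/\partial x$ --- so that $\theta_{k+1}-\theta_k=dS_{k+1}$ holds as an identity of pulled--back one--forms on $\Gamma_k$. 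Taking the exterior derivative and using $d^2=0$, $\omega_k=-d\theta_k$ and commutativity of $d$ with pull--backs gives $\mathrm{pr}_k^*\,\omega_k=\mathrm{pr}_{k+1}^*\,\omega_{k+1}$ on $\Gamma_k$. It then remains to identify the projections: $\mathrm{pr}_k(\Gamma_k)=\cc^+_k$ for types~I and~IV, and $\mathrm{pr}_k(\Gamma_k)=\cc^+_k\cap\ck^-_k$ for types~II and~III (the defining relations (\ref{case2c}), (\ref{anh21d}) being exactly the pre--constraints ${}^-C^k_o$ that cut out $\ck^-_k$), while $\mathrm{pr}_{k+1}(\Gamma_k)=\cc^+_{k+1}$ throughout; after gauge--fixing the undetermined variables (e.g.\ $x^n_k=x^n_{k+1}$, $x^o_{k+1}=x^o_k$) the map $\mathrm{pr}_k$ becomes a diffeomorphism onto the domain, $\fh_k=\mathrm{pr}_{k+1}\circ\mathrm{pr}_k^{-1}$, and the claimed $\fh_k^*(\iota_{k+1})^*\omega_{k+1}=(\iota_k)^*\omega_k$ follows. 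The last assertion --- post--constraints go to post--constraints --- I would read off directly from the update equations: a post--constraint $C(x^b_k,x^e_k,p^k_b,p^k_e)=0$ turns into $C(x^b_{k+1},x^e_{k+1},p^{k+1}_b,p^{k+1}_e-\partial S_{k+1}/\partial x^e_{k+1})=0$, again a post--constraint, and similarly for the extension constraints.

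The one--form computation is routine; the real work --- and the only place where the four types genuinely differ --- is the constraint bookkeeping in the last step. I expect the main obstacle to be verifying that the images of the two projections are \emph{exactly} the stated constraint surfaces (for type~III one must additionally track how the rank of the matrix $\partial^2 S_{k+1}/\partial x^o_k\partial x^n_{k+1}$ being $K-\kappa$ produces precisely $\kappa$ non--trivial pre-- and post--constraints while fixing $K-\kappa$ of the $x^n_{k+1}$), and that the null directions of $(\iota_k)^*\omega_k$ --- those along the undetermined $x^n_k$, $x^o_{k+1}$ and, for types~II and~III, along the gauge directions conjugate to ${}^-C^k_o$ --- match up on the two sides so that the pre--symplectic identity descends independently of the gauge choice; in particular one must check that for types~II and~III the update of $p^k_e$ closes up only on $\cc^+_k\cap\ck^-_k$, so that omitting $\ck^-_k$ would break the statement. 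An alternative that bypasses this bookkeeping is to realise \emph{momentum updating} as an honest canonical transformation on the full extended phase space, as sketched in appendix~\ref{sec_momupcan}, and then restrict; I would nonetheless present the pre--symplectic argument as the primary proof, since it exhibits the role of the constraint surfaces transparently.
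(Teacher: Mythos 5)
Your proposal is correct in substance but organizes the proof differently from the paper. The paper's proof (appendix \ref{app_momup}) works type by type in coordinates: it restricts $\omega_k,\omega_{k+1}$ to the partial constraint surfaces generated by the move itself (e.g.\ $p^k_n=0$ at $k$ and ${}^+C^{k+1}_n=0$ at $k+1$ for type I), computes the pull--backs under the embeddings and under the explicit update equations, verifies their equality directly, and then shows that any further rank--reducing post--constraint at one step is pulled back one--to--one to one at the other step; it is this last constraint bookkeeping that promotes the identity from the partial surfaces to $\cc^+_k$ (resp.\ $\cc^+_k\cap\ck^-_k$) and $\cc^+_{k+1}$, and at the same time yields the final assertion of the theorem. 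You instead run the generating--function argument once: on the graph of the update relations one has $\theta_{k+1}-\theta_k=dS_{k+1}$, so taking $d$ gives the two--form identity uniformly for all four types, with the type dependence confined to identifying the images of the projections. This is a legitimate and arguably cleaner route (it is the local--move analogue of the argument behind (\ref{app8}) and Theorem \ref{thm_presymp}), and your remark that the result is insensitive to the choice of the undetermined $x^n_{k+1}$, $x^o_{k+1}$ is correct; what the paper's coordinate computation buys in exchange is the explicit form of the restricted two--forms and the precise fate of each constraint (in particular the type III analysis where the rank $K-\kappa$ of $\partial^2 S_{k+1}/\partial x^o_k\partial x^n_{k+1}$ produces exactly $\kappa$ nontrivial pre-- and post--constraints), which is then reused in the corollaries about rank. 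One point you should state more carefully: the projections of your graph are cut out only by the constraints produced by the move $k\rightarrow k+1$ itself (for type I, $p^k_n=0$; for types II/III additionally the nontrivial ${}^-C^k_o$), not by the full post--constraint surfaces $\cc^+_k$, which also contain post--constraints inherited from the evolution up to step $k$. To reach the surfaces named in the theorem you must combine your graph identity with your closing observation that such pre--existing post--constraints are mapped one--to--one by $\fh_k$; that observation is therefore not merely the ``last assertion'' but the step that allows you to restrict the two--form identity from the partial surfaces to $\cc^+_k$ (or $\cc^+_k\cap\ck^-_k$) and $\cc^+_{k+1}$, exactly as in the paper's proof.
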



\begin{proof} The proof is given in appendix \ref{app_momup}.
\end{proof}

{{\bf Consequence:} The preceding theorem has important repercussions for the preservation of the rank of the symplectic structure and the number of constraints throughout the \emph{entire} discrete evolution. While the local evolution moves of type I and IV preserve the symplectic structure restricted to the post--constraint surfaces before and after the move, local moves of type II and III only preserve the symplectic structure further reduced by the partial pre--constraint surface $\ck^-_k$. This, in particular, means that type II and III moves will, in general, \emph{not} preserve the post--constraint surfaces dimension. Rather, \emph{the number of post--constraints at $k+1$ can only be equal or higher than the number of post--constraints at $k$}:\footnote{The 2--2 Pachner move in 3D Regge Calculus---which is of type III---actually does not lead to any non--trivial pre--constraints at $k$ (see \cite{Dittrich:2011ke} for details). That is, $\kappa=0$ for this move and $\ck^-_k\equiv\bar{\cp}_k$ such that the number of post--constraints at $k$ and $k+1$ are identical.}  the number is equal if the pre--constraints defining $\ck^-_k$ are such that they coincide with post--constraints at $k$ or if they are of second class together with the post--constraints at $k$ (second class constraints do not further reduce the rank of the symplectic form \cite{Henneaux:1992ig}). The number increases if some of the pre--constraints defining $\ck^-_k$ do \emph{not} coincide with post--constraints at $k$ and neither are rendered second class. As a result, \emph{on an evolving slice, the number of post--constraints will either remain constant or grow, but cannot decrease}.

This also has consequences for the number of constraints at some fixed global step $n$. Theorem \ref{thm_presymp} showed that a \emph{global} evolution between $n-1$ and $n$ preserves the symplectic structures restricted to $\cc^-_{n-1}$ and $\cc^+_{n}$ because the number of pre-- and post--constraints coincided. In conjunction with the present theorem \ref{thm_momup}, it implies that, if step $n$ is evolved forward by a local evolution move $k\rightarrow k+1$, where the local step $k$ now coincides with the global step $n$, the symplectic structure restricted to the pre--constraint surface at $(n-1)$ and the new post--constraint surface at $k+1$ must again be preserved. However, if the local move was of type II or III, it may have increased the number of post--constraints at $k+1$ as compared to $n=k$. Consequently, also the number of pre--constraints at $(n-1)$ must actually have increased through the evolution from $n=k$ to $k+1$. That is, in general the number of constraints at fixed $k$ can only stay constant or increase by further evolution. This is a consequence of solving equations of motion in the course of the moves; solving more equations of motion can lead to more, but not less constraints because the equations of motion can act as secondary constraints. This has important ramifications for the notion of propagating degrees of freedom and the reduced phase space. We shall discuss all of this amply in the next section \ref{sec_conrole}.}


%

\section{Constraint analysis in the discrete}\label{sec_conrole}

A discrete time evolution proceeds in discrete steps and therefore, in contrast to the continuum, is {\it not}  generated by a set of constraints via a Poisson bracket structure which necessarily has an infinitesimal action.
Rather, it is the time evolution moves which generate the discrete evolution. The exception are theories in which the diffeomorphism symmetry of the continuum is preserved as a vertex translation symmetry \cite{Dittrich:2008pw,Bahr:2009ku,Dittrich:2012qb,Dittrich:2009fb} (and one considers a time evolution which preserves the connectivity of the lattice). This is the case for 3D Regge gravity and also for 4D Regge gravity at vertices embedded in a flat geometry. In general, however, the symmetry is broken in  4D Regge gravity \cite{Bahr:2009ku} such that continuous time evolution has to be replaced by discrete steps, as also proposed in the consistent discretization program \cite{Gambini:2002wn,DiBartolo:2004cg,Gambini:2005vn}. Understanding the complicated status of constraints and symmetries in discrete 4D Regge gravity is one of the motivations for the present work.

Although generally not generating the time evolution, the constraints in the discrete should otherwise assume similar roles to those of their continuum analogues \cite{Henneaux:1992ig}, namely:
\begin{table}[h!]
\begin{center}
\begin{tabular}{ c l }
\hline
  \multicolumn{2}{c}{role of constraints in the continuum}   \vspace*{.1cm}\\
\hline  \hline
    (i) & guarantee correct dynamics\\
  (ii) & generate symmetries\\
  (iii) & classify degrees of freedom\\
  (iv) & generate time evolution\\
  \hline
\end{tabular}
\hspace*{1.5cm}\begin{tabular}{ c l }
\hline
  \multicolumn{2}{c}{role of constraints in the discrete}  \vspace*{.1cm} \\
  \hline  \hline
    (i) & guarantee correct dynamics\\
  (ii) & generate symmetries\\
  (iii) & classify degrees of freedom\\
  &\\
    \hline
\end{tabular}
\end{center}
\end{table}

It is the goal of the present section to show that this is, indeed, the case. To this end, a constraint analysis for variational discrete systems, analogous to the continuum Dirac procedure \cite{Dirac,Henneaux:1992ig}, needs to be developed. We shall discuss roles (i)--(iii) of the constraints in the discrete below after first considering the preservation of constraints by the discrete evolution. For this purpose it is necessary to discuss global evolution moves, but towards the end we shall also refer to local ones. Since we describe systems with evolving phase spaces on extended phase spaces of equal dimension, no generality is lost by restricting to singular systems where $\dim\cq_n=Q$ $\forall\,n$.


\subsection{Preservation of the constraints}\label{sec_conmatch}

In the previous sections we have seen that {\it a priori} free parameters $\lambda_n$ (or {\it a posteriori} free parameters $\mu_n$) and, consequently, arbitrariness in the canonical evolution arise in the presence of post--constraints (or pre--constraints). However, some of this {\it a priori} ({\it a posteriori}) arbitrariness may get fixed {\it a posteriori} ({\it a priori}) by the condition of preservation of the constraints. 

\begin{center}
\underline{\bf Continuum}
\end{center}

Firstly recall that in the continuum the free Lagrange multipliers $\lambda^m$ of some primary constraints $\phi_m$ may become fixed by the condition of preservation of the constraints under evolution, which reads $\dot{\phi}_m=\{\phi_m,H+\lambda^{m'}\phi_{m'}\}\simeq 0$, where $H$ is the (non--vanishing) true Hamiltonian of the system and $\simeq$ refers to the fact that this equation needs only to hold weakly, i.e.\ on the constraint hypersurface \cite{Henneaux:1992ig}. This condition can \begin{itemize}\parskip -.5mm
\item[(a)] be automatically satisfied, in which case no new condition arises and the $\lambda^m$ remain free, 
\item[(b)] lead to secondary constraints which are independent of the $\phi_m$ and $\lambda^m$, or 
\item[(c)] lead to restrictions on the $\lambda^m$ in which case some of the {\it a priori} free parameters get fixed.
\end{itemize} Case (c) is only possible in the presence of second class constraints. A similar, yet slightly different situation arises in the discrete where time evolution is {\it not} generated by a total Hamiltonian and one has to cope with {\it two} constraint surfaces in a given phase space.

\begin{center}
\underline{\bf Discrete}
\end{center}

Since time evolution in the discrete is generated by the time evolution moves and not via the Poisson structure, we need to discuss the behaviour of the constraints under the former in order to discuss their preservation.

For translation invariant systems\footnote{By translation invariant systems we mean systems governed by an action such that $S_n(x_{n-1},x_n)$ is in form identical for all $n$.} one requires preservation of the constraints under time evolution. In this case the primary pre/post--constraints are the same for every time step, but it is possible that $\cc^-_n\neq\cc^+_n$. One has to make sure that the image of the total constraint hypersurface, namely the intersection of the post-- with the pre--constraint hypersurface at time $n$, $\cc_n:=\cc^-_n\cap\cc^+_n$, is a subset of the (same) constraint hypersurface at time $(n+1)$, i.e.\ that the constraints are preserved under time evolution. Otherwise, one has to add further (secondary) constraints, which for translation invariant systems again will be the same at each time step. 
A related discussion of constraints for these special systems has been given in \cite{DiBartolo:2004cg} in the context of  consistent discretizations.

For a 
non--translation invariant system, a preservation of constraints in the above narrow sense does not need to hold. 
Consider a (bare) evolution move $(n-1)\rightarrow n$ and another (bare) evolution move $n\rightarrow(n+1)$. Recall that $\ch_{n-1}:\cc^-_{n-1}\rightarrow\cc^+_n$ and $\ch_n:\cc^-_n\rightarrow\cc^+_{n+1}$. 
When considering the composition of the two moves, $(n-1)\rightarrow n\rightarrow (n+1)$, we have to solve the equations of motion, i.e.\ match the momenta, at $n$. Consequently, at step $n$ the pre--constraints now also have to hold for the post--momenta and the post--constraints have to hold for the pre--momenta. 
That is, the image of the pre--constraint hypersurface $\cc^-_{n-1}$ under time evolution $\ch_{n-1}$ must be a subset of the total constraint hypersurface $\cc_n:=\cc^-_n\cap\cc^+_n$ at time step $n$. If this is not the case, secondary pre--constraints have to be added at step $(n-1)$ to appropriately restrict the data at $(n-1)$ to map under $\ch_{n-1}$ only into $\cc_{n}$ because not all canonical solutions $(n-1)\rightarrow n$ can be extended to solutions for $(n-1)\rightarrow n\rightarrow(n+1)$. These secondary pre--constraints at time step $(n-1)$ arise from the pre--image under time evolution $\ch_{n-1}$ of the pre--constraint surface $\cc^-_n$.\footnote{They arise from the pre--image of $\cc^-_n$ because the pre--image under $\ch_{n-1}$ of $\cc^+_n$ is by construction already the primary pre--constraint surface $\cc^-_{n-1}$.} Likewise, for consistency of the evolution $(n-1)\rightarrow n\rightarrow (n+1)$, the pre--image of the post--constraint surface $\cc^+_{n+1}$ under $\ch_n$ must be a subset of the total constraint hypersurface $\cc_n$ at $n$. Otherwise, secondary post--constraints at the final step $(n+1)$ arise from the image under time evolution $\ch_n$ of the post--constraint hypersurface $\cc^+_n$. Consequently, constraints can `propagate' forward and backward in discrete time (see also the examples in section \ref{sec_correctdyn}).

\begin{figure}
        \centering
        \begin{subfigure}[b]{0.45\textwidth}
        \psfrag{P}{ $T^*\cq_n$}
      \psfrag{cm}{}
   \psfrag{cp}{}
   \psfrag{c}{ $\cc_n={\cc^+_n}\cap{\cc^-_n}$}              
     \centering
                \includegraphics[width=.8\textwidth]{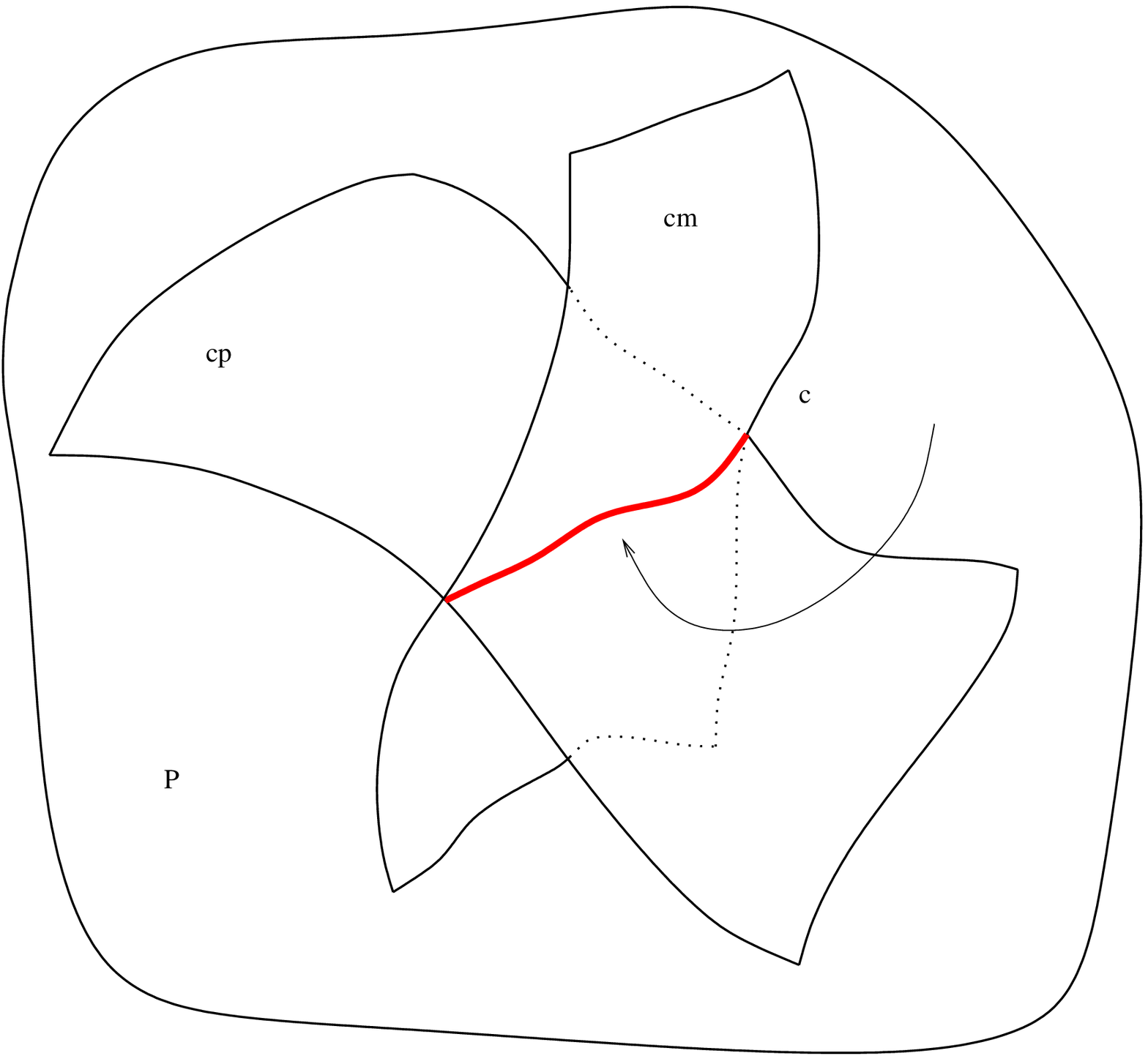}
                \caption{\small }
        \end{subfigure}%
        \hspace*{1cm}
        \begin{subfigure}[b]{0.45\textwidth}
        \psfrag{k}{\tiny $n$}
\psfrag{k1}{\tiny $n-1$}
\psfrag{k2}{\tiny $n+1$}
\psfrag{c2}{\tiny $\textcolor{red}{\cc^+_{n+1}}$}
\psfrag{c1}{\tiny $\textcolor{blue}{\cc^-_{n-1}}$}
\psfrag{ck}{\tiny $\textcolor{red}{\cc^+_n}$}
\psfrag{ckm}{\tiny $\textcolor{blue}{\cc^-_n}$}
\psfrag{p2}{\tiny $x_{n+1},\textcolor{red}{{}^+p^{n+1},\lambda_{n+1}}$}
\psfrag{p1}{\tiny $x_{n-1},\textcolor{blue}{{}^-p^{n-1},\mu_{n-1}}$}
\psfrag{pk}{\tiny $x_{n},\textcolor{red}{{}^+p^{n},\lambda_{n}}$}
\psfrag{pkm}{\tiny $x_{n},\textcolor{blue}{{}^-p^{n},\mu_{n}}$}
\psfrag{m}{\footnotesize match}
\psfrag{h1}{\tiny $\ch_n$}
\psfrag{h2}{\tiny $\ch_{n+1}$}
\psfrag{s1}{\tiny $S_n$}
\psfrag{s2}{\tiny $S_{n+1}$}
        
                \centering
                \includegraphics[width=.8\textwidth]{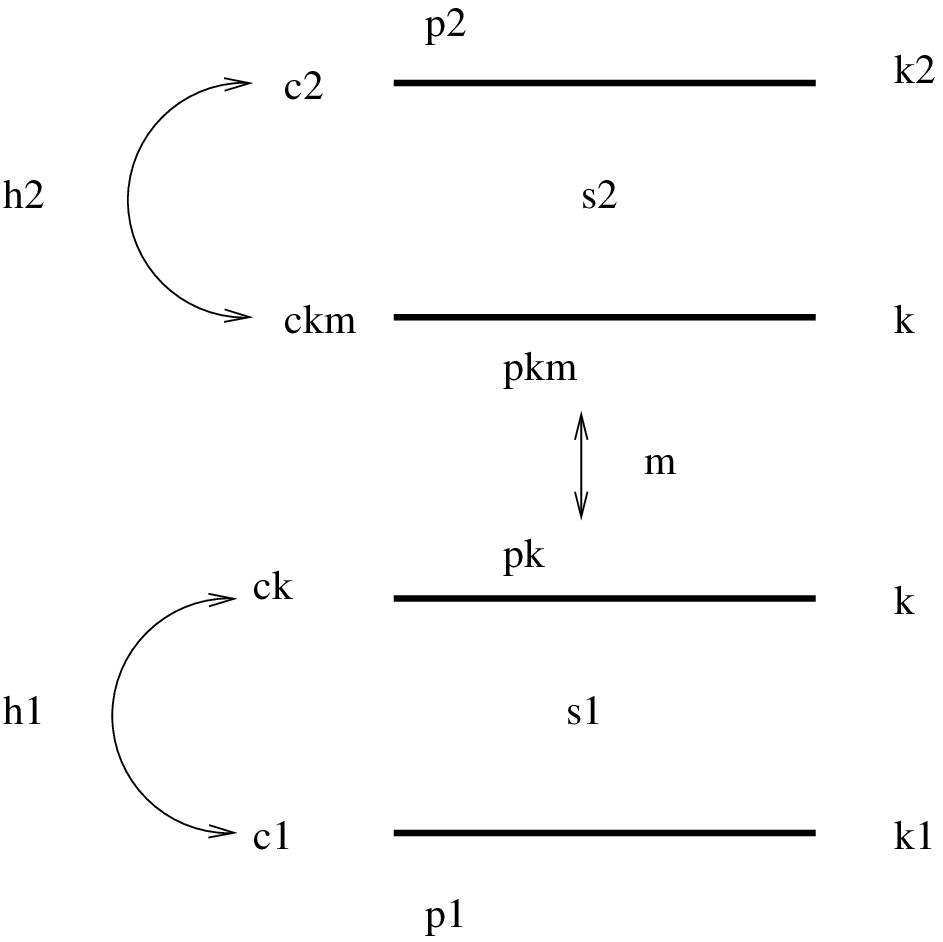}
                \caption{\small }
        \end{subfigure}
                \caption{(a) \small The {\it pre--constraint surface} $\cc^-_n$ and the {\it post--constraint surface} $\cc^+_n$ in the phase space $T^*\cq_n$, in general, do not coincide, i.e.\ $\cc^-_n\neq\cc^+_n$. In order to ensure the correct dynamics, we have to impose both the pre-- and post--constraints at step $n$ and thus must restrict to the intersection $\cc_n=\cc^-_n\cap\cc^+_n$. (If $\cc_n=\emptyset$, the dynamics is inconsistent.) (b) Matching symplectic structures at step $n$.}\label{conmat}
\end{figure}

For a larger sequence of evolution moves (for which in general both $\dim\cc^-_n\neq\dim\cc^+_n$, as well as $\dim\cc_n\neq\dim\cc_{n+1}$ are possible), this process is repeated until all constraint hypersurfaces are stable or an inconsistency has been reached. For translation invariant systems, this prescription is equivalent to the preservation of constraints given in \cite{DiBartolo:2004cg}. However, in contrast to the translationally invariant case, the secondary constraints are, in general, not the same at each time step and, moreover, crucially depend on the choice of initial $n'=i$ and final $n''=f$ time step. 
For instance, onsider an effective evolution move $i\rightarrow n$ described by $S_{in}$ and another effective evolution move $n\rightarrow f$ described by $S_{nf}$, where $S_{in},S_{nf}$ are effective actions. All constraints at $n$---including possibly propagated constraints, that would arise via the elementary evolution moves contained in both $i\rightarrow n$ and $n\rightarrow f$---will arise as the post--constraints of $S_{in}$ and as the pre--constraints of $S_{nf}$. That is, the effective actions automatically contain all propagated constraints.

Therefore, if we are using effective actions, consistency of the evolution simply requires to take the intersection of the post--constrained surface defined by $S_{in}$ with the pre--constraint hypersurface defined by $S_{nf}$ as the total constraint hypersurface at time $n$, $\cc_n:=\cc^-_n\cap\cc^+_n$, for {\it any} elementary or effective evolution $i\rightarrow n\rightarrow f$ (see figure \ref{conmat}). In addition, secondary pre--constraints at the initial step $i$ may arise as the pre--image under time evolution of $\cc_n$ and, likewise, secondary post--constraints at the final step $f$ may arise as the image under time evolution of $\cc_n$. If, however, we now were to evolve further to $i'<i$ and/or $f'>f$, additional propagated secondary constraints at any of the time steps may appear.

As regards the canonical data at $n$, the imposition of pre--constraints in addition to the post--constraints leads to conditions  which either
\begin{itemize} \parskip -.5mm
\item[(a)] are automatically satisfied (i.e.\ the pre--constraints are dependent on the post--constraints), 
\item[(b)] are {\it not} automatically satisfied, 
yet which do {\it not} fix the flows and {\it a priori} free parameters $\lambda_n$ of the post--constraints, 
\item[(c)] fix some 
{\it a priori} free parameters $\lambda_n$ (likewise for the {\it a posteriori} free parameters $\mu_n$), or
\item[(d)] cannot be simultaneously satisfied such that $\cc_n=\emptyset$ and the dynamics is inconsistent.
\end{itemize}
This enforces restrictions on the discrete time evolution and the amount of arbitrariness in the data. The consequences of cases (a)--(c) for the dynamics, symmetries and observables, as well as roles (i)--(iii) of the constraints shall be discussed in detail in the remainder of this section.

\subsection{Restricting the dynamics}\label{sec_correctdyn}

Let us begin by briefly discussing role (i) of the constraints: ensuring the `correct dynamics'. 
If we ensure that both the pre-- and post--constraints (including possibly propagated constraints) are satisfied at each step, we obtain the correct dynamics because by momentum matching all equations of motion will be implemented.

In non--translation invariant systems the numbers of pre-- or post--constraints at fixed $n$, in general, depends on the initial and final step between which one evolves: secondary constraints at $n$ may arise as a consequence of imposing equations of motion at neighbouring steps. This, in fact, can only happen if case (b) above occurs at neighbouring steps.  

\begin{Example}\label{examp1}
\emph{
{\bf(A)} Consider the massless scalar field evolving in two moves as depicted in figure \ref{fig_exABC} (a).}
\begin{center}
\begin{figure}[htbp!]
\psfrag{n}{\small$n=1$}
\psfrag{n1}{\small$n=2$}
\psfrag{n2}{\small$n=0$}
\psfrag{1}{\footnotesize 1}
\psfrag{2}{\footnotesize 2}
\psfrag{3}{\footnotesize 3}
\centering
\begin{subfigure}[b]{.25\textwidth}
\centering
\includegraphics[scale=.5]{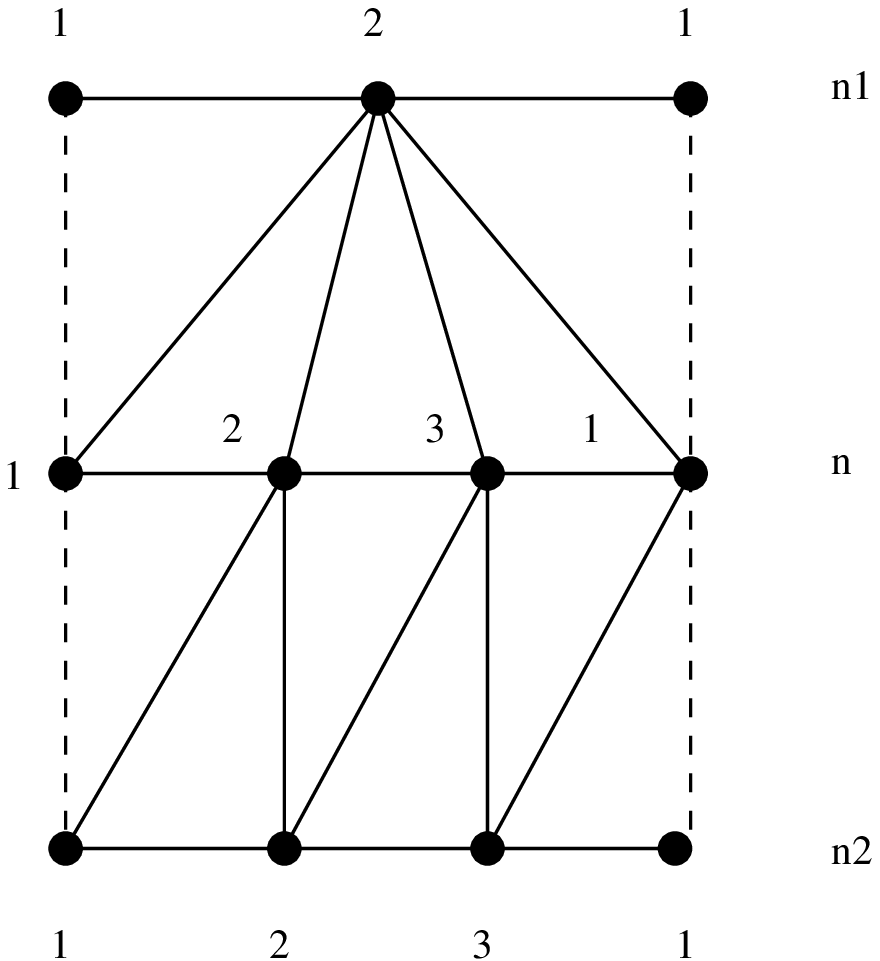}
\caption{\small }
\end{subfigure}
\hspace*{1.5cm}
\begin{subfigure}[b]{.25\textwidth}
\centering
\includegraphics[scale=.5]{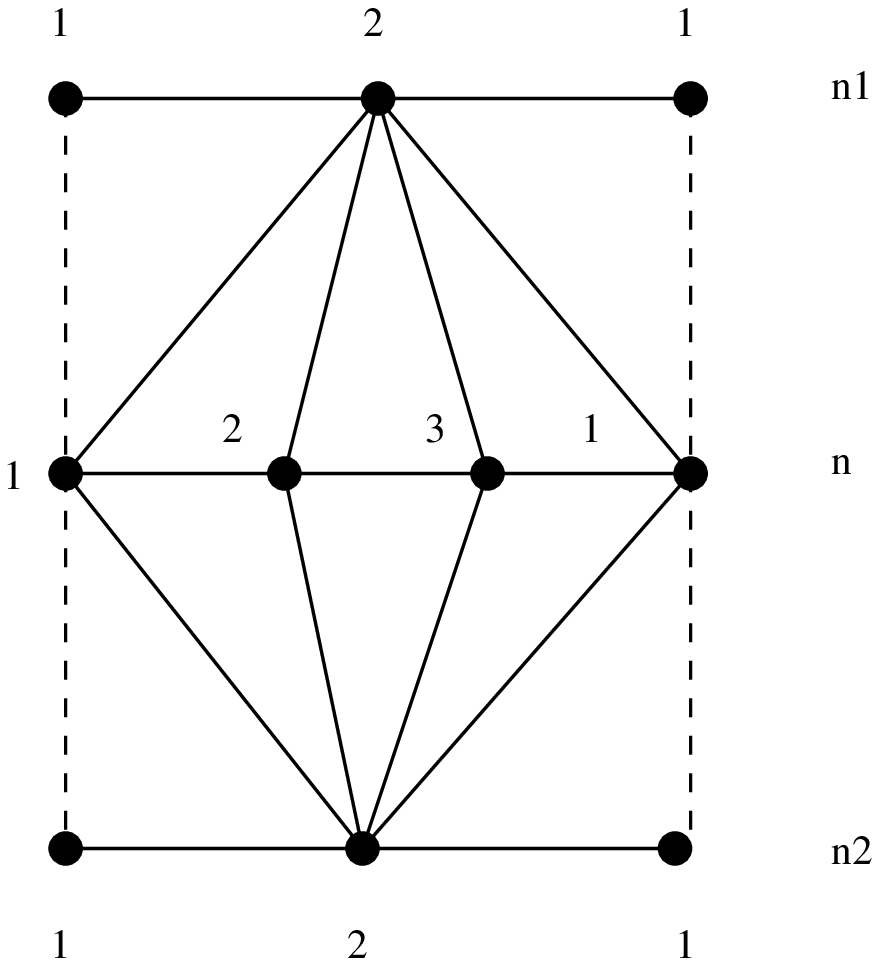}
\caption{\small }
\end{subfigure}
\hspace*{1.2cm}
\begin{subfigure}[b]{.25\textwidth}
\centering
\includegraphics[scale=.5]{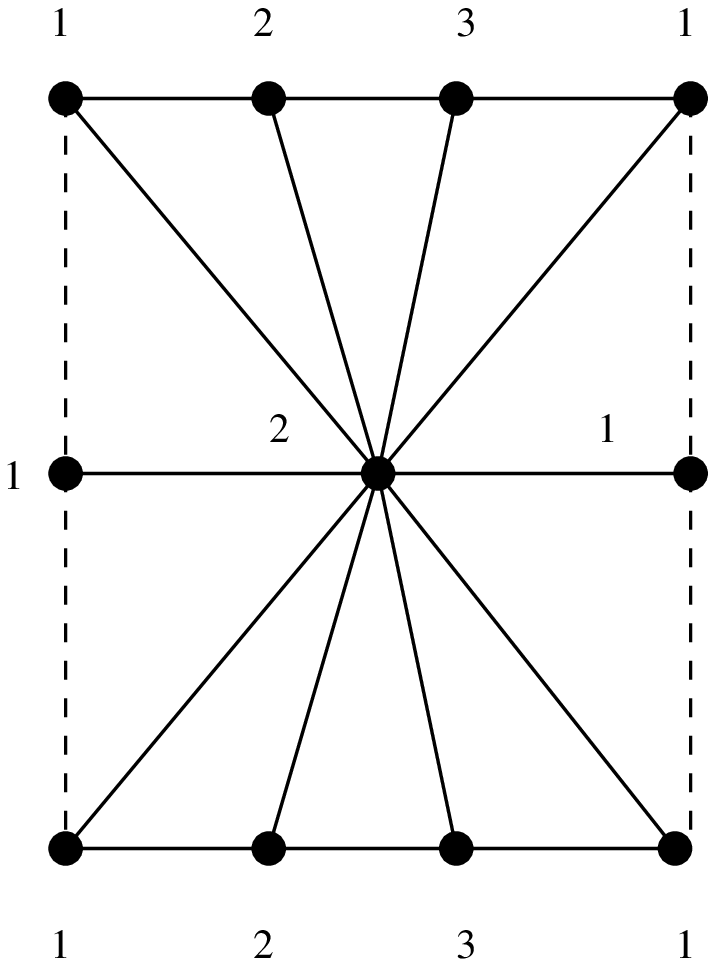}
\caption{\small }
\end{subfigure}
\caption{\small Schematic illustration of (a) example \ref{examp1} (A), (b) example \ref{examp1} (B), and (c) example \ref{examp1} (C). The numbers label the vertices and dashed lines indicate periodic identification.}\label{fig_exABC}
\vspace{1cm}
\end{figure}
\end{center}
\vspace*{-1.5cm}
\emph{There are three scalar field variables at time $n=0$, three at time $n=1$ and two scalar field variables at the  final time $n=2$.  The second move from $n=1$ to $n=2$ corresponds to the time reversed version we discussed in example \ref{examp0} in section \ref{sec_psext}. Thus, we have one pre--constraint at time $n=1$
\ba\label{blablabla}
{}^-\! C^1 &=&  \pi^{1}_2 -\pi^{1}_3 \,+\tfrac{5}{2} \left( \phi_{1}^2 -  \phi_{1}^3 \right)  ,
\ea
which leads to the free parameter $\mu_1=\phi^2_1-\phi^3_1$. 
For the evolution from time $n=0$ to $n=1$ we have to consider the adjacency matrix
\ba
A^1=\left(\begin{matrix}
1 & 1 & 0 \\
0 & 1 & 1 \\
1 & 0 & 1 \end{matrix}
\right)
\ea
which is invertible. Therefore, no pre-- or post--constraints arise from the evolution $0 \rightarrow 1$. The pre--constraint at time $n=1$ has to be propagated back to a (pre--) constraint at time $n=0$ because the data at $n=0$ has to be restricted such that it satisfies (\ref{blablabla}) upon propagation with $\ch_1$. This will also result in a free parameter $\mu_0$, whose value cannot be postdicted by the canonical data at time $n=2$. The so obtained constraint at $n=0$ will coincide with the pre--constraint of the effective time evolution from $n=0$ to $n=2$, as illustrated in figure \ref{fig_seq1}.
\begin{center}
\begin{figure}[htbp!]
         \psfrag{k}{\tiny $1$}
\psfrag{k1}{\tiny $0$}
\psfrag{k2}{\tiny $2$}
\psfrag{c2}{\tiny $\textcolor{red}{\cc^+_{2}}$}
\psfrag{c1}{\tiny $\textcolor{blue}{T^*\cq_{0}}$}
\psfrag{ck}{\tiny $\textcolor{red}{T^*\cq_1}$}
\psfrag{ckm}{\tiny $\textcolor{blue}{\cc^-_1}$}
\psfrag{p2}{\tiny $x_{2}=\textcolor{red}{\lambda_{2},{}^+p^{2}}$}
\psfrag{p1}{\tiny $x_{0},\textcolor{blue}{{}^-p^{0}}$}
\psfrag{pk}{\tiny $x_{1},\textcolor{red}{{}^+p^{1}}$}
\psfrag{pkm}{\tiny $x_{1}=\textcolor{blue}{\mu_{1},{}^-p^{1}}$}
\psfrag{m}{\footnotesize match}
\psfrag{h1}{\tiny $\ch_1$}
\psfrag{h2}{\tiny $\ch_{2}$}
\psfrag{s1}{\tiny $S_1$}
\psfrag{s2}{\tiny $S_{2}$}
\centering
\begin{subfigure}[b]{.25\textwidth}
\centering
{\includegraphics[scale=.5]{figure11b.eps}} 
\caption{\small The two individual evolution moves $0\rightarrow1$ and $1\rightarrow2$.}
\end{subfigure}
\hspace*{1cm}
\psfrag{ck}{\tiny $\cc_1=\textcolor{blue}{\cc^-_1}$}
\psfrag{i}{\tiny integrate}
\begin{subfigure}[b]{.25\textwidth}
\centering
{\includegraphics[scale=.5]{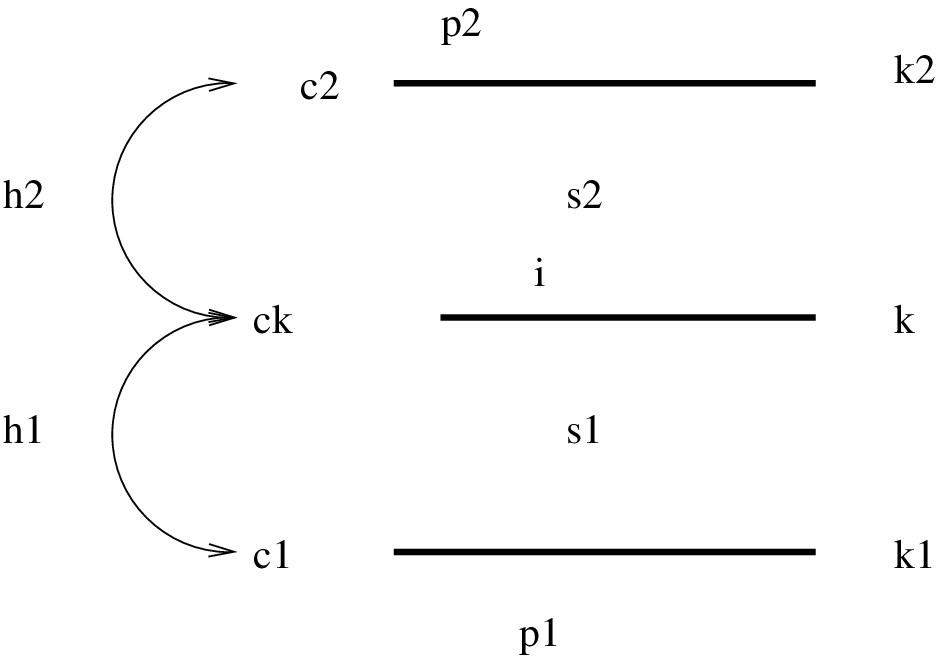}} 
\vspace*{.5cm}
\caption{\small Matching the symplectic structures at $n=1$.}
\end{subfigure}
\hspace*{1cm}
\psfrag{h}{\tiny$\tilde{\ch}_{02}$}
\psfrag{s}{\tiny $\tilde{S}_{02}$}
\psfrag{c4}{\tiny $\textcolor{red}{\tilde{\cc}^+_{2}}$}
\psfrag{c3}{\tiny $\textcolor{blue}{\tilde{\cc}^-_{0}}$}
\psfrag{p1}{\tiny $x_{0}=\textcolor{blue}{\tilde{\mu}_{0},{}^-p^{0}}$}
\begin{subfigure}[b]{.25\textwidth}
\centering
{\includegraphics[scale=.5]{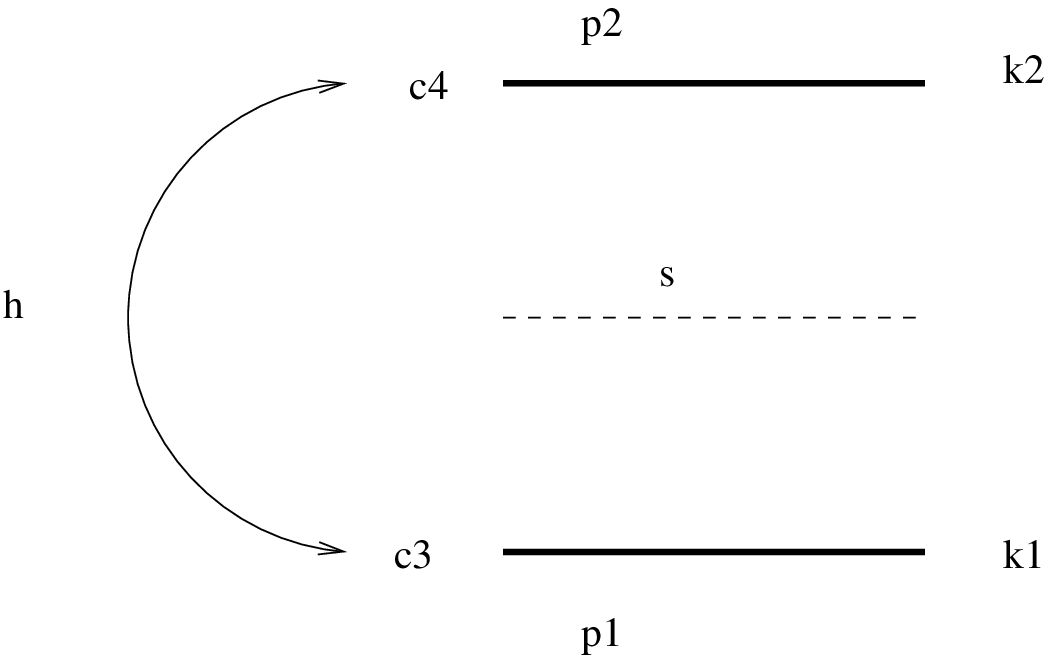}} 
\caption{\small An effective evolution move $0\rightarrow2$ results with a new pre--constraint surface $\tilde{C}^-_0$ at $n=0$.}
\end{subfigure}
\caption{\small Schematic illustration of example \ref{examp1} (A).}\label{fig_seq1}
\end{figure}
\end{center}
\vspace*{-1cm}
~\\
{\bf (B)} On the other hand, consider the example in figure \ref{fig_exABC} (b). Here we replace the time evolution move $0\rightarrow 1$ from the previous example by the time reversed time evolution move $1\rightarrow 2$. From the considerations of the single moves in example \ref{examp0} of section \ref{sec_psext}, one obtains both a pre--constraint and a post--constraint at time $n=1$
\ba
{}^-\! C^1 &=&  \pi^{1}_2 -\pi^{1}_3 \,+\tfrac{5}{2} \left( \phi_{1}^2 -  \phi_{1}^3 \right) , \nn\\
{}^+\! C^1 &=&  \pi^{1}_2 -\pi^{1}_3 \,-\tfrac{5}{2} \left( \phi_{1}^2 -  \phi_{1}^3 \right)  .
\ea
These constraints are second class with respect to each other and indeed fix the a priori and a posteriori parameter $\lambda_1=\mu_1=\phi^2_1-\phi^3_1=0$. The constraint hypersurface can equivalently be described by the two constraints $C_1^1=\phi^2_1-\phi^3_1$ and $C^1_2= \pi^{1}_2 -\pi^{1}_3$. Propagating the constraints forward to $n=2$ and backward to $n=0$ does not lead to any constraints at these time steps. The effective time evolution $0\rightarrow 2$ is thus regular.\\
~\\
{\bf (C)} Having the same number of variables at final and initial time steps does not necessarily guarantee a regular time evolution. If we exchange the two time evolution moves of the previous example (see figure \ref{fig_exABC} (c)), we will obtain a pre--constraint at time $n=0$ and a post--constraint at time $n=2$
\ba
{}^-\! C^0 &=&  \pi^{0}_2 -\pi^{0}_3 \,+\tfrac{5}{2} \left( \phi_{0}^2 -  \phi_{0}^3 \right)  , \nn\\
{}^+\! C^3 &=&  \pi^{3}_2 -\pi^{3}_3 \,-\tfrac{5}{2} \left( \phi_{3}^2 -  \phi_{3}^3 \right)  
\ea
with the corresponding free parameters $\mu_0$ and $\lambda_3$.}
\end{Example}

One might be surprised about the feature that constraints propagate backwards in time evolution. Therefore, let us provide an intuitive picture: we can imagine our evolution on the cylinder also as a radial evolution on an annulus region in two--dimensional space.  If we proceed from time slices with more vertices to time slices with less vertices, we can interpret this as evolving from the outer boundary to the inner boundary of the annulus. Proceeding further and further with the evolution, more and more equations of motions in the ball defined by the outer boundary of  the annulus have to be satisfied. Thus, we have to expect more and more constraints fixing the momenta as function of the boundary scalar fields. Closing the inner boundary we expect a totally constrained system at the outer boundary because the values of the scalar field on the spherical boundary should be sufficient to specify a solution uniquely. On the other hand, all the fields at the outer boundary correspond to free parameters $\mu$, since the fields at the outer boundary cannot be determined from the zero--dimensional phase space which arises after one has closed the annulus to a sphere.

These examples highlight how constraints can severely restrict the space of solutions and how additional constraints {\it at fixed} $n$ can arise. 

Now consider a general evolution up to step $n$. Assume a new evolution move $n\rightarrow n+1$ is to be performed and one finds that a pre--constraint arising in this move is in conflict with the underlying canonical data at $n$. Then there are four options:
\begin{enumerate}\parskip -1mm
\item Accept that we cannot perform this evolution move. In this case, perform some other evolution move.\footnote{For instance, a type I local move (in particular, in Regge Calculus a 1--$D$ Pachner move, where $D>2$ is the dimension of the spacetime triangulation \cite{Dittrich:2011ke}) is always possible because no pre--constraints arise.}
\item \begin{itemize} \parskip -.5mm
\item[(a)] Change the underlying data by varying parameters which are {\it a priori} free up to step $n$ such that the attempted move is possible, otherwise
\item[(b)] restrict the space of initial data leading to $n$ such that the attempted move becomes possible.
\end{itemize}
\item Neither of 1 or 2 is possible and the evolution becomes inconsistent and stops.
\end{enumerate}
Option 1 is what one would choose if one solved an initial value problem, while options 2 (a) and (b) are what one would choose in case one attempted to solve some boundary value problem.  

It is worthwhile to add a few remarks concerning the consistency of the discrete evolution from an initial value problem. Firstly, it should be noted that there is no principle, which dictates either the choice of the set of evolution moves or the particular sequence of these moves in the evolution. Instead, the choice of both the set of moves and their sequence has to be put in by hand. It is the constraints, which subsequently determine whether a given evolution move or a given sequence of moves is allowed or not: if the constraints can be satisfied, the sequence will generate a solution to the discrete equations of motion. In general, there will exist a whole plethora of consistent choices of moves and sequences. Broadly, one may choose between:
\begin{itemize}
\item[(1)] Choose a completely general (and elementary) set of evolution moves and leave their sequences open such that one may generate all possible solutions arising from a given initial data set, or
\item[(2)] Fix the set of evolution moves and their sequences at the outset and thereby restrict the space of solutions arising from a given initial data set. 
\end{itemize}

In a discrete gravitational context, the Pachner moves are an ergodic set of evolution moves, which can generate all possible spacetime triangulations (of a given topology) and are therefore an example of choice (1). For an implementation of these moves in canonical language, see \cite{Dittrich:2011ke}. On the other hand, an example of a restricted set of moves and sequences (2) are the tent moves in discrete gravity, which have been amply discussed in \cite{Bahr:2009ku, Dittrich:2009fb}.

Secondly, the question arises whether different consistent choices of evolution moves and their sequences generate equivalent or distinct solutions from a given initial data set. The answer to this question depends on the presence or absence of symmetries. For instance, in a discrete gravitational context, in particular, in Regge Calculus, an initial data hypersurface that leads to flat solutions will permit infinitely many different choices and sequences of evolution moves, all of which generate equivalent solutions to the Regge equations, namely flat triangulations. This is a consequence of the continuum diffeomorphism symmetry being preserved for flat triangulations \cite{Dittrich:2008pw, Bahr:2009ku}. That is, in this case, different solutions arising from different sequences and sets of moves, but from the same initial data set, can be mapped into each other through the symmetries. On the other hand, for curved Regge solutions, the diffeomorphism symmetry becomes broken \cite{Dittrich:2008pw, Bahr:2009ku}, such that different choices and sequences of evolution moves applied to the same initial data set will lead to distinct solutions in the sense that they can no longer be mapped into each other by a symmetry transformation. Consequently, the discrete dynamics will, in the general case of broken continuum symmetries, be non-hyperbolic: a fixed initial data set cannot uniquely predict the future solution and a multitude of distinct solutions are compatible with it. For further discussion of this, see also \cite{Dittrich:2011ke}.

This situation bears some loose analogy to the continuum, at least in a gravitational context. In a sense, the different choices and sequences of evolution moves in discrete gravity can be viewed as different lapse and shift choices because they determine how the `spatial' triangulated hypersurface is pushed forward in `time'. Not being able to perform a given move because of a conflict with the constraints, corresponds in the continuum simply to the evolution not running off the constraint surface. If the continuum symmetry is present, as for flat solutions, then the different choices of moves and their sequences are essentially gauge choices, in analogy to the continuum lapse and shift. By contrast, when the symmetries are broken, these different choices are no longer equivalent and determine distinct solutions---a situation which does not arise in the continuum. It should be noted, however, that the notion of lapse and shift in Regge Calculus is not only associated to the choice and sequence of moves, but also to the lengths of edges interpolating between different hypersurfaces in the evolution. For more details on this in the context of tent moves see \cite{Bahr:2009ku, Dittrich:2009fb}, or in the general context of Pachner moves, see \cite{Dittrich:2011ke}.

\subsection{Constraints and symmetries}\label{sec_symrole}

Next, let us elaborate on role (ii) of the constraints in the discrete, namely, generating gauge symmetries (if present). 
To begin with, consider the Hessian of the action, which is the matrix of second derivatives of the action with respect to `bulk' variables. At time step $n$ it reads \ba\label{hess}
H^n_{ij}=\frac{\p^2 S_{n}}{\p x_n^j \p x_n^i}+\frac{\p^2 S_{n+1}}{\p x_n^j \p x_n^i}\,.\label{Hessian}
\ea
Null vectors of the Hessian define flat directions at the extrema of the action and thereby define genuine gauge directions. The Hessian thus plays a central role in the determination 
of gauge symmetry.  Here we consider the Hessian with boundary data at $(n-1)$ and $(n+1)$ fixed. One should be aware that under further evolution, i.e.\ treating these boundary data as dynamical, the Hessian becomes a larger matrix (including blocks corresponding to the `bare' Lagrangian two--forms at $n$ and $(n+1)$, etc.), and the previous null vectors may cease to be null vectors of the effective Hessian. 

For non--linear theories, null vectors may only exist for the Hessian evaluated on solutions and the space of solutions may change if further evolution is taken into account (as a result of propagated constraints). Even more so, null vectors may exist only for special solutions. This is actually the case in 4D Regge Calculus, where null vectors exist for flat but not for curved solutions  \cite{Bahr:2009ku}. In the specification of irregular systems (\ref{app13}), however, we assumed a constant rank of the Lagrangian two--form in an open neighbourhood in the space of initial data. We shall, therefore, continue to make the assumption that the rank of the Hessians and Lagrangian two--forms is locally constant. Nevertheless, all considerations remain valid, even if a situation as in 4D Regge Calculus occurs, if one linearizes the theory around a solution (with additional symmetries).

Recall that an {\it a priori} free parameter $\lambda_n$ and an {\it a posteriori} free parameter $\mu_n$ were defined as functions of the configuration data at time step $n$, whose values cannot be pre-- or postdicted from the canonical data at $(n-1)$ or $(n+1)$, respectively. A gauge mode\footnote{generally only defined in the linearized theory} at time $n$ is given by a free parameter $\lambda_n=\mu_n$ which is both {\it a priori} and {\it a posteriori} free and that can never be pre-- or postdicted by any other data. 
 
Thus, given a solution $s$ satisfying the equations of motion
 \ba
\frac{\partial (S_{n-1} +S_n)}{\partial x_{n-1}}=0  ,\q\q \frac{\partial (S_n+ S_{n+1})}{\partial x_n}=0 ,\q\q   \frac{\partial (S_{n+1}+S_{n+2})}{\partial x_{n+1}}=0 ,
 \ea
  the free parameter describes an infinitesimally displaced solution $(s)_n+\epsilon V_n, V_n \in T{\cal Q}_n$ at time $n$, which also satisfies the equations of motion. Taking the derivatives of the equations of motion in the direction of $V_n$, we see that  to each gauge mode there corresponds a vector $V_n$ at step $n$, which is a (i) null vector  of the Hessian at $n$, (ii) a  right null vector of the Lagrangian two--form $\Omega_n$,  and (iii) a left null vector of the Lagrangian two--form $\Omega_{n+1}$.




%

Next, let us study the relation between the presence of gauge symmetry and the first and second class nature of the pre-- and post--constraints. To this end, we label the $N^n_R$ right null vectors $(R_n)^i_r$ of $\Omega_n$ and their corresponding post--constraints by $r=1,\ldots,N^n_R$ and the $N^{n}_L$ left null vectors $(L_n)^i_l$ of $\Omega_{n+1}$ and their corresponding pre--constraints by $l=1,\ldots,N^{n}_L$.

\begin{Theorem}\label{thm_conalgeneral}
The set of pre--constraints at fixed $n$ and the set of post--constraints at fixed $n$ each form a first class Poisson sub--algebra
\ba
\{{}^-C^n_l,{}^-C^n_{l'}\}  \underset{{\cal C}^-_n}{=}0\,,\q\q\q\{{}^+C_r^n,{}^+C_{r'}^n\}   \underset{{\cal C}^+_n}{=}0\,,\q\q\forall\, l,l',r,r',
\ea
where the first equations holds on the pre--constraint surface ${\cal C}^-_n$ and the second on the post--constraint surface ${\cal C}^+_n$.

Furthermore, 
\ba
\{{}^-C^n_l,{}^+C_r^n\} \simeq \gamma_l{}^{l'}(L_n)^i_{l'}\,H^n_{ij}\,\rho_r{}^{r'}(R_n)^j_{r'}\,,\q\q\forall\,l,r\,,\nn
\ea
where $\gamma_l{}^{l'}(x_n,p^n)$ and $\rho_r{}^{r'}(x_n,p^n)$ are coefficient functions determined by the gradients of $\cc_n^-$ and $\cc_n^+$, respectively and $\simeq$ denotes a weak equation valid on the constraint hypersurface ${\cal C}^-_n \cap {\cal C}^+_n  $.
\end{Theorem}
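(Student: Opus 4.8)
The plan is to obtain all three statements from two ingredients: the pull--back identities (\ref{legpullback}) relating the canonical two--form $\omega_n$ to the Lagrangian two--forms $\Omega_n$ and $\Omega_{n+1}$, and explicit formulas for the gradients of the pre-- and post--constraints. For the gradients, note that $\cc^+_n=\im(\mathbb{F}^+S_n)$, so a tangent vector to $\cc^+_n$ at the point $\mathbb{F}^+S_n(x_{n-1},x_n)$ has the form $\big(\d x_n^i,\ \f{\p^2S_n}{\p x_n^i\p x_{n-1}^j}\d x_{n-1}^j+\f{\p^2S_n}{\p x_n^i\p x_n^j}\d x_n^j\big)$. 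Requiring $d\,{}^+C^n_r$ to annihilate all such vectors and separating the independent variations $\d x_{n-1}$ and $\d x_n$ gives, on $\cc^+_n$,
\ba
\f{\p\,{}^+C^n_r}{\p p^n_j}\,\f{\p^2S_n}{\p x_n^j\p x_{n-1}^i}=0\,,\q\q\q \f{\p\,{}^+C^n_r}{\p x_n^i}=-\,\f{\p\,{}^+C^n_r}{\p p^n_j}\,\f{\p^2S_n}{\p x_n^i\p x_n^j}\,.\nn
\ea
The first relation says $\p\,{}^+C^n_r/\p p^n_j$ lies in the span of the right null vectors $(R_n)^j_{r'}$ of $\Omega_n$; since the $N^n_R$ post--constraints are irreducible and, by these two relations, $d\,{}^+C^n_r$ is already determined by $\p\,{}^+C^n_r/\p p^n$, the $p$--gradients are independent and hence span the full null space, so one may write $\p\,{}^+C^n_r/\p p^n_j=\rho_r{}^{r'}(R_n)^j_{r'}$ with $\rho$ invertible---this is the coefficient appearing in the theorem. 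The identical computation for $\cc^-_n=\im(\mathbb{F}^-S_{n+1})$ gives, on $\cc^-_n$, that $\p\,{}^-C^n_l/\p p^n_i=\gamma_l{}^{l'}(L_n)^i_{l'}$ and $\p\,{}^-C^n_l/\p x_n^i=\gamma_l{}^{l'}(L_n)^j_{l'}\,\p^2S_{n+1}/\p x_n^i\p x_n^j$, with $(L_n)$ the left null vectors of $\Omega_{n+1}$ and $\gamma$ invertible.

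For the two first class statements I would combine (\ref{legpullback}) with a rank count. By the constant rank theorem $\mathbb{F}^+S_n$ factors as a submersion onto the submanifold $\cc^+_n$ followed by the embedding $\iota^+_n$, and since $\Omega_n=(\mathbb{F}^+S_n)^*\omega_n$ is a pull--back along this submersion, $(\iota^+_n)^*\omega_n$ has the same rank as $\Omega_n$, namely $2Q-2N^n_R$ because of the left and right null structure exemplified by (\ref{app13}). Hence on the codimension--$N^n_R$ manifold $\cc^+_n$ the restricted symplectic form has exactly $N^n_R$ null directions; a submanifold whose codimension equals the number of null directions of the restricted symplectic form is coisotropic, so the functions cutting it out---the post--constraints---close under the Poisson bracket on $\cc^+_n$ \cite{Henneaux:1992ig}. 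Replacing $\Omega_n,\,\cc^+_n,\,N^n_R$ by $\Omega_{n+1}=(\mathbb{F}^-S_{n+1})^*\omega_n,\ \cc^-_n,\ N^n_L$ gives the statement for the pre--constraints.

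The cross bracket is then a short computation on the intersection $\cc^-_n\cap\cc^+_n$, where all four gradient formulas hold simultaneously:
\ba
\{{}^-C^n_l,{}^+C^n_r\}&=&\f{\p\,{}^-C^n_l}{\p x_n^i}\,\f{\p\,{}^+C^n_r}{\p p^n_i}-\f{\p\,{}^-C^n_l}{\p p^n_i}\,\f{\p\,{}^+C^n_r}{\p x_n^i}\nn\\
&\simeq&\gamma_l{}^{l'}\rho_r{}^{r'}\Big[(L_n)^j_{l'}(R_n)^i_{r'}\,\f{\p^2S_{n+1}}{\p x_n^i\p x_n^j}+(L_n)^i_{l'}(R_n)^j_{r'}\,\f{\p^2S_n}{\p x_n^i\p x_n^j}\Big]\,.\nn
\ea
Relabelling $i\leftrightarrow j$ in the first bracketed term and using that $\p^2S_n/\p x_n^i\p x_n^j$ and $\p^2S_{n+1}/\p x_n^i\p x_n^j$ are symmetric in $i\leftrightarrow j$ combines both contributions into $\gamma_l{}^{l'}(L_n)^i_{l'}\,H^n_{ij}\,\rho_r{}^{r'}(R_n)^j_{r'}$, with $H^n$ the Hessian (\ref{hess}), which is the assertion.

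I do not expect a deep difficulty; the delicate points are bookkeeping. One must track which weak identities hold on which surface ($\cc^+_n$, $\cc^-_n$, or their intersection) and use that the cross bracket is only asserted weakly on $\cc^-_n\cap\cc^+_n$; one must also recall that $(R_n)$ and $(L_n)$ live on the configuration manifolds, so that ``evaluating them at a point of $\cc^\pm_n$'' means lifting along the Legendre map---well defined only up to its fibre, although the spanned subspace, and hence the coefficients $\rho,\gamma$, is canonical. As elsewhere in the paper, smoothness of the null vectors and the submanifold property of $\cc^\pm_n$ rely on the standing constant rank assumption around (\ref{app13}).
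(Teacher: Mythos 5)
Your proposal is correct, and its core coincides with the paper's own proof: the paper also obtains the two gradient relations (its eqs.\ (\ref{app16b}), (\ref{app16bb})) by differentiating the defining identities $0={}^\pm C^n(x_n,{}^\pm p^n)|_{{}^\pm p^n=\mp\p S/\p x_n}$ with respect to the boundary data---which is precisely your ``annihilate all tangent vectors to $\im(\mathbb{F}^\pm S)$'' step---then writes $\p\,{}^\pm C/\p p^n$ as combinations $\gamma_l{}^{l'}(L_n)_{l'}$, $\rho_r{}^{r'}(R_n)_{r'}$ of the null vectors and evaluates the cross bracket exactly as you do, with the two symmetric second-derivative blocks recombining into the Hessian on $\cc^-_n\cap\cc^+_n$. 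Where you genuinely deviate is the first-class closure of each set separately: the paper gets $\{{}^-C^n_l,{}^-C^n_{l'}\}\underset{\cc^-_n}{=}0$ (and likewise for the post--constraints) by a two-line direct computation from the same gradient relations, using only the symmetry of $\p^2 S_{n+1}/\p x_n\p x_n$, whereas you invoke the pull-back identities (\ref{legpullback}), a constant-rank/submersion factorization, the rank count $2Q-2N$ for $\Omega_n$, and the standard fact that a codimension-$N$ surface whose induced two-form has an $N$-dimensional kernel is coisotropic and hence cut out by first-class constraints. Your route is valid (it is essentially the content of the footnote preceding theorem \ref{thm_presymp}, run in reverse) and has the merit of making the geometric mechanism transparent, but it is slightly roundabout given that the gradient relations you derive anyway yield the closure immediately; also note that the paper only needs the $\rho$, $\gamma$ to be ``appropriately chosen'' coefficient functions, so your extra claim that they are invertible, while defensible by your irreducibility argument, is not required.
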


\begin{proof}
The $N^{n}_L$ pre--constraints ${}^-{C}_l^n$ and the $N^n_R$ post--constraints ${}^+C_r^{n}$ are defined through the equations
\ba\label{app16a}
0&=&{}^-{C}_l^n(x_n,{}^-p^n)\Big|_{  {}^-p^n=  -\frac{\p S_{n+1}}{\p x_n}(x_n,x_{n+1})   }\,,\nn\\
0&=&{}^+{C}_r^{n}(x_{n},{}^+p^{n})\Big|_{  {}^+p^{n}=  \frac{\p S_{n}}{\p x_{n}}(x_{n-1},x_{n})   }\,,
\ea
for arbitrary $x_{n-1},x_n$ and $x_{n+1}$. Differentiating the first equations with respect to $x_{n+1}$ and $x_n$, respectively, and the second one with respect to $x_{n-1}$ and $x_n$, respectively, yields
\ba\label{friday1}
0& =& \frac{\p {}^-C_l^n}{\p {}^-p_j^n} \frac{\p^{2} S_{n+1}}{\p x_n^j \p x_{n+1}^i}\,,\q\q\q 0 = \frac{\p {}^-C_l^n}{\p x^j_n} \,-\,  \frac{\p {}^-C_l^n}{\p {}^-p_i^n} \frac{\p^{2} S_{n+1}}{\p x_n^j \p x_n^i}\,,\label{app16b}\\
0&=& \frac{\p {}^+C_r^{n}}{\p {}^+p_j^{n}} \frac{\p^{2} S_{n}}{\p x_n^i \p x_{n-1}^j}\,,\q\q\q 0 = \frac{\p {}^+C_r^{n}}{\p x^j_{n}} \,+\,  \frac{\p {}^+C_r^{n}}{\p {}^+p_i^{n}} \frac{\p^{2} S_{n}}{\p x_{n}^j \p x_{n}^i}\,\label{app16bb} .
\ea

Note that the right hand equations also imply non--vanishing gradients $\frac{\p {}^-C_l^n}{\p {}^-p_j^n}$ and $\frac{\p {}^+C_r^{n}}{\p {}^+p_j^{n}}$, because otherwise the constraints would be constants. These are equations between functions on ${\cal Q}_{n} \times {\cal Q}_{n+1}$  and ${\cal Q}_{n-1} \times {\cal Q}_{n}$, respectively.  We used the pre-- and post--Legendre transformations in  (\ref{app16a})  to map these configuration spaces to the phase space ${\cal P}_n$. The following equations, therefore, hold on the image of these maps, i.e.\ on the pre-- and post--constraints surfaces. 

The equations on the left hand side of (\ref{app16b}, \ref{app16bb}) specify null vectors of the Lagrangian two--forms, for which we earlier assumed a basis. Hence, 
\ba\label{app16c}
 \frac{\p {}^-C_l^n}{\p {}^-p_j^n} \, \underset{{\cal C}^-_n}{=}\, \gamma_l{}^{l'} (x_n,p^n)\,  (L_n)_{l'}^j,\q\q\q \frac{\p {}^+C^{n}_r}{\p {}^+p_j^{n}} \,\underset{{\cal C}^+_n}{=}\, \rho_r{}^{r'} (x_{n},p^{n})\,  (R_{n})^j_{r'},\ea
where $\gamma_l{}^{l'}$, $\rho_r{}^{r'}$ are appropriately chosen matrix coefficient functions.  This associates a (non--vanishing) left null vector of $\Omega_{n+1}$ to every pre--constraint and a (non--vanishing) right null vector of $\Omega_n$ to every post constraint.

Consider the pre--constraints at $n$. Using (\ref{app16b}, \ref{app16c}), we directly compute
\ba
\{{}^-C^n_l,{}^-C^n_{l'}\}&=&\f{\p {}^-C^n_l}{\p x^i_n}\f{\p {}^-C^n_{l'}}{\p p^n_i}-\f{\p {}^-C^n_l}{\p p^n_i}\f{\p {}^-C^n_{l'}}{\p x^i_n}\nn\\
&\underset{{\cal C}^-_n}{=} &\gamma_l{}^{l_1}(L_n)^j_{l_1}\frac{\p^2 S_{n+1}}{\p x_n^j \p x_n^i}\gamma_{l'}{}^{l_2}(L_n)^i_{l_2}-\gamma_{l'}{}^{l_2}(L_n)^i_{l_2}\frac{\p^2 S_{n+1}}{\p x_n^i \p x_n^j}\gamma_l{}^{l_1}(L_n)^j_{l_1}\nn\\
&\underset{{\cal C}^-_n}{=} &0\,.\nn
\ea
Likewise, using (\ref{app16bb}, \ref{app16c}), one finds that the Poisson brackets between post--constraints at $n$ vanish on the post--constraint hypersurface $\cc^+_n$. 

Analogously, by (\ref{app16b}--\ref{app16c}), 
\ba
\{{}^-C^n_l,{}^+C^n_r\}&=&\f{\p {}^-C^n_l}{\p x^i_n}\f{\p {}^+C^n_r}{\p p^n_i}-\f{\p {}^-C^n_l}{\p p^n_i}\f{\p {}^+C^n_r}{\p x^i_n}\nn\\
&\underset{{\cal C}^-_n  \cap {\cal C}^+_n}{=} &\gamma_l{}^{l'}(L_n)^j_{l'}\left(\frac{\p^2 S_{n+1}}{\p x_n^j \p x_n^i}+ \frac{\p^2 S_{n}}{\p x_n^j \p x_n^i}\right) \rho_r{}^{r'}(R_n)^j_{r'}\,,\label{LHR}
\ea
where the term in brackets defines the Hessian of the action. This proves the statement.
\end{proof}

In consequence,
\begin{itemize}\parskip -2mm
\item[(i)] a pre--constraint ${}^-C^n_l$ is first class (in particular, its Poisson brackets with all post--constraints  vanish on the constraint hypersurface) if  $ \frac{\p {}^-C_l^n}{\p {}^-p_i^n} H^n_{ij}\simeq 0$, 
\item[(ii)] a post--constraint ${}^+C^n_r$ Poisson is first class  if $H^n_{ij}\frac{\p {}^+C^{n}_r}{\p {}^+p_j^{n+1}}  \simeq 0$. 
\end{itemize}


It is also possible that a pre--constraint is first class despite the Hessian not having null vectors. This happens if the corresponding left null vector $\gamma_l{}^{l'}(L_n)^i_{l'}$ is orthogonal to all right null vectors $(R_n)^i_{r'}$ with respect to the metric defined by the Hessian.


There is an immediate consequence of theorem \ref{thm_conalgeneral} (applying to case (a) of section \ref{sec_conmatch}):
\begin{Corollary}\label{cor_ceqc}
Let ${}^-C^n$ be a pre--constraint and ${}^+C^n$ be a post--constraint. If this pre-- and post--constraint coincide, i.e.\ $C^n:={}^-C^n={}^+C^n$, then $C^n$ is necessarily first class.
\end{Corollary}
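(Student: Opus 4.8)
The plan is to reduce everything to Theorem \ref{thm_conalgeneral}, which already does the heavy lifting: it shows that the pre--constraints at $n$ form a first class set among themselves on $\cc^-_n$, that the post--constraints at $n$ form a first class set among themselves on $\cc^+_n$, and it gives the mixed bracket $\{{}^-C^n_l,{}^+C^n_r\}$ as a Hessian contraction of the associated null vectors. The full set of constraints cutting out the constraint surface $\cc_n=\cc^-_n\cap\cc^+_n$ at step $n$ is the union of the pre-- and post--constraints, so to prove $C^n={}^-C^n={}^+C^n$ is first class I only need its Poisson bracket with every member of each of those two families to vanish weakly on $\cc_n$. Since $C^n$ may be taken as one of the generators of the irreducible pre--constraint set, the first identity of Theorem \ref{thm_conalgeneral} gives $\{C^n,{}^-C^n_l\}\underset{\cc^-_n}{=}0$, hence a fortiori on $\cc_n\subseteq\cc^-_n$; since $C^n$ is equally one of the generators of the post--constraint set, the second identity gives $\{C^n,{}^+C^n_r\}\underset{\cc^+_n}{=}0$, hence on $\cc_n$. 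That already exhausts the whole constraint set (the self--bracket $\{C^n,C^n\}$ vanishing trivially), so $C^n$ is first class.

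For a slightly more explicit and self--contained version — the one I would actually write out — I would instead invoke the differential identities (\ref{app16b}) and (\ref{app16bb}) directly. Because $C^n$ is a pre--constraint, the right--hand equation of (\ref{app16b}) gives $\frac{\p C^n}{\p x_n^j}\underset{\cc^-_n}{=}\frac{\p C^n}{\p {}^-p^n_i}\frac{\p^2 S_{n+1}}{\p x_n^j\p x_n^i}$; because $C^n$ is also a post--constraint, the right--hand equation of (\ref{app16bb}) gives $\frac{\p C^n}{\p x_n^j}\underset{\cc^+_n}{=}-\frac{\p C^n}{\p {}^+p^n_i}\frac{\p^2 S_{n}}{\p x_n^j\p x_n^i}$. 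On $\cc_n$, where momentum matching identifies ${}^-p^n={}^+p^n=p^n$, subtracting these yields $\frac{\p C^n}{\p p^n_i}\,H^n_{ij}\underset{\cc_n}{=}0$ with $H^n_{ij}$ the Hessian (\ref{hess}); by symmetry of $H^n$ the contraction on the other index vanishes as well. But these are precisely the first--class criteria (i) and (ii) stated right after Theorem \ref{thm_conalgeneral}, so feeding $\frac{\p C^n}{\p p^n_i}H^n_{ij}\simeq 0$ back into the mixed--bracket formula of that theorem makes $\{{}^-C^n_l,C^n\}$ and $\{C^n,{}^+C^n_r\}$ vanish weakly; together with the intra--family results this establishes that $C^n$ is first class.

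The only point requiring care — and it is bookkeeping rather than a genuine obstacle — is to keep track of which surface each relation lives on: (\ref{app16b}) is an identity on the image $\cc^-_n$ of the pre--Legendre transform, (\ref{app16bb}) on the image $\cc^+_n$ of the post--Legendre transform, and the Hessian $H^n_{ij}$ mixes $S_n$ and $S_{n+1}$, so the statement $\frac{\p C^n}{\p p^n_i}H^n_{ij}=0$ is to be read on--shell, i.e.\ on $\cc_n$ with momentum matching imposed. Once that is kept straight, the corollary is essentially immediate from Theorem \ref{thm_conalgeneral}, so I expect no serious difficulty.
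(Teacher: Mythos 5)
Your first paragraph is exactly the paper's argument: the corollary is stated there as an immediate consequence of Theorem \ref{thm_conalgeneral}, since a constraint belonging to both the pre-- and the post--constraint set commutes weakly with each family by that theorem, hence with all constraints on $\cc_n=\cc^-_n\cap\cc^+_n$. Your more explicit second route, subtracting the right--hand identities of (\ref{app16b}) and (\ref{app16bb}) to get $\frac{\p C^n}{\p p^n_i}H^n_{ij}\simeq 0$ and invoking criteria (i)--(ii), is likewise correct and simply reproduces the computation the paper carries out in the proof of Theorem \ref{thm_sym} (equation (\ref{nullvechesse})), so both versions are sound and in line with the paper.
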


The following theorem implies that such coinciding pre-- and post--constraints generate gauge symmetries of the  action 
(or Hamilton's principal function). 

\begin{Theorem}\label{thm_sym}
To every constraint $C^n$ which is both a (primary) pre-- and post--constraint at step $n$ there is associated 
\begin{itemize} \parskip -1mm
\item[(i)] a null vector of $H^n_{ij}$, 
\item[(ii)] a right null vector of $\Omega_n$, and 
\item[(iii)] a left null vector of $\Omega_{n+1}$. 
\end{itemize}
Furthermore, $C^n$ generates a flow tangential to $\cc_n={\cal C}^-_n  \cap {\cal C}^+_n$ 
which is a symmetry of the (effective) action.
\end{Theorem}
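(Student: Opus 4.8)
The plan is to extract the three null--vector claims by differentiating the defining equations of $C^n$, exactly as in the proof of Theorem~\ref{thm_conalgeneral}, and then to read off the symmetry from the standard link between Hessian null vectors and flat directions of the action. The hypothesis $C^n:={}^-C^n={}^+C^n$ is used crucially: as a single function on $\cp_n=T^*\cq_n$ in the matched momentum $p^n$, both $V^j:=\partial C^n/\partial p^n_j$ and $\partial C^n/\partial x^j_n$ are unambiguous. Reading $C^n$ as a \emph{pre--}constraint and differentiating its defining relation (first line of (\ref{app16a})) with respect to $x_{n+1}$ gives, as in the left equation of (\ref{app16b}), that $V^j\,\partial^2 S_{n+1}/\partial x^j_n\partial x^i_{n+1}=0$ on $\cc^-_n$, so $V$ is a left null vector of $\Omega_{n+1}$ --- claim (iii). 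Reading the same $C^n$ as a \emph{post--}constraint and differentiating with respect to $x_{n-1}$ gives, as in (\ref{app16bb})--(\ref{app16c}), that $V^j\,\partial^2 S_n/\partial x^i_{n-1}\partial x^j_n=0$ on $\cc^+_n$, so $V$ is a right null vector of $\Omega_n$ --- claim (ii). For (i) I would subtract the right equation of (\ref{app16bb}) from the right equation of (\ref{app16b}); the two identical $\partial C^n/\partial x^j_n$ terms cancel, leaving $V^i\big(\partial^2 S_{n+1}/\partial x^j_n\partial x^i_n+\partial^2 S_n/\partial x^j_n\partial x^i_n\big)=0$ on $\cc_n=\cc^-_n\cap\cc^+_n$, which is $V^iH^n_{ij}\underset{\cc_n}{=}0$ by the definition (\ref{Hessian}) of the Hessian.

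Next I would establish tangency of the generated flow to $\cc_n$. By Corollary~\ref{cor_ceqc} $C^n$ is first class; more concretely, since $C^n$ is a pre--constraint its bracket with every pre--constraint vanishes on $\cc^-_n$ (Theorem~\ref{thm_conalgeneral}), and since it is also a post--constraint its bracket with every post--constraint vanishes on $\cc^+_n$, so both vanish weakly on $\cc_n$. Hence the Hamiltonian vector field $X_{C^n}$ annihilates, on $\cc_n$, every function cutting out $\cc_n$ (the pre-- and post--constraints together), so $X_{C^n}$ is tangent to $\cc_n$.

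For the symmetry statement I would note that the configuration--space projection of $X_{C^n}$ is the vector field with components $\{x^i_n,C^n\}=V^i$. Because $V$ is a null vector of $H^n$ by (i), the infinitesimal shift $x_n\mapsto x_n+\epsilon V$ is a flat direction of the bulk action $S_n(x_{n-1},x_n)+S_{n+1}(x_n,x_{n+1})$ at its $x_n$--extremum; because $V$ is additionally a right null vector of $\Omega_n$ by (ii) and a left null vector of $\Omega_{n+1}$ by (iii), varying the neighbouring equations of motion $\partial(S_{n-1}+S_n)/\partial x_{n-1}=0$ and $\partial(S_{n+1}+S_{n+2})/\partial x_{n+1}=0$ along $V$ also gives zero, so this shift maps solutions of the full discrete equations of motion to solutions --- i.e.\ it is exactly the gauge mode $V_n$ with $\lambda_n=\mu_n$ recalled just above the theorem. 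By the result of Section~\ref{sec_bulk}, Hamilton's principal function obtained by eliminating $x_n$ does not depend on the free parameter $\lambda_n=\mu_n$ labelling this one--parameter family of solutions, so the induced transformation is an exact symmetry of the effective action.

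I expect the main obstacle to be making precise what ``symmetry of the (effective) action'' means: the bare variation $\delta\big(S_n+S_{n+1}\big)=(\text{e.o.m.})\cdot V$ vanishes only on shell, which is vacuous, so the content lies in upgrading it to a flat direction \emph{at the extremum} via the Hessian null vector and in showing, using (ii) and (iii), that this flat direction is compatible with the adjacent moves so that it descends to the boundary data without altering Hamilton's principal function; invoking the Section~\ref{sec_bulk} independence--of--free--parameters statement carefully is the crux. (If one wants the conclusion beyond the linearized level one also uses the standing assumption, made in the specification of irregular systems, that the ranks of the Hessians and Lagrangian two--forms are locally constant.)
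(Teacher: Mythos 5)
Your proposal is correct and follows essentially the same route as the paper's proof: differentiating the defining relations (\ref{app16b}), (\ref{app16bb}), (\ref{app16c}) for the coinciding constraint to obtain the left/right null vectors of $\Omega_{n+1}$ and $\Omega_n$, subtracting the two right-hand equations so the $\p C^n/\p x_n$ terms cancel and the gradient $\p C^n/\p p^n$ annihilates the Hessian, then using first-classness (Corollary \ref{cor_ceqc}, Theorem \ref{thm_conalgeneral}) for tangency and the flat-direction interpretation for the symmetry claim. Your closing discussion of why the flat direction descends to Hamilton's principal function is a slightly more detailed elaboration than the paper gives, but it is consistent with it.
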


\begin{proof}
The equations (\ref{app16c}) applied to the case ${}^-C^n={}^+C^n$ give the associated  right and left null vector to $\Omega_n$ and $\Omega_{n+1}$ respectively. These equations also show that the right and left null vector coincide with each other and is given by $\f{\p C^n}{\p p^n_i}$. 
Subtracting the right equation in (\ref{app16b}) from the right equation in (\ref{app16bb}), both applied to this case, yields
\ba
\f{\p C^n}{\p p^n_i}H^n_{ij}=0\,\label{nullvechesse}
\ea
which shows that this vector also defines a null vector of the Hessian (evaluated on the space of solutions).


Finally, $\{x^i_n,C^n\}=\f{\p C^n}{\p p^n_i}$,  thus the flow is defined by a null vector of the Hessian and the Lagragian two--forms and hence generates transformations in flat directions at the extrema of the action.  Given that $C^n$ is first class with all other constraints, it follows that the flow  is tangential to the constraint surface $\cc_n=\cc^+_n\cap\cc^-_n$.
\end{proof}


\begin{Theorem}\label{thm_gaugemode}
To every constraint $C^n:={}^-C^n={}^+C^n$ which is both a primary pre-- and post--constraint at step $n$ there is associated a gauge mode.
\end{Theorem}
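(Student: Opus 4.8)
The plan is to deduce the statement almost directly from Theorem \ref{thm_sym} together with the characterisation of gauge modes recalled just before Theorem \ref{thm_gaugemode}, the only work being to translate the three null--vector conditions and the symmetry statement of Theorem \ref{thm_sym} into the existence of a genuine free parameter $\lambda_n=\mu_n$ at step $n$. I would begin by introducing the Hamiltonian vector field of $C^n$, i.e.\ $V_n^i:=\{x_n^i,C^n\}=\f{\p C^n}{\p p^n_i}$. Theorem \ref{thm_sym} already tells us that $V_n$ is simultaneously (i) a null vector of the Hessian $H^n_{ij}$ (via (\ref{nullvechesse})), (ii) a right null vector of the Lagrangian two--form $\Omega_n$, and (iii) a left null vector of $\Omega_{n+1}$, and that the flow it generates is tangential to $\cc_n=\cc^+_n\cap\cc^-_n$ and is a symmetry of the (effective) action. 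This is exactly the list of properties ascribed to a gauge mode in the discussion preceding Theorem \ref{thm_gaugemode}; so it remains to argue that $V_n$ carries a parameter that is at once \emph{a priori} and \emph{a posteriori} free and that is never pre-- or postdicted by any data.

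For this I would invoke Definition \ref{dfn_free}. Property (ii) says that the equations of motion governing the move $(n-1)\to n$ (obtained by differentiating $\f{\p(S_{n-1}+S_n)}{\p x_{n-1}}=0$ with respect to $x_n$) leave the component of $x_n$ along $V_n$ undetermined; hence that component is an \emph{a priori} free parameter $\lambda_n$. Dually, property (iii) says the equations of motion governing the time--reversed move from $n+1$ back to $n$ leave the same component undetermined; hence it is also an \emph{a posteriori} free parameter $\mu_n$. Since one and the same vector $V_n\in T\cq_n$ produces both, we may set $\lambda_n=\mu_n$. Finally, property (i) guarantees that the step--$n$ equation of motion $\f{\p(S_n+S_{n+1})}{\p x_n}=0$ is, to first order on solutions, insensitive to the displacement $x_n\mapsto x_n+\epsilon V_n$, so that $(s)_n+\epsilon V_n$ is again a (linearised) solution with all other data held fixed.

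The last and essential point is to upgrade ``not predicted/postdicted by the neighbouring steps'' to ``never pre-- or postdicted by any data''; this is precisely where the hypothesis that $C^n$ is \emph{both} a primary pre-- and a primary post--constraint enters (recall from section \ref{sec_singlang} that generic \emph{a priori}/\emph{a posteriori} free parameters need \emph{not} be gauge). By the analysis of section \ref{sec_conmatch}, any effective action $\tilde S_{in}$ or $\tilde S_{nf}$ built around step $n$ automatically incorporates all constraints propagated to $n$, and applying Theorem \ref{thm_sym} in this effective picture shows that the flow of $C^n$ remains a symmetry of $\tilde S_{in}$ and $\tilde S_{nf}$: deforming any solution by $\epsilon V_n$ at step $n$ leaves the data at every other step untouched while still solving the equations of motion. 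Hence no elementary or effective move, past or future, can ever fix the $V_n$--component of $x_n$, which is exactly the definition of a gauge mode; this completes the argument.

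I expect the main obstacle to be this last paragraph --- the rigorous exclusion of \emph{propagated} (secondary) constraints that could pin down the parameter from distant time steps. The resolution rests on the observation that coincidence of the pre-- and post--constraint at $n$ makes $V_n$ a \emph{common} null direction of $H^n$, $\Omega_n$ and $\Omega_{n+1}$ that survives the passage to effective actions, so that no equation of motion anywhere in the evolution ever constrains it; the remaining steps (computing $V_n$, checking the null--vector identities, matching to Definition \ref{dfn_free}) are routine given Theorems \ref{thm_conalgeneral} and \ref{thm_sym}.
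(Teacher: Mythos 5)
Your proposal is correct and takes essentially the same route as the paper: the paper likewise obtains the simultaneously \emph{a priori} and \emph{a posteriori} free parameter $\lambda_n=\mu_n$ from the coinciding constraint (via Theorem \ref{thm_sym} and the null--vector structure), and then disposes of possibly propagated constraints by passing to effective actions, which contain all propagated constraints while $C^n$ remains a coinciding pre--/post--constraint, hence first class and symmetry generating, so the parameter is never fixed. Your version merely spells out in more detail what the paper compresses into its appeal to Theorems \ref{thm_presymp} and \ref{thm_momup}.
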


\begin{proof}
From what has been said before, evolution from $(n-1)$ to $n$ and from $n$ to $(n+1)$ leads to an  {\it a priori} and {\it a posteriori} free parameter $\lambda_n=\mu_n$, which is not fixed by any of the other pre-- or post--constraints at $n$. We have to ensure, that this does not change if we evolve either further into the future or into the past, since pre-- or post--constraints might propagate to $n$. However, all these evolutions could be described by effective actions, all of which, by the conjunction of theorems \ref{thm_presymp} and \ref{thm_momup}, would include the same pre-- and post--constraints (including the propagated ones). Thus, $\lambda_n=\mu_n$ remains a free parameter throughout evolution because its corresponding constraint always remains first class.\end{proof}

In conclusion, although the constraints in the discrete do {\it not} generate the dynamics, cases (a)--(c) of section \ref{sec_conmatch} lead to some similarities (and dissimilarities) with the continuum situation:
\begin{itemize} \parskip -1mm
\item[(a)] Coinciding pre-- and post--constraints are first class and generate symmetries of Hamilton's principal function.
\item[(b)] Pre--constraints which are independent of the post--constraints and do not fix any $\lambda_n$ (and vice versa with the $\mu_n$) are first class, but do {\it not} generate any symmetries. 
\item[(c)] Pre--constraints which are independent of the post--constraints at $n$, yet which fix {\it a priori} free $\lambda_n$ must be second class together with those post--constraints whose flows they fix. 
\end{itemize}

That is, a symmetry generating constraint is necessarily first class also in the discrete (case (a)). However, in contrast to the continuum, a first class constraint does {\it not} necessarily generate symmetries in the discrete (case (b)). We will better understand this case (b) in the context of propagating degrees of freedom in section \ref{sec_varobs} below. 
Finally, recall that in the continuum Lagrange multipliers can only get fixed if they are associated to second class constraints. Likewise, {\it a priori} free parameters $\lambda_n$ in the discrete can also only get fixed if they are associated to post--constraints which are rendered second class by pre--constraints (case (c)).

\subsection{Varying numbers of propagating degrees of freedom}\label{sec_varobs}

We shall now discuss role (iii) of the constraints, namely the classification of degrees of freedom into gauge modes and gauge invariant observables.

In the continuum the number of physical degrees of freedom is tied to classifying constraints into first and second class. The (standard) discussion of the continuum corresponds to translation invariant systems, where, in particular, the number of physical observables is constant during time evolution. In contrast, in our general set--up, the constraint surface at a given time $n$ depends on the initial and final time step, thus we also have to expect that the notion of the number of propagating degrees of freedom depends on this choice.

\subsubsection{Observables as propagating degrees of freedom}


The principal idea is to define observables by {\it propagation of data}. In the continuum (or translation invariant systems)  
the classification of the constraints at a {\it single} instant of time into first and second class yields a complete characterization of propagating observable and gauge degrees of freedom \cite{Henneaux:1992ig}.  
On the other hand, in order to specify the meaning of propagating degrees of freedom in the discrete, we need {\it two} time steps; for the notion of propagation at the canonical level requires the global Hamiltonian time evolution map $\ch_{n-1}:\cc^-_{n-1}\rightarrow\cc^+_n$. 

As discussed in section \ref{sec_singcan}, in the presence of constraints this map is a priori not well defined, rather, we need to specify {\it a priori} free parameters $\lambda_n$. Similarly, for the  inverse evolution $\cc^+_{n}\rightarrow\cc^-_{n-1}$ {\it a posteriori} free parameters $\mu_{n-1}$ must be specified. Varying these free parameters leads to orbits in the post-- and pre-- constraint hypersurfaces, respectively.
The Hamiltonian time evolution map $\ch_{n-1}$ is well defined and invertible as a map from the space of orbits at $(n-1)$  to the space of orbits at $n$. In particular, these spaces are of equal dimension (see also theorem \ref{thm_presymp}). 

Consequently, these two spaces of orbits can be identified with each other via the Hamiltonian evolution map and specify the canonical data which can be either uniquely pre-- or postdicted. These spaces, therefore, describe the propagating degrees of freedom of the evolution move $(n-1)\rightarrow n$ between the time steps $(n-1)$ and $n$. The number of propagating degrees of freedom will be defined as the dimension\footnote{This is twice the number of propagating configuration degrees of freedom, which might be used as the definition of propagating degrees of freedom in the Lagrangian picture.} of the space of orbits. 

Accordingly, we define {\it pre--observables}  $O^-_{n-1}$ at step $(n-1)$ and {\it post--observables} $O^+_n$ at step $n$ as functions which are well defined on the space of orbits, i.e.\ as $\mu_{n-1},\lambda_n$--independent functions; via $\ch_n$ the {\it pre--observables} at $(n-1)$ uniquely predict the {\it post--observables} at $n$ and, vice versa, the {\it post--observables} at $n$ uniquely postdict the {\it pre--observables} at $(n-1)$.

\begin{Theorem}\label{thm_Ok}
The \emph{pre--observables} $O^-_{n-1}$ Poisson commute (weakly) with the pre--constraints at $(n-1)$ and the \emph{post--observables} $O^+_n$ Poisson commute (weakly) with the post--constraints at $n$, i.e.
\ba
\{O^-_{n-1},{}^-C^{n-1}_l\} \,\underset{{\cal C}^-_{n-1}}{=} \, 0\,,\q\q\q\text{and}\q\q\q\{O^+_n,{}^+C^n_r\} \,\underset{{\cal C}^+_n}{=}  \, 0\,,\q\q\q\forall\,l,r\,.\nn
\ea
\end{Theorem}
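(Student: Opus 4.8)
The statement to prove is that pre-observables at step $(n-1)$ weakly Poisson-commute with pre-constraints at $(n-1)$, and post-observables at step $n$ weakly Poisson-commute with post-constraints at $n$. The natural approach is to show that the pre-constraints ${}^-C^{n-1}_l$ \emph{generate} the flow of the $\mu_{n-1}$-parameters along the pre-constraint surface $\cc^-_{n-1}$; since pre-observables are by definition constant along these orbits (they are $\mu_{n-1}$-independent functions), the Poisson bracket must vanish weakly. The post-observable case is completely analogous with ${}^+C^n_r$ generating the $\lambda_n$-flow on $\cc^+_n$. I would present the argument for pre-observables in detail and then invoke time-reversal symmetry for the post-observable case.

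First I would recall from Theorem \ref{thm_conalgeneral} that the pre-constraints ${}^-C^{n-1}_l$ form a first class set on $\cc^-_{n-1}$, and that each has a nonvanishing momentum-gradient proportional to a left null vector $(L_{n-1})^j_{l'}$ of the Lagrangian two-form $\Omega_n$, i.e.\ $\partial{}^-C^{n-1}_l/\partial{}^-p^{n-1}_j \underset{\cc^-_{n-1}}{=} \gamma_l{}^{l'}(L_{n-1})^j_{l'}$. The Hamiltonian flow of ${}^-C^{n-1}_l$ therefore acts on configuration variables as $\{x^j_{n-1},{}^-C^{n-1}_l\} = \partial{}^-C^{n-1}_l/\partial{}^-p^{n-1}_j$, which on $\cc^-_{n-1}$ is spanned by the left null vectors of $\Omega_n$. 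The key point is then to identify this with the direction of the $\mu_{n-1}$-ambiguity: by Definition \ref{dfn_free} and the discussion in section \ref{sec_singcan}, the \emph{a posteriori} free parameters $\mu_{n-1}$ are precisely the configuration data at $(n-1)$ that cannot be postdicted via (the time reverse of) $\ch_{n-1}$, and these correspond exactly to the left null vectors $L_{n-1}$ — one $\mu_{n-1}$ per pre-constraint. So the first class flow generated by the pre-constraints is tangent to $\cc^-_{n-1}$ and moves exactly along the $\mu_{n-1}$-orbits. Since $O^-_{n-1}$ is by construction a $\mu_{n-1}$-independent function on $\cc^-_{n-1}$, its derivative along these orbits vanishes on the constraint surface, hence $\{O^-_{n-1},{}^-C^{n-1}_l\}\underset{\cc^-_{n-1}}{=}0$.

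For the post-observable half I would note the symmetry of the construction: post-constraints ${}^+C^n_r$ form a first class set on $\cc^+_n$ (Theorem \ref{thm_conalgeneral}), their momentum-gradients are spanned by the right null vectors $(R_n)^j_r$ of $\Omega_n$, and the \emph{a priori} free parameters $\lambda_n$ lie precisely along these right null directions — one $\lambda_n$ per post-constraint. Hence the Hamiltonian flow of the post-constraints moves along the $\lambda_n$-orbits inside $\cc^+_n$, and since $O^+_n$ is $\lambda_n$-independent by definition, $\{O^+_n,{}^+C^n_r\}\underset{\cc^+_n}{=}0$.

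The main obstacle I anticipate is making rigorous the identification between "the flow generated by the pre-constraints" and "motion along the $\mu_{n-1}$-orbits." The orbits are defined operationally (the ambiguity in inverting $\ch_{n-1}$), whereas the constraint flow is defined algebraically (Hamiltonian vector field of ${}^-C^{n-1}_l$); one must check they span the same distribution on $\cc^-_{n-1}$. This follows because both are spanned by the same left null vectors $(L_{n-1})^j$ of $\Omega_n$ — the constraint flow via equation (\ref{app16c}), and the $\mu$-orbits via the failure of injectivity of the pre-Legendre transform and $\ch_{n-1}$ established in section \ref{sec_singcan}. A secondary subtlety is that the pre-constraints act nontrivially only on configuration variables to leading order (their $x$-derivatives, via the right equations in (\ref{app16b}), are $\frac{\partial{}^-C^{n-1}_l}{\partial{}^-p^{n-1}_i}\frac{\partial^2 S_n}{\partial x^j_{n-1}\partial x^i_{n-1}}$, i.e.\ a linear combination of the momentum-gradients contracted with a Hessian block), so one should confirm the full flow — not just its configuration part — stays tangent to $\cc^-_{n-1}$; but this is exactly the first class property already secured by Theorem \ref{thm_conalgeneral}.
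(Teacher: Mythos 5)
Your proposal is correct and follows essentially the same route as the paper's proof: identify the Hamiltonian flow of the constraints (configuration part given by the momentum-gradient, i.e.\ the null vectors via (\ref{app16c}), momentum part fixed by the right-hand identities in (\ref{app16b}), reproducing the induced shift of the momenta) with the orbits of the free parameters $\mu_{n-1}$ resp.\ $\lambda_n$, and then use that observables are by definition constant along these orbits. The only cosmetic difference is that you work out the pre-observable half in detail and appeal to symmetry for the post-observables, whereas the paper does the reverse.
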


\begin{proof}
The post--constraint ${}^+C^n_r$ generates a flow
\ba\label{fridayl1}
\{x_n^j, {}^+C^n_r \} &=& \frac{\partial  {}^+\!C^n_r }{\partial p^n_j} ,\nn\\
 \{p^n_j, {}^+C^n_r \} &=&-\frac{\partial  {}^+\!C^n_r }{\partial x_n^j} \,\,\underset{{\cal C}^+_n}{=}  \,\,\frac{\partial^2 S_n}{\partial x^j_n \partial x^i_n} \frac{\partial  {}^+\!C^n_r }{\partial p^n_i}\, \,=\,\, \frac{\partial {}^+\!p^n_j}{\partial x^i_n}   \frac{\partial  {}^+\!C^n_r }{\partial p^n_i}   .
\ea
In (\ref{app16c}) we showed that the gradient  $\frac{\partial  {}^+\!C^n_r }{\partial p^n_j}=:(R_n)^j_r$ is a right null vector of the Lagrangian two--form. Such a null vector leads to non--uniqueness of $x_n$ in the image of the time evolution map, see  section \ref{sec_singcan}. This is infinitesimally described  by $x_n+\epsilon\lambda^r_n (R_n)_r$ where $\lambda^r_n$ are the a priori free parameters.  Thus, the first equation  in (\ref{fridayl1}) shows that the constraint generates the corresponding flow for the configuration variables. The second equation shows that the post--constraint flow also reproduces the induced non--uniqueness for the post--momenta. Hence, the orbits discussed above are defined by the flow of the post--constraints. This flow is integrable, as the set of post--constraints is first class.

A phase space function which commutes (weakly) with the post--constraints is constant along a given orbit and will not be affected by the non--uniqueness of the time evolution map. A similar argument holds for the pre--observables. \end{proof}

%

The number of propagating degrees of freedom depends on initial and final step and the notion of observables is move dependent. This is illustrated by the following examples.

\begin{Example}
\emph{Let us reconsider examples \ref{examp1} (A) -- (C) of section \ref{sec_correctdyn}.\\
{\bf (A)} The evolution move $0\rightarrow 1$, in which three field variables and three momenta evolve into three field variables and three momenta is regular. No constraints arise and the number of propagating degrees of freedom is six. The following move $1\rightarrow 2$ evolves three phase space pairs into two pairs. There is a pre--constraint at time $1$ such that the number of propagating degrees  of freedom for this move is four. For the effective evolution $0\rightarrow 2$ a pre--constraint at $0$ arises and the number of propagating degrees of freedom coincides with the minimal number of the two moves, which is four. \\
{\bf (B)} In this case two phase space pairs at time $0$ evolve into three pairs at time $1$ which subsequently evolve back into two pairs at time $2$. One finds a post-- and pre--constraint at time $1$ and the number of propagating degrees of freedom for both moves is four. This also agrees with the number of propagating degrees of freedom for the effective evolution move $0\rightarrow 2$.\\
{\bf (C)} Three pairs at $0$ are evolving into two pairs at time $1$ and then back into three pairs at time $2$. We have a pre--constraint at $0$ and a post--constraint at $2$ such that again the number of propagating degrees of freedom for both time steps is four which coincides with the number of propagating degrees of freedom  in the effective evolution.}
\end{Example}

The number $N_{i\rightarrow f}$ of independent propagating degrees of freedom between an initial step $i$ and some final step $f$, is given by the dimension of the post--constraint hypersurface ${}^+{\cal C}_f$ modulo the flow generated by the post--constraints. This agrees with the dimension of the pre--constraint hypersurface ${}^-\!{\cal C}_i$ modulo the flow generated by the pre--constraints. Hence,
\ba
N_{i\rightarrow f}
&=&2Q-2\#(\text{pre--constraints at } i)
\;=\;2Q-2\#(\text{post--constraints at } f)\,\label{propobsno}
\ea
is the number of both the {\it pre--observables} at $i$ and {\it post--observables} at $f$. This number coincides with the rank of the symplectic form restricted to the pre-- and post--constraint surfaces discussed in theorem \ref{thm_presymp}. 
However, $N_{i\rightarrow f}$ is in general {\it not} the dimension of the reduced phase spaces at steps $i$ or $f$ because possible post--constraints at $i$ and possible pre--constraints at $f$ have thus far been ignored.

\subsubsection{The reduced phase space}\label{sec_redps1}

Let us define the reduced phase space at a given step $n$ with $i<n<f$ as the total constraint hypersurface $\cc_n:=\cc^+_n\cap\cc^-_n$ modulo any first class flows. As we shall see shortly, this reduced phase space will describe the degrees of freedom propagating from $i$ {\it through} step $n$ to $f$.  

Let us check how many of the $N_{i\rightarrow n}$ propagating degrees of freedom of the move $i\rightarrow n$ continue to propagate in the move $n\rightarrow f$. We have to impose both post-- and pre--constraints at $n$. There are three different cases that can arise, which we labelled by  (a)--(c) in section \ref{sec_conmatch}:
\begin{itemize} \parskip-1mm
\item[(a)] Part of the (possibly re--organized) post--constraints coincide with part of the (possibly re--organized) pre--constraints. According to theorem \ref{thm_sym}, this part defines constraints that are first class with all other constraints at $n$ and which, furthermore, generate gauge symmetries at $n$. Hence, for every such constraint the number of physical degrees of freedom decreases by two. However, this is automatically accounted for by the number $N_{i\rightarrow n}$ of post--observables at $n$ such that this case does not prevent any of the latter from continuing to propagate during $n\rightarrow f$.
\item[(b)] We discussed in sections \ref{sec_conmatch} and \ref{sec_symrole} also the possibility of a pre--constraint, that is independent of the post--constraints, yet is first class with all post--constraints (and automatically with all pre--constraints). Such a first class pre--constraint is related to an {\it a posteriori} free parameter $\mu_n$ on which the post--constraints do not impose any conditions. Hence, although this $\mu_n$ is not a proper gauge mode and will have propagated during the move $i\rightarrow n$ (as it is {\it not a priori} free), being {\it a posteriori} free, it will {\it not} continue to propagate during $n\rightarrow f$. That is, for any such first class pre--constraint, one canonical pair of the $N_{i\rightarrow n}$ propagating modes of the move $i\rightarrow n$ will cease to propagate for $n\rightarrow f$. 
(The analogous discussion holds for a post--constraint that is first class with all pre--constraints.)
\item[(c)] Pre-- and post constraints which are second class. For a second class post--constraint there is at least one pre--constraint with which this post--constraint has (on the constraint hypersurface) non--vanishing Poisson brackets. As the post--constraints generate the flow of the parameters $\lambda_n$ and the pre--constraints generate the flow of the parameters $\mu_n$, it follows that a corresponding pair of these variables $(\lambda_n,\mu_n)$ gets fixed by imposing the pre--constraint or the post--constraint, respectively. This can happen in two ways:\\
1. The free parameters are determined by the right $\frac{\partial {}^+C_n}{\partial p^n_j}$ or left null vectors $\frac{\partial {}^-C_n}{\partial p^n_j}$ of the Lagrangian two--forms. Thus, despite having ${}^+C_n \neq  {}^-C_n$ the null vectors may coincide and define the same free parameter $\lambda_n=\mu_n$. The latter is fixed to a certain value by the constraints, i.e.\ the {\it a priori} and the {\it a posteriori} orbits are intersected transversally by the constraint hypersurface. However, none of the $N_{i\rightarrow n}$ post--observables at $n$ is prevented from propagating further to $f$, because $\lambda_n=\mu_n$ is both {\it a priori} and {\it a posteriori} free (and thus did not propagate in any direction in the first place). \\
2. $\lambda_n \neq \mu_n$ for the parameters associated to post-- and pre--constraint, respectively. Again, both parameters are fixed by the constraints to certain values and do not decrease further the dimension of the reduced phase space. More precisely, the {\it a posteriori} free parameter $\mu_n$ is only free for the move $n\rightarrow f$, however, is not {\it a priori} free for the move $i\rightarrow n$ and thus propagates during the latter,\footnote{That is to say, $\mu_n$ actually gets predicted by the initial data at step $i$ which under $\ch_i$ propagates to the post--constraint surface $\cc^+_n$ and thus is automatically compatible with the post--constraints at $n$.} but does {\it not} continue to propagate during $n\rightarrow f$. The converse is true for the {\it a priori} free $\lambda_n$. Consequently, an observable pair, corresponding to $\mu_n$, propagates during $i\rightarrow n$ and does {\it not} propagate further. However, during the fixing via the constraints, the data of this pair is transferred to another observable pair, corresponding to $\lambda_n$ (which was not an observable pair for $i\rightarrow n$), that continues to propagate during $n\rightarrow f$. Thus, none of the $N_{i\rightarrow n}$ propagating data is `lost' in this case.

\end{itemize}

Let us  choose a basis of constraints which separates the first and second class constraints. The respective numbers are defined independently of the choice of this basis (e.g., since the number of first class constraints agrees with the number of degenerate directions of the pull back of the symplectic form to the constraint hypersurface).

Combining all of the above, the number of degrees of freedom that propagate from $i$ via $n$ to $f$ is
\ba
N_{i\rightarrow n\rightarrow f}\!\!\!\!&=&\!\!\!\!N_{i\rightarrow n}-2\times\#(\text{pre--constraints of case (b) at $n$})\nn\\
&=&\!\!\!\!2Q-2\times \#(\text{1st class constraints at $n$})-\#(\text{2nd class constraints at $n$})\,\,\label{dimRPS}
\ea
which thus coincides with the dimension of the reduced phase space at $n$. Note that this dimension depends on the choice of initial and final time steps $i$ and $f$.

Theorem \ref{thm_Ok} has an immediate consequence for a special kind\footnote{Notice that observables, in general, do not need to also Poisson commute with the second class constraints.} of observables:
\begin{Corollary}
Observables $O_n:=O^+_n=O^-_n$ which are both \emph{pre--observables} of the move $n\rightarrow f$ and \emph{post--observables} of the move $i\rightarrow n$ at time step $n$, Poisson commute (weakly) with \emph{all} constraints at time step $n$. Thus, the $O_n$ are elements of the space of functions which are well defined on the reduced phase space associated to time step $n$ of the evolution $i\rightarrow n\rightarrow f$.
\end{Corollary}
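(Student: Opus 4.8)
The plan is to deduce this corollary directly from Theorem~\ref{thm_Ok} together with the definition of the reduced phase space $\cc_n=\cc^+_n\cap\cc^-_n$ modulo first class flows. The key point is that an observable $O_n$ which simultaneously qualifies as a \emph{post--observable} $O^+_n$ for the move $i\rightarrow n$ and as a \emph{pre--observable} $O^-_{n}$ for the move $n\rightarrow f$ must Poisson commute (weakly) with \emph{both} the post--constraints at $n$ \emph{and} the pre--constraints at $n$, i.e.\ with the full generating set of the total constraint hypersurface $\cc_n$.

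First I would invoke Theorem~\ref{thm_Ok}: applying it to the move $i\rightarrow n$ with final step $n$ gives $\{O^+_n,{}^+C^n_r\}\underset{\cc^+_n}{=}0$ for all $r$, and applying it to the move $n\rightarrow f$ with initial step $n$ gives $\{O^-_n,{}^-C^n_l\}\underset{\cc^-_n}{=}0$ for all $l$. Under the hypothesis $O_n:=O^+_n=O^-_n$, these two families of weak equations hold simultaneously, so $\{O_n,{}^-C^n_l\}$ and $\{O_n,{}^+C^n_r\}$ both vanish on the respective constraint surfaces, hence a fortiori on the smaller surface $\cc_n=\cc^+_n\cap\cc^-_n$. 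Next I would note that by the classification of constraints at step $n$ (Theorem~\ref{thm_conalgeneral} and the discussion of cases (a)--(c) in section~\ref{sec_conmatch}), any constraint in a chosen basis adapted to the first/second class split is, on $\cc_n$, a linear combination (with phase--space--function coefficients) of the pre--constraints ${}^-C^n_l$ and post--constraints ${}^+C^n_r$ --- indeed $\cc_n$ is cut out precisely by the union of these two sets. Therefore $O_n$ weakly commutes with \emph{every} constraint at $n$, first and second class alike.

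The final step is to conclude that a function weakly commuting with all constraints at $n$ descends to a well--defined function on the reduced phase space at $n$. The reduced phase space is $\cc_n$ modulo the first class flows; since $O_n$ commutes weakly with the first class constraints, its restriction to $\cc_n$ is invariant along these flows and hence is constant on the first class orbits, so it projects to a function on the quotient. Since $O_n$ also weakly commutes with the second class constraints, its restriction to $\cc_n$ is unambiguous there as well. Thus $O_n$ is an element of the algebra of functions on the reduced phase space associated to step $n$ of the evolution $i\rightarrow n\rightarrow f$, which is the claim.

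The only genuinely delicate point --- not so much an obstacle as something requiring care --- is that the weak equalities in Theorem~\ref{thm_Ok} are stated relative to $\cc^-_{n-1}$ (resp.\ $\cc^+_n$), and one must check that restricting further to the smaller intersection $\cc_n$ preserves these equalities and that the combined system of pre-- and post--constraints is exactly a (possibly reducible) generating set for $\cc_n$, so that ``commutes with all ${}^-C^n_l$ and all ${}^+C^n_r$ on $\cc_n$'' really does imply ``commutes with all constraints on $\cc_n$''; this is immediate once one recalls that $\cc_n$ is \emph{defined} as $\cc^+_n\cap\cc^-_n$ and that a basis separating first and second class constraints can be chosen as linear combinations of these two families, as noted just before the corollary. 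The rest is a routine restatement of the standard fact that gauge--invariant functions on a first class constraint surface descend to the symplectic quotient.
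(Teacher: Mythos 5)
Your proof is correct and takes essentially the paper's route: the corollary is presented there as an immediate consequence of Theorem \ref{thm_Ok}, obtained exactly as you do by applying that theorem to the moves $i\rightarrow n$ and $n\rightarrow f$, noting that the total constraint set at $n$ consists precisely of the pre-- and post--constraints (of the effective moves, hence including any propagated constraints) so that $O_n$ weakly commutes with all of them on $\cc_n=\cc^+_n\cap\cc^-_n$, and then descending to the quotient by the first class flows. No gaps.
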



\begin{Example}
\emph{We consider the analogue of the `no boundary' proposal \cite{Hartle:1983ai} in Regge Calculus, as briefly discussed in \cite{Dittrich:2011ke}. Take an evolution move $0\rightarrow1$ from the empty triangulation at $n=0$ to the boundary of a single $D$--simplex, $D\geq3$, at $n=1$. The phase spaces $\cp_0,\cp_1$ corresponding to a single simplex are totally constrained \cite{Dittrich:2011ke} and, hence, the corresponding reduced phase spaces are zero--dimensional.
No information (i.e.\ `lattice gravitons') propagates during $0\rightarrow1$ (a single simplex is flat and does not contain any `gravitons').\footnote{This suggests to reconsider the interpretation of the `graviton propagator' from spin foam models which has been derived from a single 4--simplex (e.g., see \cite{Rovelli:2005yj}). The semiclassical limit of spin foam models is expected to be dominated by Regge geometries. However, as argued, a classical Regge 4--simplex does not contain `gravitons'.} When performing further simplicial evolution moves \cite{Dittrich:2011ke} to a larger spherical hypersurface at step $n$ (see figure \ref{fig_nb}) and solving the intermediate equations of motion, one finds that $\cp_n$ is still totally constrained (this follows from theorems \ref{thm_presymp} and \ref{thm_momup}). Thus, the reduced phase space at $n$ is also zero--dimensional. Indeed, no information can propagate from the empty triangulation at $0$ to any other spherical hypersurface $n$, or, colloquially, ``no information can propagate from `nothing' to `something'." }
\begin{figure}[htbp!]
\psfrag{n}{`Nothing'\,}
\psfrag{0}{ $0$}
\psfrag{1}{$1$}
\psfrag{k}{$n$}
{\includegraphics[scale=.7]{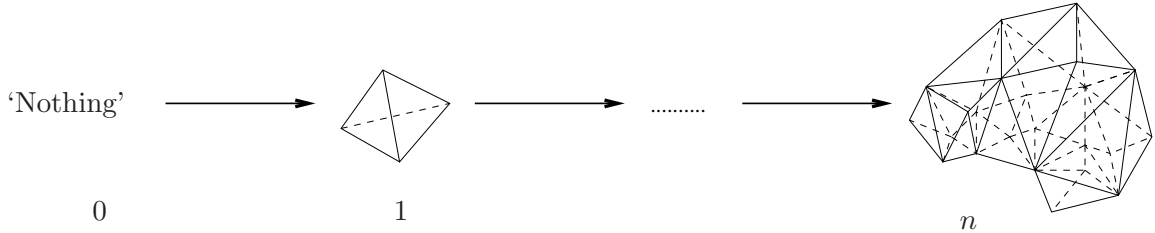}} 
\caption{\small Discrete analogue of the `no boundary' proposal \cite{Hartle:1983ai}.}\label{fig_nb}
\end{figure}

\emph{This does \emph{not} imply that all spherical 4D Regge geometries are devoid of `lattice gravitons' because the notions of observables and the reduced phase space are step dependent. Namely, 
consider, instead, a 4D evolution from a smaller (non--empty) intermediate spherical hypersurface at some $i>0$ to a larger spherical hypersurface at some $f\leq n$ (see figure \ref{fig_smallbig}). In general, the piece of 4D triangulation interpolating between these two hypersurfaces will contain curvature and will not lead to totally constrained phase spaces at $i$ and $f$, such that propagation for $i\rightarrow f$ is possible. }
\begin{SCfigure}
\psfrag{ki}{\small$i$}
\psfrag{kf}{\small$f$}
{\includegraphics[scale=.2]{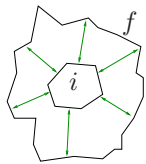}} 
\hspace*{2cm}
\caption{\small Schematic evolution from a smaller to a bigger spherical hypersurface.}\label{fig_smallbig}
\end{SCfigure}
\end{Example}

In conclusion, the pre--constraints at $i$ and the post--constraints at $f$ of a given (effective) move $i\rightarrow f$ determine what propagates from $i$ to $f$, while the conjunction of pre-- and post--constraints at the {\it same} step $n$ determines what propagates 
{\it through} $n$. The classification of the constraints into first and second class only plays a role in determining the reduced phase space at a given step and the presence of gauge symmetries. 

\subsubsection{Propagating degrees of freedom and {\it local} moves}\label{sec_obsevolslice}

So far we have only considered observables as propagating degrees of freedom under {\it global} evolution moves. Let us now also consider the {\it local} evolution of a given time slice. This is described by {\it momentum updating} as given in section \ref{sec_momup}.

By locally evolving a given time slice forward and solving any equations of motion arising on the way, one always considers the propagation of information from some initial step $k_i$ onto the evolving slice (which, in general, depends on $k_i$). That is, evolving a time slice forward is tantamount to studying the sequence of {global} moves $k_i\rightarrow k$, $k_i\rightarrow k+1$, $k_i\rightarrow k+2$, ... It is clear that the number of degrees of freedom that propagate from the fixed $k_i$ onto the evolving slice cannot increase, but at best remain constant and in general will decrease. 
This is the `physical' reason why the rank of the symplectic form restricted to the post--constraint surface of the evolving slice---which determines the number of propagating degrees of freedom---can at best remain constant, but will in general decrease 
(see theorem \ref{thm_momup}). 

On the other hand, it is also possible to {\it create} new degrees of freedom at, say, $k+1$ that only propagate from $k+1$ onwards. These new variables can be viewed as new initial data which only become relevant at step $k+1$. This highlights the fact that in general discrete systems with evolving phase spaces sufficient initial data for the entire evolution is only fully assigned in the course of evolution. 

We briefly discuss the specific role of the four types of local moves of section \ref{sec_momup}:


{\begin{description}
\item[\underline{Type I \& IV:}] These moves do {\it not} reduce the number of observables propagating from $k_i$ onto the evolving slice because they preserve the symplectic form (see theorem \ref{thm_momup}). The type I move additionally introduces new {\it a priori} free variables at step $k+1$. If these new data are not {\it a posteriori} free for some move $k+1\rightarrow k+x$, they will propagate from $k+1$ to $k+x$.
\item[\underline{Type II and III:}] These moves reduce the rank of the symplectic form (see theorem \ref{thm_momup}), and thus the number of propagating observables on the evolving slice, by two for each of their pre--constraints that is independent of the post--constraints at $k$ and is not rendered second class (case (b) of sections \ref{sec_conmatch}, \ref{sec_symrole} and \ref{sec_redps1}). 
\end{description}}

Since some new `initial data' may be {\it a priori} freely chosen at $k+1$, non--uniqueness of solutions arises. The system is non--hyperbolic, because a fixed step $k$ cannot predict/postdict the {\it entire} evolution; information propagating from it may eventually stop propagating without evolution in general breaking down.





\section{Conclusions} \label{conclusions}

The present work develops a general canonical formalism and constraint analysis for arbitrary variational discrete systems. In particular, it applies to discrete mechanical, lattice field theory and discrete gravity models with variational action principle. The formalism is equivalent to the covariant formulation, encompasses both global and local discrete time evolution moves and can handle both constant and evolving phase spaces. We have discussed the (non--) preservation of the constraints and symplectic structure in this framework and analysed the different roles constraints can assume in the discrete. In particular, constraints can `propagate' on solutions such that the number of constraints at a fixed step $n$ depends on the initial and final steps of evolution. We have shown that the sets of pre-- and post--constraints at any step each form a first class Poisson--sub--algebra, and can only become second class when both sets are considered together. Gauge symmetry generating constraints are simultaneously pre-- and post--constraints and thus necessarily first class. However, first class constraints can also arise which, in contrast to the continuum \cite{Dirac,Henneaux:1992ig}, do {\it not} generate symmetries but restrict the (physical) solution space. In analogy to the continuum, second class constraints lead to a fixing of free parameters.

We have defined observables as propagating and gauge modes as non--propagating degrees of freedom. It is established that, in general variational discrete systems where the phase spaces vary from step to step, the meaning of a propagating degree of freedom depends crucially on the initial and final step between which one considers the time evolution. Nevertheless, the definition of pre-- and post--observables is unambiguous and the general canonical formalism describing their propagation fully consistent. More precisely:
\begin{itemize}
\item The notion of an observable {\it at} a single step $n$ is in general not useful.
\item The number $N_{i\rightarrow f}$ of propagating degrees of freedom of the evolution move $i\rightarrow f$ depends on the initial and final time steps $i$ and $f$ and is determined by the pre--constraints at $i$ and the post--constraints at $f$: the $N_{i\rightarrow f}$ pre--observables at $i$ Poisson commute (weakly) with the pre--constraints at $i$ and the $N_{i\rightarrow f}$ post--observables at $f$ Poisson commute (weakly) with the post--constraints at $f$. In general, for different $i$ and $f$ one has different numbers of propagating degrees of freedom. Thus, the number of observables generally varies. 
\item The reduced phase space at a step $n$ in an evolution $i\rightarrow n\rightarrow f$ depends on $i$ and $f$ and coincides with the space of observables that propagate from $i$ {\it through} $n$ further to $f$. In particular, a totally constrained phase space at step $n$ does {\it not} imply the absence of observables, but only the absence of observables propagating {\it through} $n$.
\item A meaningful counting of degrees of freedom is provided which, e.g., entails that the number of propagating degrees of freedom from a fixed initial step onto an evolving slice can only remain constant or decrease. 
\end{itemize}
For translation invariant systems, on the other hand, the determination of the space of observables reduces to the well known procedure from the continuum.

Systems with evolving phase spaces are generally non--hyperbolic, because a fixed step $n$ cannot predict the {\it entire} future; information propagating from it (in both past and future) may eventually stop propagating without evolution in general breaking down. For instance, new `initial data' introduced during a local move $k\rightarrow k+1$ can propagate from $k+1$ onwards. Since some new `initial data' may be {\it a priori} freely chosen at $k+1$, non--uniqueness of solutions arises. 


In this article, we have so far worked with phase space extensions to suitably formulate systems with evolving phase spaces. It should be noted, however, that one may equally well define and work on a total phase space $\cp^{\rm tot}$, corresponding to all variables that ever become relevant in the {\it entire} discrete solution. For instance, in a Regge triangulation, $\cp^{\rm tot}$ is the phase space associated to the entire space--time (and not hypersurface) triangulation of a given covariant solution. The dynamics on $\cp^{\rm tot}$ proceeds then 
as follows: any variables that, at a given step $n$, are not relevant (either because they have not yet appeared in the evolution or have become `bulk') are `switched off' by the corresponding canonical constraints $p^n=0$ and only the relevant variables at a given step are `excited'. Their dynamics then simply proceeds by momentum updating. From this larger picture it is straightforward to return to the phase spaces $\cp_n$ that we used in the present article at step $n$: one only has to perform a phase space reduction on $\cp^{\rm tot}$ using the first and second class constraints that appear at $n$. 

Aspects of the present general formalism allowed us in \cite{Dittrich:2011ke} to construct the canonical formulation of Regge Calculus, based on gluing single simplices at each time step onto the spatial hypersurfaces. This provides the attractive intuitive picture of the Regge dynamics as an evolution of the triangulated hypersurfaces in a discrete `multi--fingered' time through the full Regge solution, akin to the evolution of spatial hypersurfaces in canonical General Relativity. It also constitutes an algorithm to generate solutions, given a suitable set of initial data.

As an open question for future research we would like to mention the issue of broken diffeomorphism symmetry for discrete gravity \cite{Bahr:2009ku,Dittrich:2009fb}. This implies that modes of the metric which are gauge in the continuum will actually be classified as propagating degrees of freedom in the discrete (if one considers a curved background solution). Thus, we cannot claim that all the propagating degrees of freedom (even on a regular lattice) necessarily map to the graviton (i.e.\ spin 2 ) modes of the continuum. The additional modes have been termed pseudo gauge modes \cite{Bahr:2009ku, Dittrich:2009fb}, as one expects their dynamics to differ from the `true graviton modes'. In particular, the correlations associated to these modes could be very short ranged compared to the correlations of the true graviton modes. The tools provided in this work allow us to identify triangulated hypersurfaces between which observables can propagate in principle. In the case of gravity these observables will include both gravitons and pseudo gauge modes. As discussed, there are also hypersurfaces between which propagation cannot occur. It would be of great benefit to develop a systematic criterion to distinguish pseudo gauge modes from modes that are truly propagating also in the continuum limit. This certainly would improve the comparison of lattice results for the graviton propagator \cite{Rovelli:2005yj} to the continuum.  Promisingly, the present formalism is amenable to coarse graining techniques under which pseudo gauge modes can be expected to average out.

Finally, there are many interpretational questions regarding a possible quantization of this framework. Hilbert spaces associated to phase spaces of different dimension have to be mapped to each other under time evolution. In this context, the usual unitarity requirement for time evolution has to be replaced. In particular, the concept of dynamical cylindrical consistency \cite{Dittrich:2012jq} becomes essential, as it connects measures on Hilbert spaces based on coarser and finer discretizations. The `general boundary formalism' \cite{Oeckl:2003vu,Oeckl:2005bv}, on the other hand, offers useful tools for devising a meaningful probability interpretation for such evolving Hilbert spaces. Research in this direction is currently under way \cite{phevolH}.





\appendix

\section{Relation between effective canonical and Lagrangian two--forms}\label{app_bulk}

In this appendix we shall prove equation (\ref{omeff}) of section \ref{sec_bulk}. For simplicity and brevity, we shall assume that boundary data constraints are absent. We will only show this for $\tilde{\Omega}_n$ under the post--Legendre transform as the proof for $\tilde{\Omega}_{n+1}$ under the pre--Legendre transform proceeds identically.

Prior to imposing the equations of motion the Lagrangian two--form (\ref{app6}) on ${\cq}_{n-1}\times{\cq}_n$ reads
\ba\label{lag2form}
\Omega_n=-\f{\p^2S_n}{\p x^e_{n-1}\p x^{e'}_n}dx^e_{n-1}\,\w\,dx^{e'}_n-\f{\p^2S_n}{\p x^e_{n-1}\p x^t_n}dx^e_{n-1}\,\w\,dx^t_n.
\ea
Differentiation of 
\ba
\frac{\partial S_n(x^e_{n-1},x^t_n,x^{e'}_n)}{\partial x^t_n}\left|_{\chi^t_n(x^e_{n-1},x^{e'}_n,\kappa^m)}\right.=0.\nn
\ea
yields
\ba
\frac{\partial^2\, S_n}{\partial x^t_n\partial x^e_n}+\frac{\partial^2\, S_n}{\partial x^t_n\partial x^{t'}_n}\frac{\partial \chi^{t'}_n}{\partial x^e_n}=0,\q\q \q \frac{\partial^2\, S_n}{\partial x^t_n\partial x^e_{n-1}}+\frac{\partial^2\, S_n}{\partial x^t_n\partial x^{t'}_n}\frac{\partial \chi^{t'}_n}{\partial x^e_{n-1}}=0,
\ea
which can be employed in
\ba
dx^t_n=\frac{\partial \chi^t_n}{\partial x^{e'}_n}dx^{e'}_n+\frac{\partial \chi^t_n}{\partial x^e_{n-1}}dx^e_{n-1}.\nn
\ea
By the conjunction of the above and assuming $\frac{\partial^2\, S_n}{\partial x^t_n\partial x^{t'}_n}$ is invertible,\footnote{If this matrix is degenerate, one can factor out the degenerate directions and invert the resulting matrix, which, for simplicity, we shall not worry about here.} one obtains from (\ref{lag2form}) the effective Lagrangian two--form on $\tilde{\cq}_{n-1}\times\tilde{\cq}_n$
\ba\label{effomega}
\tilde{\Omega}_n&=&-\left(\frac{\partial^2\,S_n}{\partial x^{e'}_n\partial x^e_{n-1}}-\frac{\partial^2\,S_n}{\partial x^{e'}_n\partial x^t_n}\left(\frac{\partial^2\,S_n}{\partial x^t_n\partial x^{t'}_n}\right)^{-1}\frac{\partial^2\,S_n}{\partial x^{t'}_n\partial x^e_{n-1}}\right)\,dx^e_{n-1}\,\wedge\, dx^{e'}_n\nn\\
&=&-\frac{\partial^2\,\tilde{S}_n}{\partial x^e_{n-1}\partial x^{e'}_n}dx^e_{n-1}\,\wedge\, dx^{e'}_n.
\ea

On the other hand, the symplectic form on $T^*\cq_n$ reads
\ba\label{wn}
\omega_n=dx^e_n\,\w\,dp^n_e+dx^t_n\,\w\,dp^n_t.
\ea

Using the embedding $\iota_n:(x^e_n,\tilde{p}^n_e)\mapsto(x^e_n,p^n_e=\tilde{p}^n_e,x^t_n=\chi^t_n,p^n_t=0)$ of $T^*\tilde{\cq}_n$ into $T^*\cq_n$,\footnote{Viewing the $x^e_{n-1}$ in this case as `external' parameters.} we can pull back the symplectic form (\ref{wn}) to obtain, 
\ba\label{presymp}
\tilde{\omega}_n:=\iota_n^*\omega_n=dx^e_n\,\w\,d\tilde{p}^n_e\,,
\ea
the symplectic form on the `reduced phase space' $\tilde{\cp}_n:=T^*\tilde{\cq}_n$. 
(Note that additional constraints $C(x^e_n,\tilde{p}^n_{e'})=0$ may occur on $\tilde{\cp}_n$.) 
Direct computation shows that on--shell one, indeed, finds
\ba
\tilde{\Omega}_n=(\tilde{\mathbb{F}}^+\tilde{S}_n)^*\tilde{\omega}_n.
\ea
Analogously, one shows that $\tilde{\Omega}_{n+1}=(\tilde{\mathbb{F}}^-\tilde{S}_n)^*\tilde{\omega}_n$.

\section{Momentum updating as a canonical transformation}\label{sec_momupcan}

In section \ref{sec_momup} we mentioned that it is also possible to formulate {\it momentum updating} alternatively as a canonical transformation. Let us briefly elaborate on this. Since {\it momentum updating} applies to {\it all} canonical pairs on the extended phase space equally, we can extend the corresponding time evolution map beyond the pre-- or post--constraint surfaces defined by $p^k_n=0$ or $p^{k+1}_o=0$ to the full extended phase space without specifying the momenta at $k$ beforehand. That is, the extended time evolution map $\bar{\ch}_{k}$ defined on the full phase space at $k$ reads {\it for all} variable pairs $i=b,e,n,o$
\ba\label{anh23}
x^i_{k+1}=x^i_{k} \, ,\q\q\q p^{k+1}_i= p^k_i+ \frac{\partial S_{k+1}}{\partial x^i_{k+1}} \,.
\ea
The gauge fixing conditions $x^n_k=x^n_{k+1}$ and $x^o_k=x^o_{k+1}$ (which were a free choice in the previous formulation of momentum updating) are thus built in from the outset. The image of (\ref{anh23}) is given by the full extended phase space at time $k+1$. The {\it extended momentum updating} (\ref{anh23}), which involves a generating function of the second kind
\ba
G(x_k,p^{k+1})=\sum_i x^i_kp_i^{k+1}-S_{k+1}\,,\nn
\ea
preserves the canonical two--form (of the extended phase space). Namely, by symmetry and antisymmetry and using $\omega_{k+1}=dx^i_{k+1}\wedge dp^{k+1}_i$, one immediately verifies 
\ba
(\bar{\ch}_{k})^*\omega_{k+1}=dx^i_k\wedge d\left(p^k_i+\frac{\partial S_{k+1}}{\partial x^i_k}\right)=dx^i_k\wedge dp^k_i+\frac{\partial^2S_{k+1}}{\partial\, x^i_k\partial\,x^{i'}_k}dx^i_k\wedge dx^{i'}_k=dx^i_k\wedge dp^k_i=\omega_k.\nn
\ea

Subsequently, one must impose the pre-- and post--constraints $p_n^k=0$ and $p^{k+1}_o=0$, respectively, by hand. Note that, as a consequence of $x^i_k=x^i_{k+1}$, this leads to the post--constraints ${}^+C^{k+1}_n=p^{k+1}_n-\frac{\partial S_{k+1}}{\partial x^n_{k+1}}$ (which are the image of the pre--constraints $p^k_n=0$ under (\ref{anh23})) and the pre--constraints ${}^-C^k_o=p^k_o+\frac{\partial S_{k+1}}{\partial x^o_{k}}$ (which are the pre--image of $p^{k+1}_o=0$) on the extended phase space. However, the latter pre-- and post--constraints need not in general be constraints on the unextended phase spaces due to the general dependence of the action on both $x^n_{k+1}$ and $x^o_k$. In this case, the pre--constraints $p_n^k=0$ and ${}^-C^k_o$ and, likewise, the post--constraints $p^{k+1}_o=0$ and ${}^+C^{k+1}_n$, in fact, are generally second class. Those ${}^-C^k_o$ and ${}^+C^{k+1}_n$ which are only constraints on the extended, but not on the unextended phase space (originating, after all, in the general condition $x^i_k=x^i_{k+1}$) can be viewed as gauge fixing conditions such that not only $x^n_k=x^n_{k+1}$ and $x^o_k=x^o_{k+1}$, but also $x^n_{k+1}$ and $x^o_k$ are no longer free parameters. 

This implies preservation under time evolution of the induced symplectic forms on the combined gauge fixing and constraint hypersurfaces. On these gauge fixing and constraint surfaces, the extended momentum updating is equivalent to the previous version of momentum updating which was only formulated as a pre--symplectic transformation.

\section{Proof of theorem \ref{thm_momup} of section \ref{sec_momup}}\label{app_momup}

\begin{proof}
Consider \underline{\bf type I}. Take the symplectic forms at $k$ and $k+1$ on $\bar{\cp}_k,\bar{\cp}_{k+1}$
\be
\omega_k=dx^b_k \wedge dp_b^k + dx^e_k \wedge dp_e^k +dx^n_k \wedge dp_n^k  \, ,  \,\,\, \omega_{k+1}=dx^b_{k+1} \wedge dp_b^{k+1} + dx^e_{k+1} \wedge dp_e^{k+1} +dx^n_{k+1} \wedge dp_n^{k+1}.\nn
\ee
Notice that $p^k_n=0$ is both a pre-- and post--constraint at $k$. Restrict the symplectic forms to the partial post--constraint surfaces defined only by the $K$ $p^k_n=0$ at $k$ and only the $K$ post--constraints ${}^+C^{k+1}_n$ on the right in (\ref{anh1c}) at $k+1$ and denote them by $\ck^+_k,\ck^+_{k+1}$. {$\ck^+_{k+1}$ is the image of $\ck^+_k$ under $\fh_k$ and vice versa.}
At time $k$ employ the canonical embedding $\jmath_k:\ck^+_k\hookrightarrow\bar{\cp}_k$ 
\ba
\jmath_k:(x^b_k,p_b^k,x^e_k,p_e^k,x^n_k)\mapsto (x^b_k,p_b^k,x^e_k,p_e^k,x^n_k,p_n^k=0).\nn
\ea
The pull--back of the symplectic form to $\ck^+_k$ coincides with the symplectic form of the unextended phase space $\cp_k$,
\ba\label{type1sym1}
(\jmath_k)^*\omega_k&=& dx^b_k \wedge dp_b^k + dx^e_k \wedge dp_e^k  \, .
\ea
For time step $k+1$, consider the embedding $\jmath_{k+1}:\ck^+_{k+1}\hookrightarrow\cp_{k+1}$ given by
\be
\jmath_{k+1}:(x^b_{k+1},p_b^{k+1},x^e_{k+1},p_e^{k+1},x^n_{k+1})\mapsto (x^b_{k+1},p_b^{k+1},x^e_{k+1},p_e^{k+1},x^n_{k+1}, p^n_{k+1}=\frac{\partial S_{k+1}}{\partial x^n_{k+1}}) \, .\nn
\ee
The corresponding pull--back of the symplectic form to $\ck^+_{k+1}$ reads
\ba\label{sy01}
(\jmath_{k+1})^*\omega_{k+1} &=&  dx^b_{k+1} \wedge dp_b^{k+1} + dx^e_{k+1} \wedge dp_e^{k+1} +dx^n_{k+1} \wedge \frac{\partial^2 S_{k+1}}{\partial x^n_{k+1} \partial x^e_{k+1}}dx^e_{k+1}\nn\\
&=&dx^b_{k+1} \wedge dp_b^{k+1} + dx^e_{k+1} \wedge d\left(p_e^{k+1} - \frac{\partial S_{k+1}}{ \partial x^e_{k+1}}\right)  \, .
\ea
Using the coordinate form (\ref{anh1}--\ref{anh1c}) of $\fh_k$ to pull back the pre--symplectic form (\ref{sy01}) at time $k+1$, we find
\ba\label{sy01b}
{\fh}_k^*(\jmath_{k+1})^*\omega_{k+1} &=&dx^b_{k} \wedge dp_b^{k} + dx^e_{k} \wedge dp_e^{k} =(\jmath_k)^*\omega_k\,. \nn
\ea
{Any post--constraint that further reduces the rank of (\ref{sy01}) must be of the form\footnote{Here and in the sequel an equation such as $p^k={}^+p^k$ represents a shorthand notation for equation (\ref{b3}) $p^k={}^+p^k=+\f{\p \bar{S}_k(x_k,x_0)}{\p x_k}$ for the post--momenta, where $\bar{S}_k$ is the (possibly effective) action for the (global) move $0\rightarrow k$ for some (here irrelevant) initial step $0$. The analogous shorthand notation is used for pre--momenta.}
\ba
{}^+C^{k+1}\left(x^e_{k+1},x^b_{k+1},p_e^{k+1} - \frac{\partial S_{k+1}}{ \partial x^e_{k+1}},p^{k+1}_b\right)\Big|_{p^{k+1}={}^+p^{k+1}}=0.\nn
\ea
By (\ref{anh1}--\ref{anh1c}), each of these pulls back one-to-one to a post--constraint at $k$, 
\ba
\fh_k^*\,{}^+C^{k+1}={}^+C^k(x^e_k,x^b_k,p^k_e,p^k_b)\Big|_{p^k={}^+p^k}=0.\nn 
\ea
Hence, $\fh_k$ preserves the total post--constraint surfaces, $\fh_k:\cc^+_k\rightarrow\cc^+_{k+1}$ and $(\iota_k)^*\omega_k=\fh_k^*(\iota_{k+1})^*\omega_{k+1}$, where $\iota_{k}:\cc^+_k\hookrightarrow\bar{\cp}_k$ and $\iota_{k+1}:\cc^+_{k+1}\hookrightarrow\bar{\cp}_{k+1}$ are the embeddings of the post--constraint surfaces at $k$ and $k+1$, respectively, into the extended phase space.}

Next, consider \underline{\bf type II}. The proof proceeds analogously by replacing the label $n$ by $o$, but taking into account the presence of non--trivial pre--constraints ${}^-C^k_o$. Denote by $\ck^-_k$ the partial pre--constraint surface at step $k$ defined only by the $K$ pre--constraints ${}^-C^k_o$ in (\ref{case2c}). {No other pre--constraints are created in the move.} Denote by $\ck^+_{k+1}$ the partial (post--)constraint surface at step $k$ defined only by the $K$ (post--)constraints $p^{k+1}_o=0$ which are the image of the ${}^-C^k_o$ under $\fh_k$. Let the corresponding embeddings be $\jmath_k:\ck^-_k\hookrightarrow\bar{\cp}_k$ and $\jmath_{k+1}:\ck^+_{k+1}\hookrightarrow\bar{\cp}_{k+1}$. In analogy to (\ref{type1sym1}, \ref{sy01}), one obtains
\ba
(\jmath_k)^*\omega_k&=&dx^b_k \wedge dp_b^k + dx^e_k \wedge d\left(p_e^k +\frac{\p S_{k+1}(x^{e'}_k,x^o_k)}{\p x^e_k}\right) \,,\label{type2k}\\
 (\jmath_{k+1})^*\omega_{k+1}&=&dx^b_{k+1} \wedge dp_b^{k+1} + dx^e_{k+1} \wedge dp_e^{k+1}\,,\label{type2k1}
 \ea
and, consequently, by (\ref{case2}--\ref{case2c}), $\fh_k^* (\jmath_{k+1})^*\omega_{k+1}=(\jmath_k)^*\omega_k$. 

{Any additional post--constraint at $k+1$ that further reduces the rank of $(\jmath_{k+1})^*\omega_{k+1}$ must be of the form ${}^+C^{k+1}(x^b_{k+1},x^e_{k+1},p^{k+1}_b,p_e^{k+1})\Big|_{p^{k+1}={}^+p^{k+1}}=0$. Using (\ref{case2}--\ref{case2c}), it is pulled back one-to-one to a post--constraint at $k$
\ba
{}^+C^k\left(x^b_k,x^e_k,p^k_b,p_e^k +\frac{\p S_{k+1}(x^{e'}_k,x^o_k)}{\p x^e_k}\right)\Big|_{p^k={}^+p^k}=\fh^*_k\,{}^+C^{k+1}(x^b_{k+1},x^e_{k+1},p^{k+1}_b,p_e^{k+1})\Big|_{p^{k+1}={}^+p^{k+1}}=0.\nn
\ea
which likewise reduces the rank of $(\jmath_{k})^*\omega_{k}$.} Hence, $\fh_k$ preserves all independent rank reducing constraints. Thus, we must have $\fh_k:\cc^+_k\cap\ck^-_k\rightarrow\cc^+_{k+1}$ and $(\iota_{k})^*\omega_{k}={\fh}_k^*(\iota_{k+1})^*\omega_{k+1}$ with the corresponding embeddings $\iota_k:\cc^+_k\cap\ck^-_k\hookrightarrow\bar{\cp}_k$ and $\iota_{k+1}:\cc^+_{k+1}\hookrightarrow\bar{\cp}_{k+1}$, where $\cc^+_k,\cc^+_{k+1}$ are the total post--constraint surfaces at $k$ and $k+1$, respectively.

{Consider \underline{\bf type III}. Again, $p^k_n=0$ constitute both pre-- and post--constraints at $k$. Denote by $\ck^+_k$ the partial post--constraint surface at $k$ defined by the $K$ $p^k_n=0$ only and by $\jmath_k$ its embedding in $\bar{\cp}_k$. Similarly, denote by $\ck^+_{k+1}$ the partial post--constraint surface at $k+1$ defined by the $K$ $p^{k+1}_o=0$ only and by $\jmath_{k+1}$ its embedding in $\bar{\cp}_{k+1}$. This yields
\ba
(\jmath_k)^*\omega_k&=&dx^b_k\w dp^k_b+dx^e_k\w dp^k_e+dx^o_k\w dp^k_o\,,\nn\\
(\jmath_{k+1})^*\omega_{k+1}&=&dx^b_{k+1}\w dp^{k+1}_b+dx^e_{k+1}\w dp^{k+1}_e+dx^n_{k+1}\w dp^{k+1}_n\,,\nn
\ea
which coincides with the canonical forms of the unextended phase spaces $\cp_k,\cp_{k+1}$. Furthermore, using (\ref{anh21}--\ref{anh21d}), one finds 
\ba
\fh_k^* (\jmath_{k+1})^*\omega_{k+1}&=&dx^b_k\w dp^k_b+dx^e_k\w d\left(p^k_e+\f{\p S_{k+1}}{\p x^e_k}\right)+dx^n_{k+1}\w d\left(\f{\p S_{k+1}}{\p x^n_{k+1}}\right)\,,\label{tiii}\\
&=&dx^b_k\w dp^k_b+dx^e_k\w dp^k_e+dx^o_k\w d\left(-\f{\p S_{k+1}}{\p x^o_k}\right)=(\jmath_k)^*\omega_k.\nn
\ea 
If the rank of $\f{\p^2S_{k+1}}{\p x^o_k\p x^n_{k+1}}$ is $K-\kappa$, there will arise $\kappa$ post--constraints 
\ba
{}^+C^{k+1}(x^e_{k+1},x^n_{k+1},p^{k+1}_n)\Big|_{p^{k+1}_n={}^+p^{k+1}_n=\f{\p S_{k+1}}{\p x^n_{k+1}}}=0\label{tiii2}
\ea
from the second equation in (\ref{anh21c}) which further reduce the rank of $(\jmath_{k+1})^*\omega_{k+1}$ by $2\kappa$. The pullback of (\ref{tiii2}) under (\ref{anh21}--\ref{anh21d}) yields the identities $\fh_k^*{}^+C^{k+1}={}^+C^{k+1}(x^e_k,x^n_{k+1},\f{\p S_{k+1}}{\p x^n_{k+1}})=0$, i.e.\ trivializes the constraints at $k$ and thus does {\it not} reduce the rank of $(\jmath_k)^*\omega_k$. However, in complete analogy the $\kappa$ pre--constraints
\ba
{}^-C^k(x^e_k,x^o_k, p^k_o)\Big|_{p^k_o={}^-p^k_o=-\f{\p S_{k+1}}{\p x^o_k}}=0\nn
\ea
which arise from the second equation in (\ref{anh21d}) and define $\ck^-_k$ reduce the rank of $(\jmath_k)^*\omega_k$ by $2\kappa$ and are promoted to trivial identities at $k+1$.

Any additional post--constraint ${}^+C^{k+1}(x^b_{k+1},x^e_{k+1},x^n_{k+1},p^{k+1}_b,p^{k+1}_e,p^{k+1}_n)\Big|_{p^{k+1}={}^+p^{k+1}}=0$ at $k+1$
that further reduces the rank of $(\jmath_{k+1})^*\omega_{k+1}$ is pulled back, employing (\ref{anh21}--\ref{anh21d}), to a non--trivial post--constraint 
$
{}^+C^k(x^b_{k},x^e_{k},x^n_{k+1},p^{k}_b,p^{k}_e+\f{\p S_{k+1}}{\p x^e_k},\f{\p S_{k+1}}{\p x^n_{k+1}})\Big|_{p^{k}={}^+p^{k}}=0$
at $k$ which, by virtue of (\ref{tiii}), likewise reduces the rank of $(\jmath_k)^*\omega_k$. In complete analogy, one verifies that also the reverse holds true such that $\fh_k$ defines a one-to-one mapping between the additional post--constraints at $k$ and $k+1$. The conjunction of the above implies $\fh_k^*(\iota_{k+1})^*\omega_{k+1}=(\iota_k)^*\omega_k$, where $\iota_{k}:\cc^+_k\cap\ck^-_k\hookrightarrow\bar{\cp}_k$ and $\iota_{k+1}:\cc^+_{k+1}\hookrightarrow\bar{\cp}_{k+1}$ are the corresponding embeddings.}


%

{Finally, the momentum updating map $\fh_k$ of \underline{\bf type IV}, given by (\ref{typeiv1}, \ref{typeiv2}), preserves the symplectic structure because no new pre-- or post--constraints arise in the move and any already existing post--constraints at $k$ are trivially mapped one-to-one to post--constraints at $k+1$.}
\end{proof}

\section*{\small Acknowledgements}
Research at Perimeter Institute is supported by the Government of Canada through Industry Canada and by the Province of Ontario through the Ministry of Research and Innovation.

\providecommand{\href}[2]{#2}\begingroup\raggedright\endgroup


\begin{thebibliography}{10}

\bibitem{Regge:1961px}
T.~Regge, ``{General Relativity without coordinates},''
\href{http://dx.doi.org/10.1007/BF02733251}{{\em Nuovo Cim.} {\bfseries 19}
  (1961) 558--571}.

\bibitem{Dittrich:2011ke}
B.~Dittrich and P.~A. H\"ohn, ``{Canonical simplicial gravity},''
  \href{http://dx.doi.org/10.1088/0264-9381/29/11/115009}{{\em
  Class.Quant.Grav.} {\bfseries 29} (2012) 115009},
\href{http://arxiv.org/abs/1108.1974}{{\ttfamily arXiv:1108.1974 [gr-qc]}}.

\bibitem{Jacobson:1999zk}
T.~Jacobson, ``{Trans Planckian redshifts and the substance of the space-time
  river},'' \href{http://dx.doi.org/10.1143/PTPS.136.1}{{\em
  Prog.Theor.Phys.Suppl.} {\bfseries 136} (1999) 1--17},
\href{http://arxiv.org/abs/hep-th/0001085}{{\ttfamily arXiv:hep-th/0001085
  [hep-th]}}.

\bibitem{Foster:2004yc}
B.~Z. Foster and T.~Jacobson, ``{Quantum field theory on a growing lattice},''
  \href{http://dx.doi.org/10.1088/1126-6708/2004/08/024}{{\em JHEP} {\bfseries
  0408} (2004) 024},
\href{http://arxiv.org/abs/hep-th/0407019}{{\ttfamily arXiv:hep-th/0407019
  [hep-th]}}.

\bibitem{Doldan:1994yg}
R.~Doldan, R.~Gambini, and P.~Mora, ``{Intrinsic time and evolving Hilbert
  spaces in relational dynamical systems and quantum gravity},'' {\em
  Int.J.Theor.Phys.} {\bfseries 35} (1996) 2057--2074,
\href{http://arxiv.org/abs/hep-th/9404169}{{\ttfamily arXiv:hep-th/9404169
  [hep-th]}}.

\bibitem{adapmeshref}
T.~Plewa, T.~Linde, and V.~G. Weirs, eds., {\em Adaptive Mesh Refinement -
  Theory and Applications}, vol.~41 of {\em Lect. Notes in Comp. Science and
  Eng.}
\newblock Springer--Verlag, 2005.


\bibitem{Dittrich:2008pw}
B.~Dittrich, ``{Diffeomorphism symmetry in quantum gravity models},'' {\em
  Adv.Sci.Lett.} {\bfseries 2} (2009) 151,
\href{http://arxiv.org/abs/0810.3594}{{\ttfamily arXiv:0810.3594 [gr-qc]}}.

\bibitem{Bahr:2009ku}
B.~Bahr and B.~Dittrich, ``{(Broken) Gauge symmetries and constraints in Regge
  Calculus},'' \href{http://dx.doi.org/10.1088/0264-9381/26/22/225011}{{\em
  Class.Quant.Grav.} {\bfseries 26} (2009) 225011},
\href{http://arxiv.org/abs/0905.1670}{{\ttfamily arXiv:0905.1670 [gr-qc]}}.

\bibitem{Dittrich:2012qb}
B.~Dittrich, ``{How to construct diffeomorphism symmetry on the lattice},''
  {\em PoS} {\bfseries QGQGS2011} (2011) 012,
\href{http://arxiv.org/abs/1201.3840}{{\ttfamily arXiv:1201.3840 [gr-qc]}}.

\bibitem{Bahr:2011xs}
B.~Bahr, R.~Gambini, and J.~Pullin, ``{Discretisations, constraints and
  diffeomorphisms in quantum gravity},''
  \href{http://dx.doi.org/10.3842/SIGMA.2012.002}{{\em SIGMA} {\bfseries 8}
  (2012) 002},
\href{http://arxiv.org/abs/1111.1879}{{\ttfamily arXiv:1111.1879 [gr-qc]}}.

\bibitem{Unruh:1993js}
W.~Unruh, ``{Time, gravity, and quantum mechanics},'' in {\em {Time's Arrow
  Today: Recent Physical and Philosophical Work on the Direction of Time}},
  S.~Savitt, ed., pp.~23--94.
\newblock Cambridge University Press, 1995.
\newblock
\href{http://arxiv.org/abs/gr-qc/9312027}{{\ttfamily arXiv:gr-qc/9312027
  [gr-qc]}}.
\newblock

\bibitem{Dirac}
P.~A. Dirac, {\em {Lectures on Quantum Mechanics}}.
\newblock Yeshiva University Press, 1964.

\bibitem{Henneaux:1992ig}
M.~Henneaux and C.~Teitelboim, {\em {Quantization of Gauge Systems}}.
\newblock Princeton University Press,
1992.
\newblock

\bibitem{Hartle:1983ai}
J.~Hartle and S.~Hawking, ``{Wave function of the universe},''
\href{http://dx.doi.org/10.1103/PhysRevD.28.2960}{{\em Phys.Rev.} {\bfseries
  D28} (1983) 2960--2975}.

\bibitem{Perez:2001gja}
A.~Perez and C.~Rovelli, ``{Observables in quantum gravity},'' in {\em Quanta
  of Maths}, E.Blanchard, D.Ellwood, M.Khalkhali, M.Marcolli, H.Moscovici, and
  S.Popa, eds., pp.~501--518.
\newblock American Mathematical Society, 2001.
\newblock
\href{http://arxiv.org/abs/gr-qc/0104034}{{\ttfamily arXiv:gr-qc/0104034
  [gr-qc]}}.
\newblock

\bibitem{Perez:2012wv}
A.~Perez, "The Spin-Foam Approach to Quantum Gravity", 
Living Rev. Relativity {\bf 16},Ê (2013),Ê 3. URL: 
\url{http://www.livingreviews.org/lrr-2013-3}

\bibitem{Speziale:2008uw}
S.~Speziale, ``{Background-free propagation in loop quantum gravity},'' {\em
  Adv.Sci.Lett.} {\bfseries 2} (2009) 280--290,
\href{http://arxiv.org/abs/0810.1978}{{\ttfamily arXiv:0810.1978 [gr-qc]}}.

\bibitem{Dittrich:2009fb}
B.~Dittrich and P.~A. H\"ohn, ``{From covariant to canonical formulations of
  discrete gravity},''
  \href{http://dx.doi.org/10.1088/0264-9381/27/15/155001}{{\em
  Class.Quant.Grav.} {\bfseries 27} (2010) 155001},
\href{http://arxiv.org/abs/0912.1817}{{\ttfamily arXiv:0912.1817 [gr-qc]}}.


\bibitem{Hoehn:2011cm}
  P.~A.~H\"ohn, ``{Canonical formalism for simplicial gravity},''
  J.\ Phys.\ Conf.\ Ser.\  {\bf 360} (2012) 012047
  [arXiv:1110.3947 [gr-qc]].

\bibitem{Oeckl:2003vu}
R.~Oeckl, ``{A 'General boundary' formulation for quantum mechanics and quantum
  gravity},'' \href{http://dx.doi.org/10.1016/j.physletb.2003.08.043}{{\em
  Phys.Lett.} {\bfseries B575} (2003) 318--324},
\href{http://arxiv.org/abs/hep-th/0306025}{{\ttfamily arXiv:hep-th/0306025
  [hep-th]}}.

\bibitem{Oeckl:2005bv}
R.~Oeckl, ``{General boundary quantum field theory: Foundations and probability
  interpretation},'' {\em Adv.Theor.Math.Phys.} {\bfseries 12} (2008) 319--352,
\href{http://arxiv.org/abs/hep-th/0509122}{{\ttfamily arXiv:hep-th/0509122
  [hep-th]}}.

\bibitem{marsdenwest}
J.~Marsden and M.~West, ``{Discrete mechanics and variational integrators},''
  in {\em {Acta Numerica}}, vol.~10, pp.~357--514.
\newblock Cambridge University Press, 2001.

\bibitem{marsdenratiu}
J.~Marsden and T.~Ratiu, {\em {Introduction to Mechanics and Symmetry: A Basic
  Exposition of Classical Mechanical Systems}}, vol.~17 of {\em Texts in
  Appliced Mathematics}.
\newblock Springer-Verlag, 1999.

\bibitem{Gambini:2002wn}
R.~Gambini and J.~Pullin, ``{Canonical quantization of General Relativity in
  discrete space-times},''
  \href{http://dx.doi.org/10.1103/PhysRevLett.90.021301}{{\em Phys.Rev.Lett.}
  {\bfseries 90} (2003) 021301},
\href{http://arxiv.org/abs/gr-qc/0206055}{{\ttfamily arXiv:gr-qc/0206055
  [gr-qc]}}.

\bibitem{DiBartolo:2004cg}
C.~Di~Bartolo, R.~Gambini, R.~Porto, and J.~Pullin, ``{Dirac-like approach for
  consistent discretizations of classical constrained theories},''
  \href{http://dx.doi.org/10.1063/1.1823030}{{\em J.Math.Phys.} {\bfseries 46}
  (2005) 012901},
\href{http://arxiv.org/abs/gr-qc/0405131}{{\ttfamily arXiv:gr-qc/0405131
  [gr-qc]}}.

\bibitem{Rovelli:2011mf}
C.~Rovelli, ``{On the structure of a background independent quantum theory:
  Hamilton function, transition amplitudes, classical limit and continuous
  limit},''
\href{http://arxiv.org/abs/1108.0832}{{\ttfamily arXiv:1108.0832 [gr-qc]}}.

\bibitem{smitbook}
J.~Smit, {\em Introduction to Quantum Fields on a Lattice}.
\newblock Cambridge University Press, 2002.

\bibitem{Hartle:1981cf}
J.~Hartle and R.~Sorkin, ``{Boundary terms in the action for the Regge
  Calculus},''
\href{http://dx.doi.org/10.1007/BF00757240}{{\em Gen.Rel.Grav.} {\bfseries 13}
  (1981) 541--549}.

\bibitem{Bahr:2009qc}
B.~Bahr and B.~Dittrich, ``{Improved and perfect actions in discrete
  gravity},'' \href{http://dx.doi.org/10.1103/PhysRevD.80.124030}{{\em
  Phys.Rev.} {\bfseries D80} (2009) 124030},
\href{http://arxiv.org/abs/0907.4323}{{\ttfamily arXiv:0907.4323 [gr-qc]}}.

\bibitem{Bahr:2010cq}
B.~Bahr, B.~Dittrich, and S.~He, ``{Coarse graining free theories with gauge
  symmetries: the linearized case},''
  \href{http://dx.doi.org/10.1088/1367-2630/13/4/045009}{{\em New J.Phys.}
  {\bfseries 13} (2011) 045009},
\href{http://arxiv.org/abs/1011.3667}{{\ttfamily arXiv:1011.3667 [gr-qc]}}.

\bibitem{Dittrich:2012jq}
B.~Dittrich, ``{From the discrete to the continuous: Towards a cylindrically
  consistent dynamics},''
  \href{http://dx.doi.org/10.1088/1367-2630/14/12/123004}{{\em New J.Phys.}
  {\bfseries 14} (2012) 123004},
\href{http://arxiv.org/abs/1205.6127}{{\ttfamily arXiv:1205.6127 [gr-qc]}}.

\bibitem{Sorkin:1975jz}
R.~Sorkin, ``{The Electromagnetic field on a simplicial net},''
\href{http://dx.doi.org/10.1063/1.522483}{{\em J.Math.Phys.} {\bfseries 16}
  (1975) 2432--2440}.

\bibitem{Ambjorn:2012jv}
J.~Ambj{\o}rn, A.~G\"orlich, J.~Jurkiewicz, and R.~Loll, ``{Nonperturbative
  Quantum Gravity},'' {\em Phys.Rept.} {\bfseries 519} (2012) 127--210,
\href{http://arxiv.org/abs/1203.3591}{{\ttfamily arXiv:1203.3591 [hep-th]}}.

\bibitem{Barrett:1994ks}
J.~W. Barrett, M.~Galassi, W.~A. Miller, R.~D. Sorkin, P.~A. Tuckey, {\em
  et~al.}, ``{A Paralellizable implicit evolution scheme for Regge Calculus},''
  \href{http://dx.doi.org/10.1007/BF02435787}{{\em Int.J.Theor.Phys.}
  {\bfseries 36} (1997) 815--840},
\href{http://arxiv.org/abs/gr-qc/9411008}{{\ttfamily arXiv:gr-qc/9411008
  [gr-qc]}}.

\bibitem{pachner1}
U.~Pachner, ``{Konstruktionsmethoden und das kombinatorische
  Hom{\"o}omorphieproblem f{\"u}r Triangulationen kompakter semilinearer
  Mannigfaltigkeiten},'' {\em Abh.Math.Sem.Univ.Hamburg} {\bfseries 57} (1986)
  69.

\bibitem{pachner2}
U.~Pachner, ``{PL homeomorphic manifolds are equivalent by elementary
  shellings},'' {\em Europ.J.Combinatorics} {\bfseries 12} (1991) 129--145.

\bibitem{Gambini:2005vn}
R.~Gambini and J.~Pullin, ``{Consistent discretization and canonical classical
  and quantum Regge Calculus},''
  \href{http://dx.doi.org/10.1142/S0218271806009042}{{\em Int.J.Mod.Phys.}
  {\bfseries D15} (2006) 1699--1706},
\href{http://arxiv.org/abs/gr-qc/0511096}{{\ttfamily arXiv:gr-qc/0511096
  [gr-qc]}}.

\bibitem{Rovelli:2005yj}
C.~Rovelli, ``{Graviton propagator from background-independent quantum
  gravity},'' \href{http://dx.doi.org/10.1103/PhysRevLett.97.151301}{{\em
  Phys.Rev.Lett.} {\bfseries 97} (2006) 151301},
\href{http://arxiv.org/abs/gr-qc/0508124}{{\ttfamily arXiv:gr-qc/0508124
  [gr-qc]}}.

\bibitem{phevolH}
P.~A.~H\"ohn, \emph{to appear}

\end{thebibliography}
\end{document}